\newtheorem{theorem}{Theorem}[section]
\newtheorem{proposition}{Proposition}
\theoremstyle{definition}
\newtheorem{definition}[theorem]{Definition}
\newcommand{\ep}{\varepsilon}
\newcommand{\pd}[2]{\frac{\partial #1}{\partial #2}}
\newcommand{\pdd}[2]{\frac{\partial^2 #1}{\partial #2^2}}
\newcommand{\pddm}[3]{\frac{\partial^2 #1}{\partial #2 \partial #3}}
\newcommand{\alk}[1]{{#1}}
\newcommand{\tr}{\mbox{\rm{tr}}}
\newcommand{\beq}{\begin{equation}\begin{aligned}}
\newcommand{\eeq}{\end{aligned}\end{equation}}
\newcommand{\uv}{\bm{u}}
\newcommand{\us}{\bm{u^*}}
\newcommand{\usp}{\bm{u^*}_x}
\newcommand{\w}{\bm{w}}
\newcommand{\q}{\bm{q}}
\newcommand{\ps}{\bm{p_*}^\pm}
\newcommand{\st}{(\bm{s}_{\bm *}^{\pm})^T}
\newcommand{\D}{\bm{D}}
\newcommand{\J}{\bm{J}}
\newcommand{\f}{\bm{f}}
\newcommand{\Bl}{\bm{B}_\lambda}
\newcommand{\Bz}{\bm{B_0}}
\newcommand \bk{\color{black}}
\newcommand \bit{\begin{itemize}}
\newcommand \eit{\end{itemize}}
\title[Heterogeneity Localizes Turing Patterns] 
      {Spatial Heterogeneity Localizes Turing Patterns in Reaction-Cross-Diffusion Systems}
\author[E. A. Gaffney, A. L. Krause, P. K. Maini, C. Wang]{}
\subjclass{Primary: 35B36, 35K57; Secondary: 92C15.}
 \keywords{Pattern formation, Cross-Diffusion Systems, Chemotaxis, Spatial Heterogeneity, Turing Instabilities}
\thanks{{Authors in alphabetical order.} $^*$ Corresponding author: Andrew.Krause@durham.ac.uk}
\begin{document}
\maketitle

\centerline{\scshape Eamonn A. Gaffney}
\medskip
{\footnotesize
 \centerline{Wolfson Centre for Mathematical Biology, Mathematical Institute,} \centerline{University of Oxford, Andrew Wiles Building, Radcliffe Observatory Quarter,} \centerline{Woodstock Road, Oxford, OX2 6GG, United Kingdom}
} 

\medskip

\centerline{\scshape Andrew L. Krause$^*$}
\medskip
{\footnotesize
 \centerline{  Mathematical Sciences Department, Durham University,}
 \centerline{Upper Mountjoy Campus, Stockton Rd,} 
 \centerline{Durham DH1 3LE, United Kingdom}
}

\medskip

\centerline{\scshape Philip K. Maini}
\medskip
{\footnotesize
 \centerline{Wolfson Centre for Mathematical Biology, Mathematical Institute,} \centerline{University of Oxford, Andrew Wiles Building, Radcliffe Observatory Quarter,} \centerline{Woodstock Road, Oxford, OX2 6GG, United Kingdom}
}

\medskip

\centerline{\scshape Chenyuan Wang}
\medskip
{\footnotesize
 \centerline{Wolfson Centre for Mathematical Biology, Mathematical Institute,} \centerline{University of Oxford, Andrew Wiles Building, Radcliffe Observatory Quarter,} \centerline{Woodstock Road, Oxford, OX2 6GG, United Kingdom}
} 

\medskip


\begin{abstract}
Motivated by bacterial chemotaxis and multi-species ecological interactions in heterogeneous environments, we study a general one-dimensional reaction-cross-diffusion system in the presence of spatial heterogeneity in both transport and reaction terms. Under a suitable asymptotic assumption that the transport is slow over the domain, while gradients in the reaction heterogeneity are not too sharp, we study the stability of a heterogeneous steady state approximated by the system in the absence of transport. Using a WKB ansatz, we find that this steady state can undergo a Turing-type instability in subsets of the domain, leading to the formation of localized patterns. The boundaries of the pattern-forming regions are given asymptotically by `local' Turing conditions corresponding to a spatially homogeneous analysis parameterized by the spatial variable. We developed a general open-source code which is freely available, and show numerical examples of this localized pattern formation in a Schnakenberg cross-diffusion system, a Keller-Segel chemotaxis model, and the Shigesada-Kawasaki-Teramoto model with heterogeneous parameters. We numerically show that the patterns may undergo secondary instabilities leading to spatiotemporal movement of spikes, though these remain approximately within the asymptotically predicted localized regions.  This theory can elegantly differentiate between spatial structure due to background heterogeneity, from spatial patterns emergent from Turing-type instabilities.
\end{abstract}

\section*{Dedication} We would like to dedicate this paper to the memory of Professor Masayasu (Mayan) Mimura, one of the pioneers of mathematical biology. Not only was Mayan an exceptional mathematician, he would also enthusiastically encourage
others and was particularly kind and generous to early career researchers. He always brought a smile to every face.

\section{Introduction}
A major extension to Turing's Chemical Theory of Morphogenesis \cite{turing1952chemical} is the incorporation of transport beyond Fickian diffusion such as models of chemotaxis \cite{keller1971model,horstmann20031970,hillen2009user} and general cross-diffusion \cite{mimura1980spatial, choi2004existence, le2005regularity, ruiz2013mathematical, gambino2016super, ritchie_hyperbolic_2020}. While the basic ideas of linear stability analysis employed by Turing to predict pattern formation in reaction-diffusion systems extends readily to such systems \cite{murray2004mathematical, krause_near_2021}, the inclusion of spatial heterogeneity  leads to difficulties in predicting pattern formation using linear analysis. Yet even Turing himself was well-aware that few biological systems actually form spontaneous spatially-structured patterns from purely homogeneous ones. Rather, most organisms evolve from some complex spatial state to another spatially patterned state during different stages of development. Such `hierarchical' pattern formation arises from interacting systems on different temporal and spatial scales is likely commonplace in embryonic development \cite{maini1995hierarchical, krause2021isolating}, with some authors suggesting that such mechanisms can be modelled via cross-diffusion systems \cite{roussel2004reaction}. In evolutionary and ecological settings, where cross-diffusion-type interactions are especially well-motivated \cite{okubo2001diffusion}, spatial heterogeneity is extremely important, accounting for innate landscape and demographic variation \cite{cantrell2004spatial}. In such settings it is especially important to be able to determine if an observed variation in population density is due to environmental heterogeneity, or due to species interactions. In this paper, we develop a theory to predict and understand pattern formation in spatially-heterogeneous cross-diffusion systems, under the assumption of a scale separation between the heterogeneity and the pattern wavelength. Within this asymptotic regime, our theory can distinguish between spatial structure due to background heterogeneity and spatial patterns due to Turing-type instabilities. 

Spatial heterogeneities in reaction-diffusion systems have been (numerically) shown to change local instability conditions for pattern formation \cite{benson1993diffusion, page2003pattern}, modulate the size and wavelength of patterns \cite{page2005complex}, and localize (or pin) spike patterns in space \cite{iron2001spike, ward2002dynamics, wei2017stable}, which can be studied analytically for certain systems in a semi-strong interaction regime. The presence of even simple spatial heterogeneity in reaction-diffusion systems can induce spatiotemporal effects, such as changing the stability of patterned states and leading to the movement of spike solutions \cite{krause2018heterogeneity, kolokolnikov2018pattern}. In regimes of highly localized spike solutions \cite{iron2001spike, ward2002dynamics, wei2017stable}, or highly localized heterogeneities \cite{doelman2018pulse}, some specific models are analytically tractable. There are some studies on bifurcation structures in spatially-heterogeneous cross-diffusion systems \cite{kuto2009bifurcation, cai2016fish} for specific models.  Nevertheless, there are few tools for understanding spatially-heterogeneous reaction-(cross)-diffusion systems with the same level of generality, and giving the same level of insight, as Turing's original use of linear stability analysis.

The typical Turing-type linear stability analysis proceeds by first linearizing the model about a spatially homogeneous steady state, and then exploiting eigenvalues of the spatial operators (the Laplacian in the context of general cross-diffusion). Using an eigenfunction (of the Laplacian) and exponential in time ansatz, one can reduce the study of instabilities to the computation of eigenvalues of a given matrix. These eigenvalues then indicate linear (in)stability of a particular eigenfunction, and hence one can develop an idea of what spatial perturbations might grow, and hence form spatially patterned states. In the limit of a sufficiently large domain, one can use the approximation of a continuous spatial spectrum to derive conditions involving only the parameters of the model which are necessary for such pattern-forming instabilities, and which become sufficient on large enough domains \cite{murray2004mathematical}. Such conditions are a valuable result from the linear analysis, as they often give biologically-interpretable insight into general classes of models, such as the celebrated short-range activation/long-range inhibition theory of pattern formation in two-species reaction-diffusion systems \cite{meinhardt2012turing}. 

This analysis, however, requires the existence of a spatially homogeneous equilibrium, and the spectral analysis of a scalar operator (the Laplacian). In general spatially-heterogeneous systems, such assumptions are violated. In addition to the difficulties inherent in computing steady states of these spatially-heterogeneous systems, linearization about such a state typically leads to linear systems involving spatial operators coupling multiple species. Theory for such non-scalar Sturm-Liouville problems does exist (see Section 3.1 of \cite{van2019diffusive} for discussion and references), but it is typically not useful for analytical computations as one has to determine the spectrum via numerical approximations anyway.  If a spatially-heterogeneous steady state can be found, one can use Galerkin expansions to compute instability conditions by truncating an infinite-matrix system (arising from coupled modes which are `diagonalized' in the homogeneous setting) \cite{kozak2019pattern, van2021pattern}. However, this analysis is quite involved for a given system, and does not often lead to general biological insights such as those gained by deriving necessary conditions for pattern formation in the homogeneous setting. From a modelling perspective there is also a difficulty in differentiating between spatial structure arising from nonlinear interactions and instability, and spatial structure due to the underlying heterogeneity. As mentioned, determining precisely what causes an observed spatial variation in a population or a developing organism is extremely valuable for mechanistic understanding, and eventually for influencing or controlling the system such as in conservation ecology or tissue engineering. 

Recently, in the context of spatial heterogeneity in the kinetics of one-dimensional reaction-diffusion systems, we proposed a theory of linear stability addressing the above challenges \cite{krause_WKB}. We assumed a separation of scales allowing us to write the model in a limit of asymptotically small transport relative to kinetic interactions. Such a limit in the spatially homogeneous case can be related to assuming a sufficiently large domain so that the spatial spectrum of the Laplacian can be well-approximated as a continuous variable. In this regime, we used a WKB ansatz to study the stability of a heterogeneous steady state, finding that the usual Turing instability conditions could be satisfied locally as if they were simply parameterized by the spatial variable. Such a localization of these conditions was shown to precisely correspond to where full numerical simulations found patterned states deviating from a heterogeneous steady state. Another key insight was that this steady state could be well-approximated by zeros of the reaction kinetics, and hence became analytically amenable in a wide class of systems.

In this paper, we generalize our approach to study two-species reaction-cross-diffusion systems which may include nonlinearity and heterogeneity in all transport and reaction terms. We formulate our model and linearize about a steady state in \cref{Model_Sect}. We compute general conditions for pattern formation in the spatially homogeneous version of the model in \cref{Hom_Cond_Sect}, and state corresponding conditions in the spatially heterogeneous case in \cref{Het_Cond_Sect}. In \cref{WKB_Sect} we solve the heterogeneous linear problem, and discuss the problem of mode selection from this solution. Using this idea, in \cref{prop2sect} we derive the conditions for pattern formation in the heterogeneous case. We give a variety of numerical examples in \cref{Numerical_Sect}, illustrating the utility of our theory even in cases where emergent patterns become spatiotemporally complex. Finally we discuss a range of open mathematical problems and possible applications in \cref{Discuss_Sect}.

\section{Model Formulation \& Conditions for Pattern Formation}\label{Model_Sect}
As we are motivated by finding simple conditions for pattern formation, we will focus on an asymptotic regime where transport is assumed small. In \cite{krause_WKB}, we related this asymptotic scale, given by $\ep$, to diffusion coefficients, length and time scales of relevance in developmental settings, but note that such an asymptotic regime is always needed to deduce algebraic conditions for Turing conditions which are independent of the spectrum of the spatial operator; see Chapter 2 of \cite{murray2004mathematical} for further discussion of this `large domain' approximation. 

We consider the nonlinear heterogeneous cross-diffusion system, 
\beq
    \pd{u}{t} = \ep^2 \pd{}{x}\left (D_{11}(u,v,x)\pd{u}{x}+D_{12}(u,v,x)\pd{v}{x} \right ) + f(u,v,x),
\eeq
\beq
    \pd{v}{t} = \ep^2 \pd{}{x}\left (D_{21}(u,v,x)\pd{u}{x}+D_{22}(u,v,x)\pd{v}{x} \right ) + g(u,v,x),
\eeq
where we have nondimensionalized the model to be in $x \in [0,1]$, and assume sufficient regularity on the six nonlinear functions $D_{ij}, f, g$. We also henceforth assume $0 < \ep \ll 1$, noting that  in terms of a physical diffusion scale, $D$, a lengthscale $L$ and timescale $T$, we have from model non-dimensionalization that 
\begin{equation}\label{eps}
{\ep^2} = \frac {DT}{L^2}, 
\end{equation}
which continuously decreases as the scale of diffusion decreases {for instance.}

More compactly we write this system as
\beq\label{orig_eqn}
    \pd{\uv}{t} = \ep ^2 \pd{}{x}\left ( \D(\uv,x)\pd{\uv}{x}\right) + \f(\uv,x), 
\eeq
with
\beq
    \uv = \begin{pmatrix}u\\ v \end{pmatrix}, \,\, \D(\uv,x) = \begin{pmatrix} D_{11}(u,v,x) & D_{12}(u,v,x) \\
    D_{21}(u,v,x) & D_{22}(u,v,x) \end{pmatrix}, \,\, \f(\uv,x) = \begin{pmatrix}f(u,v,x)\\ g(u,v,x) \end{pmatrix}.
\eeq
We assume for all $\uv \in \mathbb{R}^2$ (or in a suitably chosen subset) and $x \in [0,1]$ that $\D(\uv,x)$ is positive-definite \alk{(that is, all of its eigenvalues remain positive definite)}. This implies that no-flux and Neumann boundary conditions are equivalent, so for concreteness we write
\beq\label{BCs}
    \pd{u}{x}(t,0) = \pd{u}{x}(t,\alk{1}) = \pd{v}{x}(t,0) = \pd{v}{x}(t,\alk{1}) = 0, \,\, \textrm{for all } t\geq 0.
\eeq

We let $\alk{\widehat{\us}} = (u_s, v_s)^T$ be a steady state of our system, so that it satisfies
\beq\label{steady_state}
    \bm{0} = \ep ^2 \pd{}{x}\left ( \D(\alk{\widehat{\us}},x)\pd{\alk{\widehat{\us}}}{x}\right) + \f(\alk{\widehat{\us}},x),
\eeq
as well as the Neumann boundary conditions \eqref{BCs}. In general finding such a steady state analytically is extremely hard, but we can approximate it in the limit of small $\ep$. If $\ep=0$, we have that a heterogeneous steady state would satisfy \begin{equation}\label{approx_steady_state}
\f(\us,x)=\bm{0},
\end{equation}
which coincides exactly with a homogeneous steady state in the case that $\us$ does not depend on $x$. If we assume that such a \alk{function} $\us$ has sufficiently well-behaved derivatives, that is $|\partial \us/\partial x| =  o(1/\ep)$,
and satisfies the boundary conditions \eqref{BCs}, then it is an asymptotic approximation to a solution of \cref{steady_state}\alk{, that is $\us = \widehat{\us} +O(\ep^2)$. In the following we will not distinguish between $\widehat{\us}$ satisfying \cref{steady_state}, and $\us$ satisfying \cref{approx_steady_state}, as we will only expand to order $\ep$.}

We now consider linear stability of a steady state satisfying \eqref{steady_state}, introducing another small parameter $|\alk{\delta}| \ll 1$ which will be asymptotically smaller than $\ep$. We expand our solutions as $\uv = \us(x) + \alk{\delta} \w(t,x)$ and substitute this into \eqref{orig_eqn} to find
\beq
    \alk{\delta}\pd{\w}{t} = \ep ^2 \pd{}{x}\left ( \D(\us+\alk{\delta} \w,x)\left( \pd{\us}{x}+\alk{\delta} \pd{\w}{x}\right)\right) + \f(\us,x)+\alk{\delta}\J(\us,x)\w + O(\alk{\delta}^2),
\eeq
where we have expanded the kinetics $\f$ in a Taylor series about $\alk{\delta}$, and hence $\J$ is the Jacobian matrix evaluated at the approximate steady state. Writing $\usp$ as the derivative of $\us$ with respect to $x$, we expand the transport term to find
\beq
\begin{aligned}
     \pd{}{x}&\left ( \D(\us+\alk{\delta} \w,x) \left (\usp+ \alk{\delta} \pd{\w}{x}\right)\right) \\ =  \pd{}{x}&\left ( \D(\us,x) \usp \right)  + \alk{\delta}\left (\D(\us,x) \pdd{\w}{x}+ \bm{M}\pd{\w}{x}+\bm{N}\w  \right)+ O(\alk{\delta}^2), 
\end{aligned}
\eeq
where 
\beq
    \bm{M} = \begin{pmatrix} \pd{D_{11}}{x}+\pd{D_{11}}{u}u_x^*+\pd{D_{12}}{u}v_x^* & \pd{D_{12}}{x}+\pd{D_{11}}{v}u_x^*+\pd{D_{12}}{v}v_x^* \\
    \pd{D_{21}}{x}+\pd{D_{21}}{u}u_x^*+\pd{D_{22}}{u}v_x^* & \pd{D_{22}}{x}+\pd{D_{21}}{v}u_x^*+\pd{D_{22}}{v}v_x^* \end{pmatrix},
\eeq
and
\beq
    \bm{N} = \begin{pmatrix} \pd{D_{11}}{u}u_{xx}^*+\pd{D_{12}}{u}v_{xx}^*+\pddm{D_{11}}{u}{x}u_x^*+\pddm{D_{12}}{u}{x}v_x^* & \pd{D_{11}}{v}u_{xx}^*+\pd{D_{12}}{v}v_{xx}^*+\pddm{D_{11}}{v}{x}u_x^*+\pddm{D_{12}}{v}{x}v_x^* \\
    \pd{D_{21}}{u}u_{xx}^*+\pd{D_{22}}{u}v_{xx}^*+\pddm{D_{21}}{u}{x}u_x^*+\pddm{D_{22}}{u}{x}v_x^* & \pd{D_{21}}{v}u_{xx}^*+\pd{D_{22}}{v}v_{xx}^*+\pddm{D_{21}}{v}{x}u_x^*+\pddm{D_{22}}{v}{x}v_x^* \end{pmatrix},
\eeq
and we have suppressed the dependence of $D_{ij} = D_{ij}(u^*,v^*,x)$. {Throughout this paper, we will assume that $\D,~ \D^{-1}, ~\J$ have smooth and bounded coefficients, in particular with smooth and bounded derivatives with respect to $x$, and also with respect to $u$ and $v$. Thus such properties will be inherited by $\bm{M}$ and $\bm{N}$.} \alk{We also remark that, as $\us$ does not depend on $\ep$, all of these matrices are independent of $\ep$.}

Discarding the {ord$(1)$} terms\footnote{\alk{We use the notation ord$(\ep^k)$ as shorthand for ``asymptotically of order $k$ in $\ep$."}} by using \eqref{steady_state}, and neglecting terms of $O(\alk{\delta}^2)$, we obtain the linear system,
\beq\label{linear_RDS}
    \pd{\w}{t} = \ep ^2  \left (\D(x) \pdd{\w}{x}+ \bm{M}(x)\pd{\w}{x}+\bm{N}(x)\w  \right) + \J(x)\w,
\eeq
where we have omitted the explicit dependence on $\us$ and its derivatives. From now on we will drop dependence on the steady state $\us$ and view these four matrices $\D, \J, \bm{M}$ and $\bm{N}$ as simply depending on $x$. 
While this system appears much more complicated than that studied in \cite{krause_WKB}, in fact we will show that the impact of  $\bm{M}$ will be confined to the structure of the unstable modes, and that both $\bm{M}$ and $\bm{N}$ will not have an influence on the conditions for instability and hence pattern formation. While the diffusion tensor $\D$ is more general, not being confined to a diagonal matrix, its positive-definiteness will be sufficient to carry out a procedure and derive results analogous to that in \cite{krause_WKB}, which generalize conditions for pattern-forming instabilities to occur. Before discussing these in detail, we first review the spatially homogeneous results on such instabilities from a slightly different perspective.

\subsection{Spatially Homogeneous Instability Criteria}\label{Hom_Cond_Sect}
If we assume that $\D$ and $\f$ do not depend on $x$, then \eqref{orig_eqn} admits a constant steady state $\us$. Hence, the linearized system \eqref{linear_RDS} without spatial dependencies can be written as
\beq\label{linear_RDS_homog}
        \pd{\w}{t} = \ep ^2 \D\pdd{\w}{x}+ \J\w,
\eeq
where the matrices $\D$ and $\J(\us)$ are constant. We consider the usual expansion into eigenmodes of the Laplacian, $\w \propto e^{\lambda t}\cos(k x)$, where $k$ takes discrete values to satisfy the boundary conditions, and hence we require $k/\pi$ to be an integer for Neumann conditions. We then have that $\lambda$ is an eigenvalue of $\J - (\ep k)^2\D$.

The usual way to derive instability conditions is to use the polynomial dispersion relation given by,
\beq\label{hom_disp_polyn}
    \lambda^2 - \tr(\J - (\ep k)^2\D)\lambda + \det(\J - (\ep k)^2\D) = 0,
\eeq
and then require $\Re(\lambda)<0$ for $k=0$, and $\Re(\lambda)>0$ for some $k>0$. The first of these entails that
\beq\label{homog_instab_cond_1}
    \tr(\J) < 0, \quad \det(\J)>0,
\eeq
for stability of $\us$ in the absence of diffusion. We also have that, since $\D$ is positive definite it has $\tr(\D)>0$ and hence by linearity of the trace we have $\tr(\J - (\ep k)^2\D) = \tr(\J) - (\ep k)^2\tr(\D) < \tr(\J) <0$. For an instability for $k>0$ we then must have a positive growth rate which entails,  for $\lambda$ real, 
\beq \label{lam1} 
    2\lambda = \tr(\J - (\ep k)^2\D) + \sqrt{[\tr(\J - (\ep k)^2\D)]^2-4\det(\J - (\ep k)^2\D))}>0.
\eeq
If $\lambda$ is not real, then the above implies that $\Re(\lambda)<0$, and hence we only consider real growth rates, unless explicitly stated otherwise. The usual approach is to then maximize this growth rate as a function of $k$, and require $k>0$. Instead, we will consider conditions in terms of {permitted} values of $k$ to begin with, as this will generalize to the heterogeneous case as in \cite{krause_WKB}. 

For the marginal stability curve given by $\lambda=0$, we  see that the only remaining term of \eqref{hom_disp_polyn} implies that $(\ep k)^2$ is an eigenvalue of $\Bz = \D^{-1}\J$. Solving for the eigenvalues of this matrix,   we then have that
\beq
    2(\ep k)^2 = \tr(\Bz) \pm \sqrt{[\tr(\Bz)]^2-4\det(\Bz)}>0,
\eeq
are the two places where \alk{the graph of $\Re($}$\lambda(k^2)\alk{)}$ given by \eqref{hom_disp_polyn} crosses the  $k^2$-axis. For there to be a positive and real range of  $k^2$, we need 
\beq \label{ce1} 
    \tr(\Bz) + \sqrt{[\tr(\Bz)]^2-4\det(\Bz)}>0. 
    \eeq
   We first consider $\Re (\lambda)$ as this is always  defined, real  and continuous as $k^2$ varies.  In particular for $k^2$ sufficiently large
   we have that the expression for $\lambda$ in \cref{lam1} gives $\Re (\lambda) <0 .$ 
Thus if, in addition to \cref{ce1}, we have 
    \beq
    \tr(\Bz) - \sqrt{[\tr(\Bz)]^2-4\det(\Bz)} < 0, 
    \eeq
   then  \alk{the graph of }$\Re (\lambda(k^2))$ crosses the $k^2\geq 0$ axis once only, 
   with   $\Re (\lambda) <0$ for large $k^2$ and  $\Re (\lambda)$ defined, real and continuous: this is sufficient to imply   $\Re (\lambda)>0$ as $k^2\rightarrow 0$.
 In turn, this  
    contradicts the requirement of stability   at $k=0$. 
    Hence,   for an instability range of $k^2$, where $\Re (\lambda ) >0$,   which does not contradict the stability requirement at $k=0$, we require 
     \beq
    \tr(\Bz) - \sqrt{[\tr(\Bz)]^2-4\det(\Bz)} > 0. 
    \eeq
     Noting that $\det(\J)>0 \implies \det(\Bz)>0$ as $\D^{-1}$ is positive-definite, this then demands  the conditions
    \beq\label{homog_instab_cond_2}
    \tr(\Bz) > 0 \textrm{ and } [\tr(\Bz)]^2-4\det(\Bz)>0.
\eeq
Conditions \eqref{homog_instab_cond_1} and \eqref{homog_instab_cond_2} are precisely the usual necessary conditions for Turing instability, which become sufficient in the limit of $\ep \to 0$. We summarize these as:

\begin{proposition}\label{hom_prop}
Let $0 < \ep \ll 1$ and  assume $\J$ and $\D$ are constant matrices for all $x \in [0,1]$. If we assume stability to homogeneous perturbations, i.e.~the inequalities \eqref{homog_instab_cond_1} are satisfied
then, subject to a wave selection constraint, there exists a non-homogeneous perturbation $\bm w$ satisfying \cref{linear_RDS} and homogeneous Neumann conditions at $x\in\{0,1\}$  that  grows exponentially in time in the interval $x \in [0,1]$ if the inequalities \eqref{homog_instab_cond_2} are satisfied.
\end{proposition}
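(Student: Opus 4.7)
The plan is to use the classical separation-of-variables ansatz to reduce the PDE stability problem to a parametric family of $2\times 2$ linear algebra problems indexed by a spatial wavenumber $k$, and then to read off the pattern-forming window of $k$ from the algebraic conditions on $\D$ and $\J$. Concretely, I would seek $\w(t,x)=e^{\lambda t}\w_0\cos(kx)$ with $\w_0\in\mathbb{R}^2$ constant; this satisfies the Neumann boundary conditions \eqref{BCs} exactly when $k=n\pi$ for some non-negative integer $n$. Substituting into \eqref{linear_RDS_homog} forces $\lambda$ to be an eigenvalue of $\J-(\ep k)^2\D$, and hence a root of the characteristic polynomial \eqref{hom_disp_polyn}.

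Next, I would argue that \eqref{homog_instab_cond_1} ensures $\Re(\lambda)<0$ at $k=0$ (both roots of a real quadratic with negative trace and positive determinant lie in the open left half-plane) and that, because $\D$ is positive-definite, $\tr(\J-(\ep k)^2\D)<\tr(\J)<0$ for every real $k$. The latter eliminates the possibility of a Hopf-type instability: if $\lambda$ is non-real then the two roots of \eqref{hom_disp_polyn} are complex conjugates whose sum equals this negative trace, so $\Re(\lambda)<0$. Any unstable mode must therefore correspond to a real eigenvalue crossing through zero, so setting $\lambda=0$ in \eqref{hom_disp_polyn} gives $\det(\J-(\ep k)^2\D)=0$, i.e.~$(\ep k)^2$ is an eigenvalue of $\Bz=\D^{-1}\J$.

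I would then show that under \eqref{homog_instab_cond_2}, together with $\det(\Bz)>0$ (inherited from $\det(\J)>0$ and positive-definiteness of $\D^{-1}$), the matrix $\Bz$ has two distinct positive real eigenvalues $0<\kappa_-<\kappa_+$. For $(\ep k)^2\in(\kappa_-,\kappa_+)$ the determinant in \eqref{hom_disp_polyn} is negative, so the quadratic in $\lambda$ has one positive and one negative real root, yielding exponential growth of the corresponding Fourier mode. The remaining ingredient is wave selection: one needs a Neumann wavenumber $k=n\pi$ to fall in the interval $(\sqrt{\kappa_-}/\ep,\sqrt{\kappa_+}/\ep)$, whose width is $(\sqrt{\kappa_+}-\sqrt{\kappa_-})/\ep$ and hence exceeds the lattice spacing $\pi$ once $\ep$ is sufficiently small, guaranteeing the existence of an admissible $n$.

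The main obstacle is not the linear algebra, which is elementary, but the handling of the wave selection constraint: the admissible wavenumbers form a discrete lattice, so there is no a priori guarantee that it intersects the continuous unstable window of $(\ep k)^2$, which is why the statement hedges with ``subject to a wave selection constraint.'' I would formalize this by noting that for any fixed $\J,\D$ satisfying \eqref{homog_instab_cond_1}--\eqref{homog_instab_cond_2}, there exists $\ep_0>0$ such that for every $0<\ep<\ep_0$ an admissible $n\in\mathbb{N}$ exists; this makes precise the asymptotic sense in which the algebraic inequalities \eqref{homog_instab_cond_2} become sufficient as $\ep\to 0$, consistent with the large-domain interpretation of the limit discussed earlier in \cref{Model_Sect}.
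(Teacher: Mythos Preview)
Your proposal is correct and follows essentially the same route as the paper's derivation in \cref{Hom_Cond_Sect}: the same separable ansatz, the same reduction to the eigenvalue problem for $\J-(\ep k)^2\D$, the same elimination of complex $\lambda$ via the negative trace, and the same identification of the marginal-stability crossings with the eigenvalues of $\Bz=\D^{-1}\J$. The only minor differences are presentational: where the paper uses a continuity argument on $\Re(\lambda(k^2))$ to locate the unstable window, you argue directly that $\det(\J-(\ep k)^2\D)<0$ between the two positive eigenvalues of $\Bz$, forcing a real positive root; and you quantify the wave-selection constraint by comparing the width $(\sqrt{\kappa_+}-\sqrt{\kappa_-})/\ep$ of the unstable band to the lattice spacing $\pi$, whereas the paper simply cites Murray \cite{murray2004mathematical}.
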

These are precisely the usual Turing instability conditions for the case of cross-diffusion (that is, when  positive definite $\D$ can be a full matrix). See \cite{ritchie_hyperbolic_2020} for general examples of such homogeneous cross-diffusion systems, with identical instability criteria given there by equations (3.15)-(3.16).  Furthermore note that the wave selection constraint requires that the domain (or diffusion scale) is of a suitable size to fit a half-integer number of modes onto the domain, as required to satisfy the Neumann boundary conditions. This can always be achieved by continuously reducing the diffusion scale, $D$ (or equivalently $\ep$), for example,  and is detailed for example in Murray's text \cite{murray2004mathematical}.

\subsection{Spatially Heterogeneous Instability Criteria}\label{Het_Cond_Sect}

In the heterogeneous case, we require stability to  homogeneous perturbations across the whole domain to prevent such a mode destabilizing the system. {In particular,} we have the following  analogous heterogeneous result:
\begin{theorem}\label{het_prop}
Let $0 < \ep \ll 1$ and  assume that  $[\tr(\Bz(x))]^2-4\det(\Bz(x))$ has only simple zeros for all $x \in [0,1]$, where $\Bz(x) = \D^{-1}(x)\J(x)$.  We assume   stability to local homogeneous perturbations, i.e.
\beq \label{hcds} 
    \tr(\J(x)) < 0, \quad \det(\J(x))>0, \quad \textrm{for all } x \in [0,1]. 
\eeq
Then, subject to a wave selection constraint, {which can always be satisfied for a sufficiently small diffusion scale,} there 
 exists a non-homogeneous, bounded and non-trivial perturbation solution $\bm w$ satisfying \cref{linear_RDS} and homogeneous Neumann conditions at $x\in\{0,1\}$ that grows exponentially in time only within the interval $x \in \mathcal{T}_0$ if
\beq\label{het_instab_cond}
    \tr(\Bz(x)) > 0, \quad [\tr(\Bz(x))]^2-4\det(\Bz(x))>0, \quad \textrm{for all } x \in \mathcal{T}_0,
\eeq
where $\mathcal{T}_0$ is the largest subset of $[0,1]$ for which the conditions \eqref{het_instab_cond} hold.
\end{theorem}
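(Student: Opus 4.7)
The plan is to apply a WKB ansatz to \cref{linear_RDS} and exploit the smallness of $\ep$ to reduce the heterogeneous stability problem to a one-parameter family of spatially homogeneous Turing problems, with $x$ serving as the parameter. I would seek solutions of the form
\beq
\w(t,x) = e^{\lambda t}\,\q(x;\ep)\,\exp\!\big(iS(x)/\ep\big),\qquad \q = \q_0(x) + \ep\,\q_1(x) + O(\ep^2),
\eeq
where $\lambda$ is a constant growth rate and $S$ a real phase. The factor $\exp(iS/\ep)$ lifts each $\partial_x$ acting on $\w$ by an extra $\ep^{-1}$, so the contributions from $\ep^2\bm{M}(x)\partial_x\w$ and $\ep^2\bm{N}(x)\w$, together with derivatives of the $\q_j$, appear only at ord$(\ep)$ and higher. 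Collecting the ord$(1)$ terms yields
\beq
\big(\J(x) - k(x)^2\D(x) - \lambda\bm{I}\big)\q_0(x) = \bm{0}, \qquad k(x):=S'(x),
\eeq
so that nontriviality of $\q_0$ forces
\beq
\det\!\big(\J(x) - k(x)^2\D(x) - \lambda\bm{I}\big) = 0.
\eeq
This is exactly the homogeneous dispersion relation \eqref{hom_disp_polyn} with $\J,\D$ evaluated at $x$ and $k$ interpreted as a local wavenumber. Hence, at each fixed $x$, whether a real $\lambda>0$ is attainable for some real $k^2(x)\ge 0$ reduces to the pointwise version of Proposition \ref{hom_prop}: the local stability hypothesis \eqref{hcds} together with the local Turing conditions \eqref{het_instab_cond}.

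Because $\lambda$ is a global constant, the next step is to invert the local dispersion relation, regarding $k^2(x)$ as a function of $x$ at fixed $\lambda>0$. For sufficiently small $\lambda>0$, the simple-zero hypothesis on $[\tr(\Bz(x))]^2-4\det(\Bz(x))$ together with continuity in $x$ imply that $k^2(x)$ is smooth and positive on a subinterval $\mathcal{T}_\lambda\subset[0,1]$ which expands to $\mathcal{T}_0$ as $\lambda\to 0^+$, and that $k^2(x)<0$ outside $\mathcal{T}_\lambda$. In the latter region $iS/\ep$ is real, so the WKB mode decays exponentially on the ord$(\ep)$ spatial scale, producing the required localization to a neighbourhood of $\mathcal{T}_0$. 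At interior endpoints of $\mathcal{T}_\lambda$, where $k^2(x)\to 0$, the outer WKB expansion becomes non-uniform and must be matched to an inner Airy-type solution; the simple-zero hypothesis guarantees these are regular turning points to which the standard matched-asymptotic machinery applies, while at endpoints of $\mathcal{T}_\lambda$ coinciding with $x=0$ or $x=1$ the Neumann conditions \eqref{BCs} are imposed directly on the oscillatory outer solution.

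The principal obstacle is the wave selection constraint. Matching the inner solutions across the two endpoints of $\mathcal{T}_\lambda$ yields a Bohr--Sommerfeld-type quantization condition of the form
\beq
\int_{\mathcal{T}_\lambda} k(x;\lambda)\,dx = \ep\pi(n+c),
\eeq
where $n\in\mathbb{N}$ and the constant $c$ encodes the phase shift at each endpoint (depending on whether it is an interior turning point or a Neumann boundary). Since the integral on the left is ord$(1)$ while the right is ord$(\ep)$, for any sufficiently small $\ep$ there exist an integer $n$ and a $\lambda>0$ jointly satisfying this relation; this is the precise sense in which the wave selection constraint can be met by reducing the diffusion scale. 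Piecing together the outer WKB and inner Airy components then yields a bounded, non-trivial perturbation $\w$ supported asymptotically on $\mathcal{T}_0$ and growing in time as $e^{\lambda t}$, which is the content of \cref{het_prop}.
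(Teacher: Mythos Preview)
Your overall WKB strategy and the leading-order eikonal equation are correct, but you have misidentified the nature of the interior turning points, and this invalidates the Airy matching and the exponential-decay argument you propose. Writing $k^2(x)=\mu_\lambda^\pm(x)$ for the eigenvalues of $\Bl(x)=\D^{-1}(x)(\J(x)-\lambda\bm I)$, one has
\[
\mu_\lambda^\pm(x)=\tfrac12\Big(\tr(\Bl)\pm\sqrt{[\tr(\Bl)]^2-4\det(\Bl)}\Big).
\]
Under the hypotheses \eqref{hcds} and $\lambda\ge 0$, one checks $\det(\Bl)>0$, so the interior boundary of $\mathcal T_\lambda$ is where the discriminant $[\tr(\Bl)]^2-4\det(\Bl)$ vanishes, and at such a point $\mu_\lambda^\pm=\tfrac12\tr(\Bl)>0$. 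Thus $k^2(x)$ does \emph{not} tend to zero at the edges of $\mathcal T_\lambda$; rather, the two eigenvalue branches coalesce and pass into the complex plane. Consequently there is no simple zero of $k^2$, no Airy inner problem, and outside $\mathcal T_\lambda$ the local wavenumber is genuinely complex rather than purely imaginary, so the clean exponential-decay picture does not hold as stated.

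What actually goes singular at these interior edges is the \emph{amplitude}, not the phase: the next-order transport equation for $Q_0^\pm$ carries the factor $\bm{s}_*^T\bm{p}_*$ in its denominator, and this vanishes precisely when the eigenvalues of $\Bl$ coalesce (the matrix $-\mu_\lambda^\pm\bm I+\Bl$ becomes nilpotent). The paper's proof hinges on quantifying this blow-up (it scales like $|x-X_*|^{-1/4}$ for a simple zero of the discriminant) and showing it is regularized by imposing an internal homogeneous Dirichlet condition at $X_*$, so that the WKB mode matches to the trivial solution outside $\mathcal T_\lambda$. Your Bohr--Sommerfeld quantization is then essentially the wave selection constraint \eqref{hetselect}, but with the correct edge conditions (Dirichlet at interior coalescence points, Neumann at $x\in\{0,1\}$) rather than Airy phase shifts. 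So the missing ingredient is the amplitude analysis near eigenvalue-coalescence points; without it the boundedness of $\w$ and the localization mechanism are not established.
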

While this statement is essentially identical to \alk{Instability Criterion} 2.2 in \cite{krause_WKB}, we remark that here $\D$ is a full matrix and it may  depend on $x$ explicitly or via $\us(x)$.  We then have that \cref{het_prop} provides  a local variant  for the inhomogeneous mode instability condition  of \cref{hom_prop}, albeit  restricted in a sense to the set $x \in \mathcal{T}_0$. 

\section{Asymptotic Solutions of the Linearized System}\label{WKB_Sect}

 We proceed to develop  leading $\ep$-order WKB solutions of \cref{linear_RDS} that will be used to deduce  \cref{het_prop}. In \cref{wkb_sec} we first determine the leading order general WKB solution in the regime where $\ep \ll 1$ and consider how to satisfy the homogeneous Neumann boundary conditions in \cref{admissible_growth_sec}, which leads to the wave selection constraint.
We then use the structure of this solution to determine the instability criteria and ultimately a deduction of \cref{het_prop} in \cref{prop2sect}.

\subsection{WKB Asymptotics}\label{wkb_sec}
We will now find an approximate solution to \cref{linear_RDS} for small $\ep$. As this system is linear, we can consider a solution which is separable in space and time of the form $\w = e^{\lambda t} \q(x)$. Writing $\q'$ as the derivative of $\q$ with respect to $x$, we find the problem for $\q$:
\beq\label{q_eqn}
        \bm{0} = \ep ^2  \left (\D(x) \q''+ \bm{M}(x)\q'+\bm{N}(x)\q  \right) + (\J(x) - \lambda \bm{I})\q.
\eeq
We now expand $\q$ using a WKB ansatz \cite{griffiths2018introduction, bender2013advanced} in the limit of small $\ep$ as  
\beq
    \q = \exp\left(\frac{\mathrm{i}\phi(x)}\ep\right)\bm{p}(x), \quad \bm{p}(x) = \bm{p_0}(x) + \ep \bm{p_1}(x) + O(\ep^2).
\eeq
Dropping the $x$ dependence for notational simplicity, we compute derivatives as
\beq
    \q' = \exp\left(\frac{\mathrm{i}\phi}\ep\right)\left(\bm{p}'+\frac{\mathrm{i}\phi'}{\ep}\bm{p}\right) = \exp \left [\frac{\mathrm{i}\phi}{\ep} \right ]\frac{\mathrm{i}\phi'}{\ep}\bm{p_0}+O(1),
\eeq
and
\beq
\begin{aligned}
    \q'' =& \exp\left(\frac{\mathrm{i}\phi}\ep\right)\left(
-\frac{\phi'^2}{\ep^2} \bm{p} + 
\frac 1 \ep\left(2\mathrm{i}\phi'\bm{p}'+ \mathrm{i}\phi'' \bm{p}\right) 
+ \bm{p}''
\right)\\
=& \exp\left(\frac{\mathrm{i}\phi}\ep\right)\left( 
-\frac{\phi'^2}{\ep^2}\bm{p_0}  + 
\frac{1}{\ep}\left({ -\phi'^2 \bm{p_1}} +2\mathrm{i}\phi'\bm{p_0}' + \mathrm{i}\phi''\bm{p_0} \right) 
\right)+O(1).
\end{aligned}
\eeq
We then have that the $O(1)$ approximation of \cref{q_eqn} is
\beq\label{ord1}
    \bm{0} = -\phi'^2\D\bm{p_0}+(\J - \lambda \bm{I})\bm{p_0} = \D[-\phi'^2\bm{I}+\Bl]\bm{p_0},
\eeq
where $$ \Bl := \D^{-1}(\J - \lambda \bm{I})$$ and $\bm{I}$ is the identity matrix. From this we see that $\phi'^2$ is an eigenvalue of $\Bl$,  of which there are two  (or one degenerate eigenvalue of algebraic multiplicity two), and we use $\mu^\pm_\lambda=\phi_{\pm}^{\prime 2}$ to label the  eigenvalues, which may conceivably be equal. Hence  we can solve \cref{ord1} by setting $\bm{p_0}(x) = Q^\pm_0(x)\bm{p_*}^\pm$ where $Q^\pm_0$ is a scalar function and $\bm{p_*}^\pm$ is the unit eigenvector associated with the eigenvalue $\mu^\pm_\lambda$. Hence we can compute $\phi$ as the solution of
\beq\label{phi_eqn}
\phi'^2_\pm =\mu_\lambda^\pm(x) \implies \phi_\pm = C^\pm_\phi + \int_{a_\pm}^x \sqrt{\mu_\lambda^\pm(\bar{x})}d\bar{x},
\eeq
where $\mu_\lambda^\pm(x)$ denote the two  eigenvalues of $\Bl$, with  $C^\pm_\phi$ and $a_\pm$   constants that are to be determined.   In order to compute $Q^\pm_0$, however, we must go to the next order in $\ep$.

The order $O(\ep)$ equation is given by
\beq
\begin{aligned}
    \bm{0} =&  \D({ -\phi'^2_\pm\bm{p_1}}+2\mathrm{i}\phi'_\pm\bm{p_0}' + \mathrm{i}\phi''_\pm\bm{p_0})+\mathrm{i}\phi'_\pm\bm{M}\bm{p_0}+(\J - \lambda \bm{I})\bm{p_1}\\
    =& \D[(-\phi'^2_\pm\bm{I}+\Bl)\bm{p_1} + 2\mathrm{i}\phi'_\pm\bm{p_0}' + \mathrm{i}\phi''_\pm\bm{p_0}+\mathrm{i}\phi'_\pm\D^{-1}\bm{M}\bm{p_0}],
\end{aligned}
\eeq
which implies that
\beq\label{ord2}
-[-\phi^{\prime 2}_\pm\bm{I}+\Bl]\bm{p_1}  
= \mathrm{i}[2\phi '_\pm Q^{\pm\prime}_0
+\phi''_\pm 
Q^\pm_0+\phi'_\pm 
Q^\pm_0\D^{-1}\bm{M}]
\ps
+2\mathrm{i}\phi'_\pm Q^\pm_0(\ps)'.
\eeq
The matrix premultiplying $\bm{p_1}$ has a zero {eigenvalue} by \cref{ord1}, and hence by the Fredholm Alternative Theorem we can compute a solvability condition to find $Q^\pm_0$ and $\ps$. Let $\st$ be the left eigenvector with zero eigenvalue and unit magnitude of $[(-\phi'^2_\pm\bm{I}+\Bl)]$. We multiply \cref{ord2} on the left by $\st$ to find the scalar equation
\beq
(2\phi'_\pm Q_0^{\pm\prime}+\phi''_\pm Q^\pm_0)\st\ps + \phi'_\pm 
 \alk{Q_0^{\pm}}\st\D^{-1}\bm{M}
\ps
+{ 2\phi'_\pm Q^\pm_0} \st(\ps)'  { =0.} \, \, \, \, \, \, 
\eeq
Solving this for $Q^\pm_0$ we find
\beq
\frac{Q_0^{\pm\prime}}{Q^\pm_0} = -\frac{\phi''_\pm }{2\phi'_\pm} - \frac{\st\D^{-1}\bm{M}\ps+2\st(\ps)'}{2\st\ps},
\eeq
which implies
\beq \label{q0eqn}
Q^\pm_0(x) = 
\frac{C^\pm_Q}{\sqrt{\phi'_\pm}}
\exp\left(-\int_{b_\pm}^x\frac{\st\D^{-1}\bm{M}\ps+2\st(\ps)'}
{2\st
\ps}
d\bar{x}\right),
\eeq where $C^\pm_Q$ are constants (possibly complex) and $b_\pm$ are  real constants. Without loss of generality, we set $b_\pm=a_\pm$ below by redefining $C^\pm_Q$. 

Finally, by {recasting} the constants $C_\phi$ and $C_Q$ appropriately, we can write our leading-order solution for $\w$ in the trigonometric form
\beq\label{w_sol}
\begin{aligned}
\w_\pm =& \frac{e^{\lambda t}}{(\mu_\lambda^\pm(x))^{\frac{1}{4}}}\exp\left(-\int_{a_\pm}^x\frac{\bm{s}_{\bm *}^\pm(\bar{x})^T\D^{-1}(\bar{x})\bm{M}(\bar{x})\ps(\bar{x})+2\bm{s}_{\bm *}^\pm(\bar{x})^T\ps(\bar{x})'}{2\bm{s}_{\bm *}^\pm(\bar{x})^T\ps(\bar{x})}d\bar{x}\right)\\
\times &\left[ C_0^\pm\cos\left(\frac{1}{\ep}\int_{a_\pm}^x \sqrt{\mu_\lambda^\pm(\bar{x})}d\bar{x} \right)+S_0^\pm\sin\left(\frac{1}{\ep}\int_{a_\pm}^x \sqrt{\mu_\lambda^\pm(\bar{x})}d\bar{x} \right) \right]\ps(x), 
\end{aligned}
\eeq
{where $C_0^\pm,~S_0^\pm$ are real constants. The constants $a_\pm$ are not independent degrees of freedom as shifts in these constants can be accommodated by changes in  $C_0^\pm,~S_0^\pm$ but it will be convenient to keep the above form for $\w_\pm$.} Furthermore, for  fixed $\lambda$, the solution $\w_\pm$ given by \cref{w_sol} gives two modes which satisfy \cref{linear_RDS} at leading order in $\ep$, corresponding to the two eigenvalues of $\Bl$ given by $\mu_\lambda^+$ and $\mu_\lambda^-$. We remark that this solution is nearly identical to the leading-order WKB solution in \cite{krause_WKB}, {except for} two notable differences. Firstly, there is an additional term involving the matrix $\bm{M}$ here, though this will not influence the conditions we find to ensure we have an instability, i.e.~a mode with $\Re(\lambda)>0$, given in \cref{het_prop}. {Secondly, the matrix $\D$ is no longer diagonal, though it is still positive-definite.} One technical point concerns the impact of  $\bm{M}$ and {especially $\D$ on the structure of $\w_\pm$ near singularities,  as will be discussed below and which is ultimately different} from the analysis presented in \cite{krause_WKB}.

\subsection{Admissible Growth Rates}\label{admissible_growth_sec}


{The} special case of $\D = \mathrm{diag}(1,d), \bm M=\bm N=\bm 0$ has previously been considered in
\cite{krause_WKB}, with a proof {of the analogue of Theorem \ref{het_prop} that is generalised to cross-diffusion below, together with a highlighting of  where the differences are. As in \cite{krause_WKB}, our} starting point is \cref{w_sol} and when such WKB solutions are associated with an instability that drives the system away from its steady state. That is, we look for WKB modes given by \cref{w_sol} which approximate a solution of \cref{linear_RDS} with $\Re(\lambda)>0$.

Our first consideration, directly analogous to \cite{krause_WKB}, is whether the  WKB solution is defined for all  $x \in[0,1]$. If so, leading order Neumann boundary conditions at $\{0,1\}$  entail 
\begin{equation}
  \label{eq:FundConstr00001}
    \int_0^1 \sqrt{\mu_{\lambda}^\pm(\bar{x})}\mathrm{d}\bar{x} = \frac 1 2  n\pi\ep, 
\end{equation} 
where, without loss of generality, $n$ is a positive integer and $a_\pm=S_0^\pm=0$, yielding a WKB cosine solution. Further  {given a suitable choice of a} sufficiently small $\ep$, the constraint given by \cref{eq:FundConstr00001} can be ensured simply by imposing 
\begin{equation}
  \label{eq:FundConstr01}
    \int_0^1 \sqrt{\mu_{\lambda}^\pm(\bar{x})}\mathrm{d}\bar{x}\in \mathbb{R}^+, 
\end{equation} 
where $\mathbb{R}^+$ denotes the positive reals and thus the square root in the integrand is the positive square root. The distinction between the two cases, in particular where the more general case \cref{eq:FundConstr01} holds, but \cref{eq:FundConstr00001} does not, is equivalent to the wave selection constraint in the standard Turing instability, which is a well-understood constraint that is  additional to the canonical Turing instability conditions of Eqs.~\eqref{homog_instab_cond_1},~ \eqref{homog_instab_cond_2} \cite{murray2004mathematical}.

However,  a non-trivial WKB solution \eqref{w_sol} need not be defined everywhere on the domain $x \in[0,1]$. More generally,  the steady state may be destabilized by  WKB solutions that are only non-zero on one or more intervals of the form $(a,b)$ with $0 \leq a < b \leq 1$. As will be investigated below, this will occur due to terms contributing to $\bm{w_\pm}$ in \cref{w_sol} becoming unbounded on approaching an interior point, say $x=a$. We therefore introduce `internal' boundary conditions at such points which allow us to match a non-zero WKB mode on an interval $(a,b)$, which we match to a zero solution outside of this interval. Thus, as demonstrated below, an internal homogeneous Dirichlet boundary condition $\bm{w_\pm}(a)=\bm{0}$ with $a_\pm=a$  in \cref{w_sol} is then necessary and sufficient for a well-defined solution local to $x=a$.  We may accommodate both cases -- homogeneous Neumann or internal Dirichlet  boundary conditions  -- by requiring
\begin{equation}
  \label{eq:FundConstr0001}
    \int_{a}^{b} \sqrt{\mu_{\lambda}^\pm(\bar{x})}\mathrm{d}\bar{x}\in \mathbb{R}^+
\end{equation} 
for a positive square root and $0 \leq a  < b  \leq 1$, so that the homogeneous boundary condition at $x=a, ~b$, whether it be Neumann or Dirichlet  , can be satisfied. In particular, this requires (i) an appropriate choice of the cosine or sine solution, according to whether respectively a Neumann or Dirichlet  condition is required at $x=a=a_\pm$ for the WKB solution of \cref{w_sol} and (ii) \cref{eq:FundConstr0001}, which allows the enforcement of the boundary condition at $x=b$ for suitably small $\ep$, with the constraint of $\ep$ constituting the wave selection constraint. 

Thus for a given non-trivial interval $(a,b)$, \cref{eq:FundConstr0001} constitutes the fundamental condition for an unstable WKB mode to destabilize the steady sate, assuming there is stability to homogeneous perturbations. Consequently, our aim below is to use \cref{eq:FundConstr0001} to deduce {\cref{het_prop},} including a characterization of the location of the instability regions, that is the intervals of the form $(a,b)$ in the above discussion, whose union constitutes the set ${\mathcal T}_0$ in the statement of {\cref{het_prop}.}

\section{Derivation of \cref{het_prop}}\label{prop2sect}

Extensive elements of the  reasoning  below are straightforward generalizations  of \cite{krause_WKB}, once expressions that involve the components of $\D$ are rewritten in terms of $\tr(\D)$ and $\det(\D)$ which are both positive since $\D$ is positive definite. However, a new technical result (given in \cref{p8.55}) is now required to handle a degenerate case arising at singular points, which can be very quickly dismissed for the simpler systems of \cite{krause_WKB} in contrast to the more general system considered here. In turn this     {alters the details of some of the derivations (given in \cref{nonorth,p9,newp}).}  We also present the propositions in a different order to \cite{krause_WKB}, with some alternative approaches, in a pedagogic attempt to simplify the derivation to aid understanding.

\subsection{Relating the Constraint to Conditions on ${\bf D},~{\bf J}$ and $\lambda$.}  

As in \cite{krause_WKB}, we  first define \emph{permissible} growth rates and eigenvalues which satisfy \eqref{eq:FundConstr0001}.

\begin{definition} A permissible pair $(\lambda, \mu^\pm_\lambda(x))$ is a tuple such that the value of $\lambda$ entails $\mu^\pm_\lambda(x)$ satisfies constraint \eqref{eq:FundConstr0001} for all $x$ in some non-empty interval $(a,b)\subseteq(0,1)$. 
\end{definition}
\noindent
We will denote $\lambda$ as permissible,  or $\mu^\pm_\lambda(x)$ as permissible, if $(\lambda, \mu^\pm_\lambda(x))$ is permissible, as defined above. 

\begin{proposition} \label{p2}   $\mu^\pm_\lambda(x)$ is permissible if and only if  $\mu^\pm_\lambda(x)$ is real and non-negative for all $x\in(a,b)$, though not identically zero.  
\end{proposition}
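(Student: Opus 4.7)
The plan is to establish the biconditional by proving each direction separately. The direction $(\Leftarrow)$ is essentially a direct computation: assuming $\mu_\lambda^\pm(x)$ is real and non-negative throughout $(a,b)$ and not identically zero, the positive square root is well-defined, non-negative, and continuous in $x$ (since $\mu_\lambda^\pm$ is an eigenvalue of the smoothly varying matrix $\Bl(x)$). Hence the integral $\int_a^b \sqrt{\mu_\lambda^\pm(\bar x)}\,\mathrm{d}\bar x$ is a finite, strictly positive real number, lying in $\mathbb{R}^+$, which is precisely the permissibility condition \eqref{eq:FundConstr0001}.

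For the direction $(\Rightarrow)$, I would argue by contrapositive and show that if $\mu_\lambda^\pm$ is identically zero, or fails to be real and non-negative at some point of $(a,b)$, then the integral cannot lie in $\mathbb{R}^+$. The identically zero case gives integral zero and is excluded. Otherwise, suppose $\mu_\lambda^\pm(x_0) \notin [0,\infty)$ at some $x_0 \in (a,b)$; the aim is to show the integral has strictly non-zero imaginary part. The key structural input is that the analysis restricts to real growth rates $\lambda$ (as noted in \cref{Hom_Cond_Sect}), so $\Bl(x) = \D^{-1}(x)(\J(x) - \lambda \bm{I})$ is a real matrix, and at each $x$ its two eigenvalues are either both real or form a complex conjugate pair. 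This permits a continuous labeling in which $\mu^+_\lambda(x)$ remains in the closed upper half plane and $\mu^-_\lambda(x)$ in the closed lower half plane throughout $(a,b)$.

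With this labeling, and the natural extension of the principal branch of the square root (setting $\sqrt{-r} := \mathrm{i}\sqrt{r}$ for $r > 0$), the function $\sqrt{\mu^+_\lambda(x)}$ stays in the closed upper half plane for all $x \in (a,b)$, with its imaginary part vanishing if and only if $\mu^+_\lambda(x) \in [0,\infty)$; and analogously for $\mu^-_\lambda$. If $\mu^\pm_\lambda(x_0) \notin [0,\infty)$, continuity yields a neighborhood of $x_0$ on which $\Im\sqrt{\mu^\pm_\lambda}$ is strictly positive (respectively strictly negative for $\mu^-_\lambda$), while elsewhere it is non-negative (resp. non-positive) by the closed-half-plane property. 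Hence $|\Im \int_a^b \sqrt{\mu^\pm_\lambda(\bar x)}\,\mathrm{d}\bar x| > 0$, so the integral is not in $\mathbb{R}^+$, contradicting permissibility.

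The main technical obstacle I would anticipate is justifying the continuous labeling at coalescence points where the two eigenvalues merge, i.e., where $\tr(\Bl(x))^2 - 4\det(\Bl(x))$ vanishes, since the eigenvalue ordering is ambiguous on the ``real-distinct'' side of such a transition. Under mild genericity assumptions consistent with those invoked in \cref{het_prop}, this labeling can be maintained: at coalescence the two eigenvalues share a common real value, so assigning the ``upper half plane'' label to either real branch on the real-distinct side extends continuously from the complex side, where the upper label is unambiguous.
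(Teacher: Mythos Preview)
Your $(\Leftarrow)$ direction is fine and matches the paper. The $(\Rightarrow)$ direction has two related gaps.

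First, you assume $\lambda$ is real by appealing to \cref{Hom_Cond_Sect}. That argument is specific to the spatially homogeneous problem. In the heterogeneous setting the reality of permissible $\lambda$ (with $\Re(\lambda)\geq 0$) is the content of \cref{p3}, which comes \emph{after} \cref{p2} and in fact uses it (the step ``$\tr(-\mu^\pm_\lambda\D)<0$'' in the proof of \cref{p3} needs $\mu^\pm_\lambda\geq 0$, which is the conclusion of \cref{p2}). So importing real $\lambda$ here is circular. The statement of \cref{p2} places no restriction on $\lambda$, and the paper proves it for arbitrary complex $\lambda$.

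Second, even granting real $\lambda$, your handling of $\mu^-_\lambda$ is inconsistent. You assert that $\sqrt{\mu^-_\lambda(x)}$ stays in the closed lower half plane. But with your stated convention $\sqrt{-r}=\mathrm{i}\sqrt{r}$, a negative real value $\mu^-_\lambda(x_0)=-r$ yields a \emph{positive} imaginary part. This situation arises whenever the discriminant of $\Bl$ is non-negative and the smaller real eigenvalue is negative---and nothing in the hypotheses of \cref{p2} excludes it. So on an interval where $\mu^-_\lambda$ moves between negative real values and the open lower half plane, $\Im\sqrt{\mu^-_\lambda}$ changes sign and your no-cancellation argument breaks.

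The paper sidesteps both issues with a single, simpler observation: take the square root branch with argument in $[0,\pi)$, i.e.\ read $\arg z\in[0,2\pi)$ and halve it. Then $\Im\sqrt{z}\geq 0$ for \emph{every} $z\in\mathbb{C}$, with equality precisely when $z\in[0,\infty)$. Hence if $\mu^\pm_\lambda(x_0)\notin[0,\infty)$ for some $x_0$, the integrand has strictly positive imaginary part near $x_0$ and non-negative imaginary part elsewhere, so the integral cannot lie in $\mathbb{R}^+$. No reality of $\lambda$, no conjugate-pair labeling, and no coalescence analysis are needed.
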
 
    
\begin{proof} 
If $\mu^\pm_\lambda(x)$ is real, non-negative and not identically zero for $x\in(a,b)$ then it is immediately clear that it is permissible.
Conversely, let $\mu^\pm_\lambda(x)$ be permissible. 
Given  the square root in condition \eqref{eq:FundConstr0001} is the positive one, 
and working in the complex plane such that any argument, denoted $\theta$ below, is in the range $\theta \in [0,2\pi)$ then the square root of $z=r\exp(\mathrm{i}\theta),$~$r\geq0$, is given by 
 $  \sqrt{r}e^{i\theta/2}.$
Hence, any imaginary  contribution to 
$\sqrt{\mu^\pm_\lambda(x)}$, 
in condition (\ref{eq:FundConstr0001}) is non-negative as $\theta/2\in[0,\pi)$ and cannot be cancelled from elsewhere in the integration domain. 
Thus, given Eq.~(\ref{eq:FundConstr0001}), 
 $\sqrt{\mu^\pm_\lambda(x)}$ must be real for all $x\in(a,b)$. Hence, 
$\mu^\pm_\lambda(x)$ is real and non-negative for all $x\in(a,b)$, while $\mu^\pm_\lambda(x)$ cannot be identically zero as the  integral in Eq.~(\ref{eq:FundConstr0001}) is not zero. 
\end{proof} 

\begin{proposition} \label{p3}   Assume $\lambda$ is  permissible for $x\in(a,b)$. If $\lambda$ has a non-zero imaginary part, $\Im(\lambda)\neq 0$, then $\Re(\lambda)<0$ for $x\in(a,b)$. Equivalently, if  $ \Re(\lambda)\geq 0$ then $\Im(\lambda)=0$ for $x\in(a,b)$.
\end{proposition}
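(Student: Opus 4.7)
The plan is to reformulate the question in terms of the spectrum of the real $2\times 2$ matrix $\J(x)-\mu\D(x)$, where $\mu:=\mu^\pm_\lambda(x)$, and then appeal to the fact that complex roots of a real-coefficient quadratic occur in conjugate pairs.

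First, I would invoke \cref{p2} to ensure that $\mu$ is real and non-negative on $(a,b)$. Since $\D(x)$ is invertible (being positive-definite), the eigenvalue relation $\Bl(x)\ps=\mu\,\ps$ can be rewritten as $\D^{-1}(x)(\J(x)-\lambda\bm{I})\ps=\mu\,\ps$, which rearranges to $(\J(x)-\mu\D(x))\ps=\lambda\,\ps$. Hence $\lambda$ is an eigenvalue of the real $2\times 2$ matrix $\J(x)-\mu\D(x)$, and its characteristic equation
\[ \lambda^2 - \tr(\J(x)-\mu\D(x))\,\lambda + \det(\J(x)-\mu\D(x)) = 0 \]
has real coefficients.

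Second, if $\Im(\lambda)\neq 0$ then the two roots of this real quadratic must be $\lambda$ and $\bar\lambda$; summing them yields
\[ 2\Re(\lambda) = \tr(\J(x)-\mu\D(x)) = \tr(\J(x)) - \mu\,\tr(\D(x)). \]
To close the argument I would invoke the hypotheses inherited from \cref{het_prop}: local homogeneous stability \eqref{hcds} gives $\tr(\J(x))<0$, permissibility gives $\mu\geq 0$, and positive-definiteness of $\D(x)$ gives $\tr(\D(x))>0$. Combining these, $2\Re(\lambda)\leq\tr(\J(x))<0$, so $\Re(\lambda)<0$, as required.

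The main conceptual step is the reformulation that recasts the eigenvalue problem for $\Bl(x)$ as one for the real matrix $\J(x)-\mu\D(x)$; once that is in place, the sign of $\Re(\lambda)$ drops out immediately from the sign of the trace, with no need to compute the discriminant explicitly. One subtle point worth flagging is that $\Re(\lambda)$ is a single real number independent of $x$, whereas the right-hand side of the trace identity varies with $x$; this is automatic, since $\mu^\pm_\lambda(x)$ is \emph{defined} as an eigenvalue of $\Bl(x)$, and so the identity must hold pointwise on $(a,b)$ by construction.
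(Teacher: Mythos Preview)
Your proof is correct and follows essentially the same route as the paper: both recognize that $\lambda$ is an eigenvalue of the real matrix $\J(x)-\mu\D(x)$, and that when $\lambda$ is non-real the identity $2\Re(\lambda)=\tr(\J(x))-\mu\,\tr(\D(x))$ holds, which is negative by \eqref{hcds}, permissibility (via \cref{p2}), and positive-definiteness of $\D$. The only cosmetic difference is that the paper writes out the quadratic formula explicitly for $\lambda$, whereas you invoke the conjugate-roots-sum-to-trace fact directly; the content is identical.
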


\begin{proof}   
From the definition of $\mu^\pm_\lambda(x)$, we have 
\beq\label{mul} \det [-\mu^\pm_\lambda(x)  \D  +  \J _\lambda(x)]= \det [-\mu^\pm_\lambda(x)  \D  +  \J - \lambda \bm I] =0, 
\eeq
and so, dropping the explicit $x$-dependence 
\beq \label{muleqn} 
2 \lambda &=& \tr(-\mu^\pm_\lambda  \D +\J)
\pm\sqrt{[\tr(-\mu^\pm_\lambda  \D +\J  )]^2-4 \det[-\mu^\pm_\lambda(x)  \D + \J] },
\eeq
with the spatial dependence of $\mu^\pm_\lambda(x)$ such that the growth rate, $\lambda$, does not have a dependence on $x$. We have tr$( \J )<0$ for 
all $x$ by \cref{hcds} and, 
given $\lambda$ is permissible for $x\in(a,b)$, so that $\mu^\pm_\lambda(x)$ is permissible on this region, $\tr(-\mu^\pm_\lambda(x)  \D)<0$ for $x\in(a,b)$. Thus, if a permissible $\lambda$ is complex, it must have a negative real part. 
\end{proof}


\begin{proposition}\label{p3.5}
Given $(\lambda, \mu^\pm_\lambda(x))$ is permissible for $x\in(a,b)$ and $\Re(\lambda)\geq 0$, then $\det(\Bl) >0, $ where $\Bl=\D^{-1}(\J-\lambda \bm I)= \D^{-1}\J_\lambda$. 
\end{proposition}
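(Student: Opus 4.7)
My plan is to reduce the statement to the local homogeneous stability inequalities \eqref{hcds} via the two previous propositions, so that the determinant condition becomes a direct sign check on a quadratic polynomial in $\lambda$.

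The first step is to invoke \cref{p3}: since $(\lambda,\mu^\pm_\lambda(x))$ is permissible on $(a,b)$ and $\Re(\lambda)\geq 0$, the contrapositive of that proposition forces $\Im(\lambda)=0$, so $\lambda$ is real and $\lambda\geq 0$. This lets me work with a genuine real matrix $\Jl(x)=\J(x)-\lambda\bm{I}$ and avoid complex arguments.

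Next I would factor the determinant as
\beq\label{plan-det-split}
\det(\Bl(x)) \;=\; \det(\D^{-1}(x))\,\det(\J(x)-\lambda\bm{I}).
\eeq
Positive definiteness of $\D(x)$ at every $x\in[0,1]$ gives $\det(\D(x))>0$ and hence $\det(\D^{-1}(x))>0$, reducing the claim to showing $\det(\J(x)-\lambda\bm{I})>0$ pointwise on $(a,b)$.

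Expanding the characteristic-style polynomial yields
\beq\label{plan-quad}
\det(\J(x)-\lambda\bm{I}) \;=\; \lambda^2 - \tr(\J(x))\,\lambda + \det(\J(x)),
\eeq
where I treat $\lambda$ as a real non-negative parameter and $x$ as fixed in $(a,b)$. By hypothesis \eqref{hcds}, $\tr(\J(x))<0$ and $\det(\J(x))>0$ for all $x\in[0,1]$. Therefore $\lambda^2\geq 0$, $-\tr(\J(x))\,\lambda\geq 0$, and $\det(\J(x))>0$, so the right-hand side of \eqref{plan-quad} is bounded below by $\det(\J(x))>0$. Combining this with \eqref{plan-det-split} gives $\det(\Bl(x))>0$ on $(a,b)$, which is the conclusion.

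The main obstacle is essentially bookkeeping: one must be careful that \cref{p3} is genuinely invoked before the quadratic-sign argument, because the identity \eqref{plan-quad} and the positivity estimate $\lambda^2-\tr(\J)\lambda+\det(\J)\geq\det(\J)$ rely crucially on $\lambda$ being real and non-negative rather than merely having $\Re(\lambda)\geq 0$. Once that is in place, the argument does not need to use permissibility of $\mu^\pm_\lambda(x)$ directly beyond its role in \cref{p3}, and no degeneracy issues at singular points of $\sqrt{\mu^\pm_\lambda}$ arise, since the statement is about $\det(\Bl(x))$ rather than about the WKB solution itself.
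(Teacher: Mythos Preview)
Your argument is correct and follows exactly the paper's route: invoke \cref{p3} to get $\lambda\in\mathbb{R}_{\geq 0}$, factor $\det(\Bl)=\det(\D^{-1})\det(\J-\lambda\bm I)=\det(\D^{-1})\bigl(\lambda^2-\lambda\,\tr(\J)+\det(\J)\bigr)$, and conclude positivity from positive definiteness of $\D$ together with \eqref{hcds}. The only difference is that you spell out the intermediate sign checks and the bookkeeping remark about needing $\lambda$ real, which the paper leaves implicit.
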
 
\begin{proof}
We have that $\lambda$ is real by \cref{p3} and is thus non-negative, with 
$$ \det(\Bl)= \det(\D^{-1}) \det( \J_\lambda)  = \det(\D^{-1})\left( \lambda^2 - \lambda \tr(\J) + \det(\J)\right) >0$$
for all $x\in(a,b)$, where the final inequality arises from the positive definiteness of $\D$, and \cref{hcds}.
\end{proof}

\begin{proposition} \label{p4} Given $\Re(\lambda)\geq0$, the pair $(\lambda, \mu^\pm_\lambda(x))$ is permissible on $x\in(a,b)$ if and only if
\beq \label{tc233} \tr( \Bl) > 0, \,\,\, \,\,\, \,\,\, \,\,\, [\tr( \Bl{)}]^2-4 \det(\Bl) \geq 0,
\eeq 
for  $x \in(a,b)$. 
\end{proposition}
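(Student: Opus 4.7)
The plan is to combine \cref{p2,p3,p3.5} with the elementary observation that $\mu^\pm_\lambda(x)$ are the roots of the $2\times 2$ characteristic polynomial $\mu^2 - \tr(\Bl)\mu + \det(\Bl) = 0$, so that
\beq\label{eigform}
\mu^\pm_\lambda = \frac{\tr(\Bl) \pm \sqrt{[\tr(\Bl)]^2 - 4\det(\Bl)}}{2}.
\eeq
Reality, non-negativity, and the not-identically-zero condition from \cref{p2} can then be read off from the trace, determinant, and discriminant of $\Bl$ via Vieta's formulas, and the equivalence splits naturally into a forward and a converse implication.

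For the forward direction, suppose $(\lambda,\mu^\pm_\lambda(x))$ is permissible on $(a,b)$. \cref{p3} together with $\Re(\lambda)\geq 0$ forces $\lambda$ to be a non-negative real, so $\tr(\Bl(x))$ and $\det(\Bl(x))$ are real-valued. \cref{p2} then gives $\mu^\pm_\lambda(x)$ real and non-negative on $(a,b)$, which via \cref{eigform} immediately yields the discriminant condition $[\tr(\Bl)]^2 - 4\det(\Bl) \geq 0$. \cref{p3.5} supplies $\det(\Bl) > 0$, so the two eigenvalues share a common sign and neither can vanish; the permissible eigenvalue being non-negative then forces both to be strictly positive, and hence $\tr(\Bl) = \mu^+_\lambda + \mu^-_\lambda > 0$.

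For the converse, assume \cref{tc233} holds with $\Re(\lambda)\geq 0$. The main subtlety is that one cannot appeal to \cref{p3.5} for $\det(\Bl) > 0$, since that proposition already presupposes permissibility; instead I would first observe that the inequality $\tr(\Bl) > 0$ is a statement about a positive real, and since $\tr(\Bl) = \tr(\D^{-1}\J) - \lambda\tr(\D^{-1})$ with $\tr(\D^{-1}) > 0$ by positive definiteness of $\D^{-1}$, this forces $\lambda \in \mathbb{R}_{\geq 0}$. The identity $\det(\Bl) = \det(\D^{-1})\bigl(\lambda^2 - \tr(\J)\lambda + \det(\J)\bigr)$, combined with the local stability assumptions \cref{hcds}, then yields $\det(\Bl) > 0$. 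The discriminant condition renders $\mu^\pm_\lambda$ real, and positive trace together with positive determinant in \cref{eigform} force both eigenvalues to be strictly positive on $(a,b)$, yielding permissibility by \cref{p2}. Once these two real-analytic points are handled, everything collapses to routine manipulation of the quadratic formula.
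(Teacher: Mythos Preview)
Your proof is correct and follows the same route as the paper: both directions rest on \cref{p2,p3,p3.5} together with the explicit eigenvalue formula \cref{ie3}. Your handling of the implication \cref{tc233} $\Rightarrow$ permissibility is in fact more careful than the paper's brief version: the paper simply asserts $\mu^\pm_\lambda > 0$ from \cref{tc233} and \cref{ie3}, whereas $\mu^-_\lambda > 0$ also requires $\det(\Bl) > 0$, which you correctly supply by first extracting $\lambda \in \mathbb{R}_{\geq 0}$ from the reality of $\tr(\Bl)$ and then using the determinant identity with \cref{hcds}, rather than circularly invoking \cref{p3.5}.
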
 

\begin{proof}
We immediately have   
\beq   \label{ie3} \mu^\pm_\lambda(x) &=& \frac 12 \left[\tr( \Bl)
\pm\sqrt{[\tr( \Bl)]^2-4 \det(\Bl) }\right]
\eeq
since  $\mu^\pm_\lambda$ are defined to be the eigenvalues of 
$\Bl$.  
Given conditions \eqref{tc233}, we can see by Equation \eqref{ie3} that $\mu_\lambda^\pm(x) > 0$ for all $x \in (a,b)$, and, hence, condition \eqref{eq:FundConstr0001} is satisfied, giving permissibility.

Next we consider the converse by assuming  $(\lambda, \mu^\pm_\lambda(x))$ is permissible for $x\in(a,b)$. Then  $\Re(\lambda)\geq 0$ implies $\lambda$ is real by \cref{p3}. From permissibility and \cref{p2} we also have that $\mu^\pm_\lambda(x)$ is real, non-negative and not identically zero for   $x\in(a,b)$. As $\mu^\pm_\lambda(x)$ and $\lambda$ are real  this enforces
$$ [\tr( \Bl)]^2-4 \det(\Bl) \geq 0, $$ 
for  $x\in(a,b)$. 
Also $\det(\Bl) >0$ on this  region by \cref{p3.5} 
and hence for both the positive and negative square root in Equation \eqref{ie3}, the fact that $\mu^\pm_\lambda(x)$ cannot be negative enforces $\tr( \Bl) \geq 0$ for   $x\in(a,b)$. The possibility that $\tr( \Bl) = 0$ is excluded as then $\mu^\pm_\lambda(x)$ is not real, since  det$(\Bl)>0.$
 \end{proof}

Note that the conditions in \cref{p4} translate the constraint \eqref{eq:FundConstr0001} from the WKB solutions, $\bm w_\pm$ of \cref{w_sol}, to properties of $\Bl$ and thus properties of $\J, ~\D$ and $\lambda.$
Furthermore  the conditions in \cref{p4}
do not depend on the positive or negative branch of $\mu_\lambda^\pm$,   implying that both eigenvalues are permissible given that one of them is.

\subsection{Prospective Blow-up and Regularization of WKB Solutions} \label{pbu} \emph{A priori}, there is scope for the WKB solutions, $\bm w_\pm$ of \cref{w_sol}, to become unbounded, either within the region of permissibility $(a,b)$, or at its edges, $\{a,b\}$ due to a zero of  
the denominator of $2\bm s_{*\pm}^T\bm p_{*\pm}$ within the integrand of the exponent or at the edges due to 
denominator $(\mu_\lambda^\pm)^{1/4}$.

However, Propositions \ref{p3.5}, \ref{p4} show that   given $(\lambda, \mu^\pm_\lambda(x))$ is permissible and $\Re(\lambda)\geq 0$ on $x\in(a,b)$ the eigenvalues 
$\mu_\lambda^\pm$, as given by 
\beq   \label{ie33} \mu^\pm_\lambda(x) &=& \frac 12 \left[\tr( \Bl)
\pm\sqrt{[\tr( \Bl)]^2-4 \det(\Bl) }\right]
\eeq
are bounded away from zero on $x\in(a,b)$. Thus the denominator $(\mu_\lambda^\pm)^{1/4}$ is bounded away from zero on  $(a,b)$ and thus cannot generate blow up in the WKB solutions on approaching  either of the points $\{a,b\}$. 

However, there is still scope for the denominator of $2\bm s_{*\pm}^T\bm p_{*\pm}$ within the integrand of the exponent to generate blow-up. Thus we first determine conditions on $\bm B_\lambda$ for such a blow up to occur and  proceed to  demonstrate that it can be regularized, i.e.~bounded, on use of an internal homogeneous Dirichlet  boundary condition where the WKB solution would otherwise blow up. The resulting  non-trivial WKB solution will match a trivial zero solution  exterior to the region of permissibility (if the prospective blow up location is not on the edge of the domain $[0,1]$.) Below for notational convenience and brevity we drop the $\pm$ labels on  $\st, \ps$.

\begin{proposition}\label{nonorth}
 Given $(\lambda, \mu^\pm_\lambda(x))$ is permissible and $\Re(\lambda)\geq 0$ for $x\in(a,b)$, then $$[\tr( \Bl)]^2-4 \det(\Bl) > 0$$ for   $x \in (a,b)$ if and only if ${\bf s}_*^T {\bf p}_*\neq0$ for all $x \in (a,b)$, where ${\bf s}_*$ and ${\bf p}_*$ are the left and right unit eigenvectors of $[-\mu_\lambda^\pm{\bf I}+ {\bf B}_\lambda]$. 
\end{proposition}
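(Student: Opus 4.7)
Plan: Reduce the biconditional to a standard $2\times 2$ spectral theory fact. Since $\mu_\lambda^\pm$ are eigenvalues of $\Bl(x)$, the vectors $\bm{p}_*$ and $\bm{s}_*^T$ are the right and left null vectors of $\Bl-\mu\bm{I}$ (equivalently, right and left unit eigenvectors of $\Bl$ for the eigenvalue $\mu=\mu_\lambda^\pm$). The scalar $\bm{s}_*^T\bm{p}_*$ is classically nonzero precisely when that eigenvalue is simple, and vanishes when $\Bl-\mu\bm{I}$ is a nonzero nilpotent (Jordan) matrix. I would use this fact together with the eigenvalue formula \cref{ie33} to convert between the spectral condition on the discriminant and the geometric condition on the eigenvectors.

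For the forward direction, fix $x\in(a,b)$ and note that $[\tr(\Bl)]^2-4\det(\Bl)>0$ gives $\mu_\lambda^+(x)\neq\mu_\lambda^-(x)$ via \cref{ie33}. Hence both eigenvalues are simple and $\Bl(x)$ is diagonalizable, with right eigenvectors $\bm{p}_*^+,\bm{p}_*^-$ forming a basis of $\mathbb{R}^2$. Biorthogonality yields $(\bm{s}_*^\pm)^T\bm{p}_*^\mp=0$; if additionally $(\bm{s}_*^\pm)^T\bm{p}_*^\pm=0$, then $\bm{s}_*^\pm$ would annihilate a basis of $\mathbb{R}^2$ and hence vanish, contradicting its unit norm. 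Therefore $\bm{s}_*^T\bm{p}_*\neq 0$ throughout $(a,b)$.

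For the reverse direction I would argue by contrapositive. By \cref{p3.5,p4} the permissibility hypothesis together with $\Re(\lambda)\geq 0$ already gives $[\tr(\Bl)]^2-4\det(\Bl)\geq 0$ on $(a,b)$, so the negation of the right-hand side is simply that the discriminant vanishes at some $x_0\in(a,b)$. At such a point $\mu_\lambda^+(x_0)=\mu_\lambda^-(x_0)=:\mu$, which forces $\tr(\Bl(x_0)-\mu\bm{I})=0$ and $\det(\Bl(x_0)-\mu\bm{I})=0$, so $\Bl(x_0)-\mu\bm{I}$ is nilpotent. If it is a nonzero Jordan nilpotent, then $(\Bl(x_0)-\mu\bm{I})^2=0$ combined with a one-dimensional kernel implies that the range of $\Bl(x_0)-\mu\bm{I}$ coincides with its kernel; since the left null space is the orthogonal complement of the range, one obtains $\bm{s}_*^T\bm{p}_*(x_0)=0$, contradicting the hypothesis.

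The main obstacle is the residual case $\Bl(x_0)=\mu\bm{I}$, in which every vector is simultaneously a left and right eigenvector and so $\bm{s}_*^T\bm{p}_*$ is not uniquely defined, nullifying the Jordan argument above. This is precisely the degenerate situation the paper flags as requiring the separate technical result \cref{p8.55}. I expect it to be dispatched either by defining $\bm{s}_*,\bm{p}_*$ as continuous limits from a deleted neighbourhood of $x_0$ (using that away from $x_0$ the eigenvalues are simple, provided the discriminant has only isolated zeros) and verifying that the limiting vectors are orthogonal, or by excluding the case $\Bl(x_0)=\mu\bm{I}$ outright via positive-definiteness of $\D$ together with a genericity hypothesis analogous to the simple-zero condition imposed in \cref{het_prop}.
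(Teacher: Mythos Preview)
Your argument for both directions is correct and considerably cleaner than the paper's explicit case-by-case enumeration of $2\times 2$ matrices with zero determinant. The paper checks by hand that a zero-determinant matrix has orthogonal left and right null vectors if and only if its trace vanishes, whereas you invoke the standard biorthogonality and nilpotent-structure facts directly; your route is more conceptual and would generalize more readily.

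However, your handling of the residual case $\Bl(x_0)=\mu\bm{I}$ is incomplete, and neither of your proposed fixes is what the paper actually does. The reference to \cref{p8.55} is a red herring: that proposition treats a different degeneracy (vanishing off-diagonal entry of $-\mu\bm{I}+\Bl$ near a simple zero of the discriminant) and is invoked later in \cref{p9}, not in the proof of \cref{nonorth}. The paper excludes $\Bl(x_0)=\mu\bm{I}$ directly from hypotheses already in force: if $\Bl=\mu\bm{I}$ then $\J=\lambda\bm{I}+\mu\D$, whence $\tr(\J)=2\lambda+\mu\,\tr(\D)\geq 0$ since $\lambda\geq 0$ (by \cref{p3}), $\mu\geq 0$ (by permissibility and \cref{p2}), and $\tr(\D)>0$ (positive definiteness). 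This contradicts the standing stability assumption $\tr(\J)<0$ of \cref{hcds}. No genericity or simple-zero hypothesis is needed here, and the continuity-limit route you float would be both more work and unnecessary.
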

\begin{proof}
We will demonstrate both implications via contraposition. We first assume that ${\bf s}_*^T {\bf p}_*=0$ at some point $x_*\in(a,b)$. By elaborating possibilities on a case by case basis for a general $2\times 2$ matrix with zero determinant, we note that the left and right eigenvectors of the zero eigenvalue can only be perpendicular if the matrix is proportional to one of the following:
$$ \left( \begin{array}{cc} 0 & 0 \\ 0 & 0 \end{array} \right) , 
~~~~~~~
\left( \begin{array}{rr} 1 & 1 \\ -1 & -1 \end{array} \right) ,~~~~~~~
\left( \begin{array}{rr} 1 & -1 \\ 1 & -1 \end{array} \right) . 
$$ 
In all three cases, we have that the trace is zero.
Therefore, $$\tr(-\mu_\lambda^\pm{\bf I}+ {\bf B}_\lambda) = -2\mu_\lambda^\pm+\tr({\bf B}_\lambda)=0.$$ However, by \cref{ie3}, this implies that $[\tr( \Bl)]^2-4 \det(\Bl) = 0$, contradicting the assumption that this quantity remains positive.

For the converse, we start from   $[\tr( \Bl)]^2-4 \det(\Bl) = 0$ at some point $x_*\in (a,b)$ (noting that if this term were negative, then, by \cref{p4}, $\lambda$ would not be permissible and we would have an immediate contradiction). By using Equation \eqref{ie3} again we see that $\tr(-\mu_\lambda^\pm{\bf I}+ {\bf B}_\lambda)=0,$ while $\rm{det}(-\mu_\lambda^\pm{\bf I}+ {\bf B}_\lambda)=0$ as $\mu_\lambda^\pm$ are the eigenvalues of ${\bf B}_\lambda$. Any real $2\times 2$ matrix with zero determinant and trace can be written in one of the following forms:
\begin{equation} \label{cases} \left( \begin{array}{cc} c_1 & c_2 \\ -\frac{c_1^2}{c_2} & -c_1 \end{array} \right) , ~~~~~~~
\left( \begin{array}{cc} 0 & 0 \\ c_2 & 0 \end{array} \right) , ~~~~~~~
\left( \begin{array}{cc} 0 & c_2 \\ 0& 0 \end{array} \right) , ~~~~~~~
\left( \begin{array}{cc} 0 & 0 \\ 0 & 0 \end{array} \right) , \, \, \, 
\end{equation}
for real $c_1\neq 0$ and real $c_2\neq 0$. The first of these has one left and one right eigenvector, given by ${\bf s}^T_* = (c_1,c_2)$ and ${\bf p}^T_* = (-c_2,c_1)$ to within normalization and which satisfy ${\bf s}_*^T{\bf p}_*=0.$ For the second matrix case, we have $\bm p_*^T = (0,1)$ and $\bm s_*^T = (1,0)$ which are also orthogonal, and similarly for the third case with $\bm p_*^T = (1,0)$ and $\bm s_*^T = (0,1)$. 

The final case is not possible given the constraints on the system. \alk{Specifically,} we have \alk{that} $ \J - \lambda \bm I = \bm \D {\bf B}_\lambda = \mu_\lambda^\pm \D,$ 
where we note that the constraint that $\mu_\lambda^\pm$ being permissible implies that   $\mu_\lambda^\pm$ is real and non-negative by \cref{p2}, and we have the restriction $\Re (\lambda)\geq 0$. Hence taking the trace of $\J=  \lambda \bm{I}+\mu_\lambda^\pm \D$ we have
$$ \mathrm{tr}( \J) = 2\lambda + \mu_\lambda^\pm  \mathrm{tr}(\D) \geq 0, $$ 
where we have used the fact that $\D$ is positive definite. The above inequality  contradicts the requirement of stability to homogeneous perturbations, \cref{hcds}.

\end{proof}

We proceed to consider whether there is a singularity when $${\bf s}_*^T {\bf p}_*=0 = [\tr( \Bl )]^2-4 \det(\Bl) $$ subject to weak constraints on the nature of the zero, which are given via the definition of an \emph{Admissible Neighborhood} immediately below. 

\begin{definition} {\bf Admissible Neighborhood.} \label{def9} Let $(\lambda, \mu^\pm_\lambda(x))$ be permissible for $x\in(a,b)$ and $\Re(\lambda)\geq 0$, with a simple zero  of $$ [\tr( \Bl )]^2-4 \det(\Bl) =0$$ 
 at a point $X_*\in\{a,b\}$. We can then write
  $$ [\mathrm{tr}( \Bl )]^2-4 \det(\Bl) = A^2|x-X_*| (1+o(1)),  \,  \,  \,  \,  \, \,  \,  \,  \,  \,  \,     \,   x\in \mathcal{N} $$ 
where $A>0$ without loss, given the {\it admissible  neighborhood} $\mathcal{N}$ is defined as follows.  If $X_*=a$ then
$$ [\tr( \Bl )]^2-4 \det(\Bl) >0$$ 
to the right of $X_*$ and 
$\mathcal{N}$ is the closure of the intersection of a sufficiently small, non-empty neighborhood of $X_*$ with the set $x\geq X_*=a$. In particular the neighborhood is sufficiently small to ensure no other zero of $[\mathrm{tr}( \Bl )]^2-4 \det(\Bl)$ is within $\mathcal{N}/\{X_*\}$.  If $X_*=b$, then  $\mathcal{N}$ is an analogously defined closure of a  half-neighborhood contained within $[a,b]$. \end{definition}

In particular we need  to consider how the solution behaves sufficiently close to a simple  zero. To proceed 
we first need to determine the behavior of $-\mu_\lambda^\pm{\bf I}+ {\bf B}_\lambda$  sufficiently close to a simple  zero, as summarized by the following proposition. 

\begin{proposition}\label{p8.55} 
Let $(\lambda, \mu^\pm_\lambda(x))$ be permissible and $\Re(\lambda)\geq 0$ for $x\in(a,b)$. Suppose 
 that a point $X_*\in\{a,b\}$ is a simple  zero of $[\tr( \Bl )]^2-4 \det(\Bl)$,
  with associated admissible neighborhood $\mathcal{N}$, as in \cref{def9}. Then for $x$ in $\mathcal{N}$ excluding the singularity point, that is 
 $x\in \mathcal{N}/ \{X_*\}$, we have 
  \beq \label{auxmatZ}   -\mu_\lambda^\pm{\bf I}+ {\bf B}_\lambda &=&  
 \left(
\begin{array}{cc}
c_1(x) & c_2(x)\\  \dfrac{c_1(x)}{c_2(x)}(-c_1(x)+K(x))  & -c_1(x)+K(x)
\end{array}\right), 
\eeq
where 
$$ K(x) =  \mathrm{tr}(-\mu_\lambda^\pm{\bf I}+ {\bf B}_\lambda)  = \mp A |x-X_*|^{1/2}(1+o(1)), \,\,\,\,\,\,\, A>0, $$ with the $\mp$ inherited from the $\pm$ of $\mu_\lambda^\pm$ and $c_2(x)$  is  non-zero for $x\in\mathcal{N}, $ with restrictions on $\mathcal{N}$ as required.  
\end{proposition}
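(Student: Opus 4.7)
I would prove \cref{p8.55} in three main steps. First, the expansion for $K(x)$ is immediate from the eigenvalue formula \eqref{ie3}: since $\mu_\lambda^\pm = \tfrac{1}{2}[\tr(\Bl) \pm \sqrt{[\tr(\Bl)]^2 - 4\det(\Bl)}]$, one computes $K(x) = \tr(-\mu_\lambda^\pm \bm{I} + \Bl) = \tr(\Bl) - 2\mu_\lambda^\pm = \mp\sqrt{[\tr(\Bl)]^2 - 4\det(\Bl)}$, and the simple-zero hypothesis $[\tr(\Bl)]^2 - 4\det(\Bl) = A^2|x-X_*|(1+o(1))$ then yields $K(x) = \mp A|x-X_*|^{1/2}(1+o(1))$, with the sign $\mp$ inherited from the branch of $\mu_\lambda^\pm$.

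Second, I would read off the matrix form purely algebraically. The matrix $\bm{Z}(x) := -\mu_\lambda^\pm \bm{I} + \Bl(x)$ has determinant zero (because $\mu_\lambda^\pm$ is an eigenvalue of $\Bl$) and trace $K(x)$. Writing $\bm{Z}(x) = \begin{pmatrix} c_1 & c_2 \\ c_3 & c_4 \end{pmatrix}$, the trace condition forces $c_4 = K - c_1$ and the determinant condition forces $c_2 c_3 = c_1(K - c_1)$; whenever $c_2 \neq 0$, the latter rearranges to $c_3 = c_1(K - c_1)/c_2$, which is precisely the form \eqref{auxmatZ}.

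The main technical work is then to verify that $c_2(x) \neq 0$ throughout a suitably chosen $\mathcal{N}$. I would first rule out $c_2 \equiv 0$ on any neighborhood of $X_*$: were that the case, $\Bl$ would be lower triangular there, so its eigenvalues would be the diagonal entries of $\Bl$ and $[\tr(\Bl)]^2 - 4\det(\Bl) = (B_{11} - B_{22})^2$, a perfect square vanishing at $X_*$ to order at least two, contradicting the simple-zero hypothesis. Next, at $X_*$ itself $K(X_*) = 0$, so $\bm{Z}(X_*)$ has zero trace and zero determinant and hence, by the exhaustive case analysis carried out in the proof of \cref{nonorth}, must coincide with one of the non-trivial normal forms in \eqref{cases}. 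Two of those forms have $c_2(X_*) \neq 0$ outright; in the remaining degenerate case $c_2(X_*) = 0$ but $c_3(X_*) \neq 0$, and a relabeling of the two species (interchanging the roles of the rows and columns of $\bm{Z}$) produces $c_2(X_*) \neq 0$ without loss of generality. Continuity of $c_2(x)$ then allows $\mathcal{N}$ to be shrunk so that $c_2(x) \neq 0$ throughout, as claimed.

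The main obstacle is this last non-vanishing step: without the simple-zero hypothesis, $c_2$ could \emph{a priori} have a sequence of zeros accumulating at $X_*$, and without the relabeling argument the single degenerate form in \eqref{cases} would obstruct the representation. With both ingredients in place, however, the rest of the proposition is just the algebraic reading-off of $c_3$ from the trace and determinant constraints on $\bm{Z}(x)$.
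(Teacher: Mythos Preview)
Your first two steps coincide with the paper's: it too obtains $K(x)=\mp\sqrt{[\tr(\Bl)]^2-4\det(\Bl)}$ from \cref{ie33} together with the simple-zero expansion of \cref{def9}, and writes the singular matrix as $\left(\begin{smallmatrix}\alpha&\beta\\ \gamma&K-\alpha\end{smallmatrix}\right)$ with the zero-determinant constraint $\alpha(K-\alpha)=\beta\gamma$, so that \eqref{auxmatZ} follows whenever $\beta=c_2\neq0$.

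The issue is your handling of the degenerate case $c_2(X_*)=0$. The relabeling move does not prove \cref{p8.55} as stated: the claim concerns the specific entry $c_2=(\Bl)_{12}$, and swapping species replaces this by $(\Bl)_{21}$, which is a different function. That swap is a legitimate ``without loss of generality'' for the eventual symmetric statement \cref{het_prop}, but it leaves the present proposition unproved for the original matrix when $(\Bl)_{12}(X_*)=0$. Your preliminary observation that $c_2\not\equiv0$ on any neighbourhood is correct, but it only rules out identical vanishing, not an isolated zero of $c_2$ at $X_*$.

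The paper closes the degenerate case directly, without relabeling. From $\beta(X_*)=K(X_*)=0$ and the determinant constraint $\alpha(K-\alpha)=\beta\gamma$ one gets $\alpha(X_*)^2=0$, hence $(\Bl)_{11}(X_*)=(\Bl)_{22}(X_*)$. Expanding $\alpha(x)=-\mu_\lambda^\pm+(\Bl)_{11}$ via \cref{ie33} and the smoothness of $\Bl$ then gives $\alpha(x)=\mp\tfrac12 A|x-X_*|^{1/2}(1+o(1))$ and likewise for $K-\alpha$, so that
\[
\beta(x)\gamma(x)=\alpha(K-\alpha)=\tfrac14 A^2|x-X_*|\,(1+o(1)).
\]
This product is non-zero for $x\in\mathcal{N}\setminus\{X_*\}$ (restricting $\mathcal{N}$ as necessary), hence so is $\beta=c_2$, giving \eqref{auxmatZ} for the original matrix without any appeal to species symmetry.
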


\begin{proof} 
We first of  all note  that from \cref{ie33} and \cref{def9} that 
\beq  K(x) &:=& \tr(-\mu_\lambda^\pm{\bf I}+ {\bf B}_\lambda)  =   \mp A |x-X_*|^{1/2}(1+o(1)), 
\eeq 
where $A>0$.  Noting that 
$$  \det(-\mu_\lambda^\pm{\bf I}+ {\bf B}_\lambda)=0, \,\,\,\,\, x\in \mathcal{N}  $$ 
then, with the zero determinant  constraint  $\alpha(x)(K(x)-\alpha(x))=\beta(x)\gamma(x)$, we can write, without loss of generality, that 
\beq \label{al1}  -\mu_\lambda^\pm{\bf I}+ {\bf B}_\lambda = \left(
\begin{array}{cc}
\alpha(x) & \beta(x) \\  \gamma(x)  & K(x)-\alpha(x)
\end{array}\right).
\eeq

Firstly, if $\beta(X_*)\neq 0$ then restricting $\mathcal{N}$ as necessary, we have $\beta(x)\neq 0$ for $x\in\mathcal{N}.$ In this case we can write  
 \beq  -\mu_\lambda^\pm{\bf I}+ {\bf B}_\lambda &=&  
 \left(
\begin{array}{cc}
c_1(x) & c_2(x)\\  \dfrac{c_1(x)}{c_2(x)}(-c_1(x)+K(x))  & -c_1(x)+K(x)
\end{array}\right), 
\eeq
where $c_2(x)$ is  non-zero for $x\in\mathcal{N}.$ Hence the proposition  holds if $\beta(X_*)\neq 0$. 

 We now consider the degeneracy with $\beta(X_*)=0$, which implies  $\alpha(X_*)=0$ on noting that $K(X_*)=0.$ From \cref{al1} and \cref{ie33}, together with the smoothness of $\bm B_\lambda$ allowing its expansion about $x=x_*$,  we have 
 \beq\label{muha} \alpha(x) &=  -\mu_\lambda^\pm + (\bm B_\lambda)_{11}(x) \\ \nonumber &=  \frac 1 2 ( (\bm B_\lambda)_{11}(X_*) - (\bm B_\lambda)_{22}(X_*))    \mp \frac 1 2 A |x-X_*|^{1/2}(1+o(1))  +O(|x-X_*|)      
\eeq
and $\alpha(X_*)=0$ implies $(\bm B_\lambda)_{11}(X_*) = (\bm B_\lambda)_{22}(X_*).$ Hence, noting that any $O(|x-X_*|)$ terms can be absorbed into the $|x-X_*|^{1/2}o(1)$ terms, we have 
$$ K(x)-\alpha(x) =  \mp \frac 1 2 A |x-X_*|^{1/2}(1+o(1)), $$
and thus 
$$\beta(x)\gamma(x) = \alpha(x)(K(x)-\alpha(x)) = \frac 1 4 A^2 |x-X_*|(1+o(1)).$$ 
Hence for $x\in \mathcal{N}/\{X_*\}$, restricting $\mathcal{N}$ as necessary to ensure the validity of the above local expansions,  we  have $\beta(x)\neq 0.$ Hence for $x\in \mathcal{N}/\{X_*\}$, \cref{auxmatZ} applies, demonstrating the proposition for $\beta(X_*)=0.$
\end{proof}

   The following  propositions proceed to consider how
  $$ \exp \left(\int  \frac{\bm{s}_{\bm *}(\bar{x})^T\D^{-1}(\bar{x})\bm{M}(\bar{x})\bm{p}_{\bm *}(\bar{x})+2\bm{s}_{\bm *}(\bar{x})^T\bm{p}_{\bm *}'(\bar{x})}{2{\bf s_*}(\bar{x})\cdot{\bf p}_*(\bar{x})} \mathrm{d}\bar{x} \right)$$ 
   behaves for appropriate integration limits sufficiently close to {the simple zero $X_*$,} before  returning to considering the WKB solution of \cref{w_sol}.

\begin{proposition} \label{p9} 
 Let $(\lambda, \mu^\pm_\lambda(x))$ be permissible and $\Re(\lambda)\geq 0$ for $x\in(a,b)$. Suppose 
 that a point $X_*\in\{a,b\}$ is a simple  zero of $[\tr( \Bl )]^2-4 \det(\Bl)$,
  with associated admissible neighborhood $\mathcal{N}$, as in \cref{def9}.   Then, if $X_*=a$, and with $x\in \mathcal{N}, ~x>X_*=a$, we have 
\beq \label{sing1} \exp \left(\int_{X_*+\eta}^{x} \frac{\bm{s}_{\bm *}(\bar{x})^T\D^{-1}(\bar{x})\bm{M}(\bar{x})\bm{p}_{\bm *}(\bar{x})+2\bm{s}_{\bm *}(\bar{x})^T\bm{p}_{\bm *}'(\bar{x})}{2{\bf s_*}(\bar{x})\cdot{\bf p}_*(\bar{x})} \mathrm{d}\bar{x} \right) = \mathrm{ord}\left(\frac 1 {\eta^{1/4}}\right),
 \eeq 
as $  \eta\rightarrow 0^+. $
If instead $X_*=b$  with $x\in \mathcal{N}$, i.e.~$x<X_*=b$, we have (noting the sign difference in the exponent)
 \beq\label{sing2} \exp \left(-\int^{X_*-\eta}_{x} \frac{\bm{s}_{\bm *}(\bar{x})^T\D^{-1}(\bar{x})\bm{M}(\bar{x})\bm{p}_{\bm *}(\bar{x})+2\bm{s}_{\bm *}(\bar{x})^T\bm{p}_{\bm *}'(\bar{x})}{2{\bf s_*}(\bar{x})\cdot{\bf p}_*(\bar{x})} \mathrm{d}\bar{x} \right) = \mathrm{ord}\left(\frac 1 {\eta^{1/4}}\right),
 \eeq 
as $  \eta\rightarrow 0^+. $ 
\end{proposition}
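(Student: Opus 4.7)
The plan is to exploit the explicit form of $-\mu_\lambda^\pm \bm{I} + \bm{B}_\lambda$ provided by \cref{p8.55} to expand the unit eigenvectors $\bm{s}_*$ and $\bm{p}_*$ near $X_*$. With the matrix entries written as $c_1,\ c_2,\ \tfrac{c_1(K-c_1)}{c_2},\ -c_1+K$, the right null vector is proportional to $(c_2,-c_1)^T$ and the left null vector to $(c_1-K,c_2)$. After normalising both and expanding in the small quantity $K = \mp A|x-X_*|^{1/2}(1+o(1))$, the inner product evaluates to $\bm{s}_*^T\bm{p}_* = -Kc_2/(r\sqrt{(c_1-K)^2+c_2^2})+O(K^2)$ with $r=\sqrt{c_1^2+c_2^2}$, so $\bm{s}_*^T\bm{p}_*$ vanishes like $|x-X_*|^{1/2}$ as $x \to X_*$ and the denominator in the integrand of \eqref{sing1} is singular at this rate.

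I will then analyse the numerator $N(x) = \bm{s}_*^T \bm{D}^{-1}\bm{M}\bm{p}_* + 2\bm{s}_*^T\bm{p}_*'$ by splitting along the two sub-cases encountered in the proof of \cref{p8.55}. In the generic case $c_2(X_*) \neq 0$ the eigenvectors are locally smooth, so both $\bm{s}_*^T\bm{p}_*'$ and $\bm{s}_*^T\bm{D}^{-1}\bm{M}\bm{p}_*$ are bounded and the integrand is only $\mathcal{O}(|x-X_*|^{-1/2})$. The integral from $X_*+\eta$ to $x$ is then bounded as $\eta \to 0^+$, so the exponential is $\mathcal{O}(1)$, which trivially lies within $\mathrm{ord}(\eta^{-1/4})$.

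The sharp behaviour originates from the degenerate case $c_1(X_*)=c_2(X_*)=0$. Here $\alpha=c_1 \sim \mp A|x-X_*|^{1/2}/2$ by \eqref{muha}, while $\beta=c_2=(\bm{B}_\lambda)_{12}$ is smooth with $\beta(X_*)=0$, and combining the zero-determinant identity $\beta\gamma = c_1(K-c_1) \sim A^2|x-X_*|/4$ with smoothness of $\gamma=(\bm{B}_\lambda)_{21}$ forces $\beta'(X_*)\gamma(X_*)=A^2/4$. A direct computation of the normalised eigenvectors then gives the leading-order expressions $\bm{p}_* \sim (\pm 2\beta'(X_*)|x-X_*|^{1/2}/A,\,1)^T$ and $\bm{s}_*^T \sim (\pm 1,\,2\beta'(X_*)|x-X_*|^{1/2}/A)$, from which $\bm{s}_*^T\bm{p}_* \sim 4\beta'(X_*)|x-X_*|^{1/2}/A$ and $\bm{s}_*^T\bm{p}_*' \sim \beta'(X_*)/(A|x-X_*|^{1/2})$. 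The common factor $\beta'(X_*)/A$ cancels in the ratio, so $\bm{s}_*^T\bm{p}_*'/\bm{s}_*^T\bm{p}_* \sim 1/(4(x-X_*))$ with a \emph{universal} coefficient $1/4$, independent of $A$ and $\beta'(X_*)$. The complementary contribution $\bm{s}_*^T\bm{D}^{-1}\bm{M}\bm{p}_*/(2\bm{s}_*^T\bm{p}_*)$ is only $\mathcal{O}(|x-X_*|^{-1/2})$ and integrates to a bounded quantity. Integrating the leading singular term then yields $\tfrac{1}{4}\log((x-X_*)/\eta)$, whose exponential is $(x-X_*)^{1/4}\eta^{-1/4}$, giving the claimed order.

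The principal obstacle will be the precise asymptotic analysis of the eigenvectors in the degenerate case, since each of the unnormalised left and right eigenvectors has a component that blows up like $|x-X_*|^{-1/2}$ yet normalisation produces finite limits, and keeping leading and sub-leading terms consistent is required to verify the universal $1/4$ coefficient; this relies essentially on the identity $\beta'(X_*)\gamma(X_*)=A^2/4$ dictated by the simple-zero hypothesis. The case $X_*=b$ in \eqref{sing2} is handled identically once the integration limits are reversed: the minus sign prefacing the integral compensates for integrating from $x$ up to $X_*-\eta$, yielding the same $\eta^{-1/4}$ bound.
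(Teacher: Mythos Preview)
Your case split is built on a mistaken premise. You assert that in the ``generic'' case $c_2(X_*)\neq 0$ the eigenvectors are locally smooth and hence $\bm{s}_*^T\bm{p}_*'$ is bounded. But $c_1(x)=(\bm{B}_\lambda)_{11}(x)-\mu_\lambda^\pm(x)$, and $\mu_\lambda^\pm$ itself carries the square-root branch $\pm\tfrac{A}{2}|x-X_*|^{1/2}(1+o(1))$ by the simple-zero hypothesis. Consequently
\[
c_1'(x)=\mp\tfrac{A}{4}\,\sigma_{X_*}\,|x-X_*|^{-1/2}(1+o(1))
\]
in \emph{every} case, so $\bm{p}_*$ is never $C^1$ at $X_*$ and $\bm{s}_*^T\bm{p}_*'$ always blows up like $|x-X_*|^{-1/2}$. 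Dividing by $\bm{s}_*^T\bm{p}_*\sim -Kc_2/(N_1\tilde N_2)\sim |x-X_*|^{1/2}$ gives the universal leading integrand $\tfrac{1}{4}\sigma_{X_*}|x-X_*|^{-1}$, and the $\eta^{-1/4}$ behaviour follows in one stroke. The paper's proof does exactly this: it extracts $c_1'$ as the sole singular derivative, observes that the $\bm D^{-1}\bm M$ term and the normalisation derivative $|\hat{\bm p}|'/|\hat{\bm p}|$ are subdominant, and integrates $\tfrac{1}{4(x-X_*)}$ directly---with no case distinction at all (the split in \cref{p8.55} served only to guarantee $c_2\neq 0$ on $\mathcal{N}\setminus\{X_*\}$, not to feed into the present argument).

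There is also a second issue: you treat $\mathrm{ord}(\eta^{-1/4})$ as a big-$O$ bound and claim that an $O(1)$ quantity ``trivially lies within'' it. In this paper $\mathrm{ord}(\cdot)$ denotes exact asymptotic order (see the footnote in \cref{Model_Sect}), so $O(1)$ would actually contradict the stated conclusion rather than imply it; and the precise order is needed downstream in \cref{p10} to obtain the $\eta^{3/4}$ decay of $\bm{w}_\pm$. Your degenerate-case computation does land on the right coefficient $1/4$, but it is unnecessarily elaborate once you recognise that the same mechanism---the square-root in $\mu_\lambda^\pm$ forcing $c_1'\sim|x-X_*|^{-1/2}$---operates uniformly.
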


\begin{proof}  By \cref{p8.55} we have  
\beq \label{auxmat}  -\mu_\lambda^\pm\bm{I}+ &\Bl \\=&
\begin{pmatrix}
c_1(x) & c_2(x) \\  \frac{c_1(x)}{c_2(x)}\left(-c_1(x) \mp  A |x-X_*|^{1/2}(1+o(1))\right)  & -c_1(x) \mp A |x-X_*|^{1/2}(1+o(1))
\end{pmatrix}, 
\eeq
for $x\in \mathcal{N}/\{X_*\}$, where   $c_2(x)\neq 0$.  

Noting   overall sign choices are  without loss of generality due to the parity of the integrands in Eqs.~\eqref{sing1},~\eqref{sing2}, we have that the unit right zero eigenvector of the matrix \eqref{auxmat} can be written as 
$$ \bm p_* = \frac 1 {N_1} \left(\begin{array}{c}  -c_2 \\ ~~c_1\end{array}\right),  \,\,\,\,\,\, N_1 = (c_1^2+c_2^2)^{1/2}\geq|c_2(x)|>0,$$
where $N_1 $ is a normalization factor. Similarly the left unit eigenvector of matrix \eqref{auxmat} is 
$$ \bm s^T_* = \frac 1 {N_2} \left( \frac 1 {c_2}\left(  c_1 \pm A |x-X_*|^{1/2}(1+o(1)   \right), ~1 \right), $$
where $N_2\geq 1 $ is a normalization factor. 

Further, from the first row of \cref{auxmat},
we have 
$$ c_1(x) = ({\bf B}_\lambda)_{11} -\frac 12 \tr({\bf B}_\lambda) (X_*) \mp \frac A 2 |x-X_*|^{1/2}(1+o(1)), \, \, \, \, c_2(x) = ({\bf B}_\lambda)_{12},$$ 
where $({\bf B}_\lambda)_{11}, ({\bf B}_\lambda)_{12}$ are components of ${\bf B}_\lambda$ and thus smooth under differentiation with respect to $x$. Thus $c_2'(x)$ is bounded on $x\in \mathcal{N}/\{X_*\}$, while 
$$ c_1'(x) =  \mp \frac A 4  \sigma_{X_*}  |x-X_*|^{-1/2}(1+o(1)), \,\,\,\,\,\,\,\,\,\,\,\,
\sigma_{X_*} := \left\{ \begin{array}{rl} 
\,1 & X_*=a \\ -1 & X_*=b\end{array} \right. .$$

To determine a leading order approximation for $\bm p_*'$, the derivative of $\bm p_*$, note that $c_1'(x)$ will have a large derivative in $\mathcal{N}$, giving 
$$ \bm p_*' = \frac 1 {N_1} \left(\begin{array}{c}  0 \\  \mp \frac A 4  \sigma_{X_*}  |x-X_*|^{-1/2}\end{array}\right)(1+o(1)),
$$
as all other derivatives are ord$(1)$, except for the derivative of $c_1$ in the denominator, that is $N_1$. As this is in the denominator however, it will suppress, rather than elevate the asymptotic order of the resulting term. 
{Further note that  $ \bm{s}_{\bm *}(\bar{x}), $  $\bm{p}_{\bm *}(\bar{x}) 
$ are normalized  by construction and thus have components that are $O(1)$ on approaching the singular point. Similarly,  from our assumptions that $\bm D$, $\bm D^{-1}$ and $\bm J$ have smooth and bounded coefficients we also have $\bm{M}(\bar{x})$    has bounded coefficients, that is the coefficients are $O(1)$   on approaching the simple zero, $X_*$. Hence   we  have }
$$ \bm{s}_{\bm *}(\bar{x})^T\D^{-1}(\bar{x})\bm{M}(\bar{x})\bm{p}_{\bm *}(\bar{x}) \sim { O(1)}$$ {and thus}
\beq 
\frac{\bm{s}_{\bm *}(\bar{x})^T\D^{-1}(\bar{x})\bm{M}(\bar{x})\bm{p}_{\bm *}(\bar{x})+2\bm{s}_{\bm *}(\bar{x})^T\bm{p}_{\bm *}'(\bar{x})}{2{\bf s_*}(\bar{x})\cdot{\bf p}_*(\bar{x})} &= 
\frac{\bm{s}_{\bm *}(\bar{x})^T\bm{p}_{\bm *}'(\bar{x})}{{\bf s_*}(\bar{x})\cdot{\bf p}_*(\bar{x})}(1+o(1))
\\ &= \frac 1 4 \sigma_{X_*} \frac 1 {|x-X_*|}(1+o(1))  \, \, \, \, \, \, \, \, \,   \nonumber 
\eeq 
for $x\in \mathcal{N}/\{X_*\}$ {since the dominant contribution in the numerator of the latter relation is from the singularity term scaling with $|x-X_*|^{-1/2}$ within $\bm p'_*$, noting that this vector is {\it not} normalized.}

Hence, fixing  $\eta >0$ sufficiently small, with  $X_*=a$, $x\geq X_*+\eta, ~ x\in \mathcal{N}/\{X_*\}$ we have that the left-hand side of \cref{sing1} reduces to 
\beq \label{singint1}
\exp \left(\int_{X_*+\eta}^{x} \frac 1 {4(\bar{x}-X_*)}(1+o(1)) \mathrm{d}\bar{x} \right) \sim\mathrm{ord}\left(\frac 1 {\eta^{1/4}}\right) 
 , \, \,     
\eeq
 providing the required result on subsequently considering $\eta \rightarrow 0^+.$ Analogously, for sufficiently small, fixed, $\eta >0$ with $X_*=b$, $x\leq X_*-\eta, ~ x\in \mathcal{N}/\{X_*\}$ we have the left-hand side of \cref{sing2} reduces to  \beq \label{singint2}
\exp \left(\int^{X_*-\eta}_{x} \frac 1 {4(X_*-\bar{x})}(1+o(1)) \mathrm{d}\bar{x} \right) \sim\mathrm{ord}\left(\frac 1 {\eta^{1/4}}\right) 
 , \, \,     
\eeq giving the required result on now taking $\eta\rightarrow 0^+.$  
\end{proof}

We can use the results of the above proposition to demonstrate that the behavior of the WKB solution is singular  as
$ [\tr( \Bl )]^2-4 \det(\Bl) $ approaches zero from above, but that the singularity can be regularized with an appropriate  Dirichlet  boundary condition.  Firstly, note the WKB solutions are only defined up to a multiplicative scaling so that on a region $(a,b)$ with $0\leq a< b\leq 1$, we need to assign an overall scale, which we impose at the center of the domain 
$ X_{**} =  (a+b)/2$  by setting 
$$ Q_0^\pm(X_{**}) = \mathrm{ord}(1).$$ Thus for $\eta >0$ sufficiently small we are interested in the behavior of $Q_0$, normalized by {its value at $X_{**}$} as this will dictate the behavior of $\bm w_\pm$, as we now characterize in the following proposition.

\begin{proposition} \label{p10}
Let $(\lambda, \mu^\pm_\lambda(x))$ be permissible and $\Re(\lambda)\geq 0$ for $x\in(a,b)$. Suppose 
 that a point $X_*\in\{a,b\}$ is a simple  zero of $[\tr( \Bl )]^2-4 \det(\Bl)$,
  with associated admissible neighborhood $\mathcal{N}$, as in \cref{def9}. 
Defining  $X_{**}:=(a+b)/2$, $\sigma_{X_*}=1$ if $X_*=a$ and $\sigma_{X_*}=-1$ if $X_*=b$ we have
 $$ \frac{Q_0^\pm(X_*+\sigma_{X_*}\eta)}{Q_0^\pm(X_{**})} \sim \mathrm{ord}\left(\frac 1 {\eta^{1/4}}\right)$$ on   $x\in \mathcal{N}/\{X_*\}$ as $\eta\rightarrow 0^+$, so that from \cref{w_sol} 
 $$ \bm w_\pm (X_*+\sigma_{X_*}\eta) 
= \mathrm{ord} (\eta^{3/4})$$ on $x\in \mathcal{N}/\{X_*\}$ as $\eta\rightarrow 0^+$, 
 providing homogeneous Dirichlet  conditions are imposed at $x=X_*$.
\end{proposition}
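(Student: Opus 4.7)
\textbf{Proof plan for \cref{p10}.} The approach is to combine the explicit representation of $Q_0^\pm$ in \cref{q0eqn} with the singular-integral estimate of \cref{p9}, and then propagate the resulting amplitude bound to $\bm{w}_\pm$ through the sine factor of the WKB solution \cref{w_sol}. The key point is that the $\mathrm{ord}(1/\eta^{1/4})$ blow-up already extracted in \cref{p9} is the only singular contribution to the $Q_0^\pm$ ratio, while the Dirichlet condition at $X_*$ forces the trigonometric factor of \cref{w_sol} to vanish linearly, damping the blow-up down to $\mathrm{ord}(\eta^{3/4})$.

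For the first assertion I would use \cref{q0eqn} directly: normalizing $C_Q^\pm$ so that $Q_0^\pm(X_{**}) = \mathrm{ord}(1)$ yields
\begin{equation*}
\frac{Q_0^\pm(X_*+\sigma_{X_*}\eta)}{Q_0^\pm(X_{**})} = \sqrt{\frac{\phi'_\pm(X_{**})}{\phi'_\pm(X_*+\sigma_{X_*}\eta)}}\,\exp\!\left(-\int_{X_{**}}^{X_*+\sigma_{X_*}\eta} \frac{\bm{s}_{\bm *}^T\D^{-1}\bm{M}\bm{p}_{\bm *}+2\bm{s}_{\bm *}^T\bm{p}_{\bm *}'}{2\bm{s}_{\bm *}^T\bm{p}_{\bm *}}\,\mathrm{d}\bar x\right),
\end{equation*}
in which the dependence on the arbitrary lower limit $b_\pm$ has cancelled. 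By \cref{p3.5,p4} together with \cref{ie33}, $\mu_\lambda^\pm$ is strictly positive and smooth on $(a,b)$, so $\phi'_\pm=\sqrt{\mu_\lambda^\pm}$ is bounded above and below and the square-root prefactor is $\mathrm{ord}(1)$. For the exponential I would split the path from $X_{**}$ to $X_*+\sigma_{X_*}\eta$ into a bounded sub-path ending at a fixed point interior to the admissible neighborhood $\mathcal{N}$, and a singular sub-path whose endpoint approaches $X_*$ as $\eta\to 0^+$. On the bounded sub-path the integrand is continuous with $\bm{s}_{\bm *}^T\bm{p}_{\bm *}\neq 0$ by \cref{nonorth}, so it contributes only an $\mathrm{ord}(1)$ factor. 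The singular sub-path is precisely the configuration treated by \cref{sing1,sing2} in \cref{p9}, delivering the required $\mathrm{ord}(1/\eta^{1/4})$ scaling.

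For the bound on $\bm{w}_\pm$ I would impose the homogeneous Dirichlet condition at $X_*$ by taking $a_\pm=X_*$ and $C_0^\pm=0$ in \cref{w_sol}, so that only the sine branch remains. Since $\sqrt{\mu_\lambda^\pm}$ is bounded on $\mathcal{N}$, the elementary inequality $|\sin y|\leq |y|$ gives
\begin{equation*}
\left|\sin\!\left(\frac{1}{\ep}\int_{X_*}^{X_*+\sigma_{X_*}\eta}\sqrt{\mu_\lambda^\pm(\bar x)}\,\mathrm{d}\bar x\right)\right| = \mathrm{ord}(\eta)
\end{equation*}
as $\eta\to 0^+$ with $\ep$ held fixed. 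The vector $\bm{p}_{\bm *}$ is unit-normalized, and the explicit expressions used in the proof of \cref{p9} show that its components remain $\mathrm{ord}(1)$ on approach to $X_*$; likewise $(\mu_\lambda^\pm)^{-1/4}$ is $\mathrm{ord}(1)$. Multiplying the sine factor $\mathrm{ord}(\eta)$ by the amplitude factor $\mathrm{ord}(1/\eta^{1/4})$ from the first part then yields $\bm{w}_\pm(X_*+\sigma_{X_*}\eta)=\mathrm{ord}(\eta^{3/4})$, which is the claimed bound.

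The main technical subtlety I anticipate lies in the splitting of the exponential integral: one must confirm that the bounded sub-path really contributes only an $\mathrm{ord}(1)$ factor, which requires $\bm{s}_{\bm *}^T\bm{p}_{\bm *}$ to remain bounded away from zero uniformly on compact subsets of $(a,b)\setminus\{X_*\}$. This follows from \cref{nonorth} together with the smoothness of $\bm{s}_{\bm *},\bm{p}_{\bm *}$ off the simple zero $X_*$, but deserves explicit verification since any additional degenerate point in $(a,b)$ would destroy the argument. Beyond this, the proof reduces to assembling the estimates of \cref{p3.5,p4,nonorth,p8.55,p9}.
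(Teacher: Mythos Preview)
Your proposal is correct and follows essentially the same route as the paper: form the $Q_0^\pm$ ratio from \cref{q0eqn}, observe that the $(\mu_\lambda^\pm)^{-1/4}$ prefactor is $\mathrm{ord}(1)$ via \cref{p4}, extract the $\mathrm{ord}(1/\eta^{1/4})$ blow-up of the exponential via \cref{p9}, and then kill it to $\mathrm{ord}(\eta^{3/4})$ with the sine factor forced by the Dirichlet condition at $X_*$. Your explicit splitting of the exponential integral into a compact sub-path (handled by \cref{nonorth}) and a singular sub-path (handled by \cref{p9}) is in fact slightly more careful than the paper, which applies \cref{p9} directly with upper limit $X_{**}$ without commenting on whether $X_{**}\in\mathcal{N}$.
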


\begin{proof}
Let $X_{**}=(a+b)/2$, fix $Q^\pm_0(X_{**})=\mathrm{ord}(1)$ for the overall scaling of the WKB solution and, for convenience, define 
$$ I(\bar{x}) := \frac{\bm{s}_{\bm *}(\bar{x})^T\D^{-1}(\bar{x})\bm{M}(\bar{x})\bm{p}_{\bm *}(\bar{x})+2\bm{s}_{\bm *}(\bar{x})^T\bm{p}_{\bm *}'(\bar{x})}{2{\bf s_*}(\bar{x})\cdot{\bf p}_*(\bar{x})} .$$
Then, noting \cref{q0eqn}, if $X_*=a$ we have 
\beq  \label{Q0a} \frac{Q_0^\pm(X_*+\eta)}{Q_0^\pm(X_{**})} &=&
 \left(  \frac{\mu_\lambda^\pm(X_{**})}{\mu_\lambda^\pm(X_*+\eta)} \right)^{1/4}  
   \exp \left(\int_{X_*+\eta}^{X_{**}}  I(\bar{x}) \mathrm{d}\bar{x} \right)  \sim \mathrm{ord}\left(\frac 1 {{ \eta^{1/4}}}\right). 
 \,\,\,\,
\, \, 
\eeq 
The final observation arises from using a few previous results in combination. Firstly, we have that $\mu_\lambda^\pm$ is bounded away from zero by \cref{p4} and we recall that  $\mu_\lambda^\pm$ is an eigenvalue of $\bm B_\lambda$, given by \cref{ie33}. Then \cref{p9}, which also applies when  $X_*=b$, gives the final result:
\beq  \frac{Q_0^\pm(X_*-\eta)}{Q_0^\pm(X_{**})} =
 \left(  \frac{\mu_\lambda^\pm(X_{**})}{\mu_\lambda^\pm(X_*-\eta)} \right)^{1/4}  
   \exp \left(-\int^{X_*-\eta}_{X_{**}}  I(\bar{x}) \mathrm{d}\bar{x} \right)  \sim \mathrm{ord}\left(\frac 1 {\eta^{1/4}}\right). \,\,\,\,
\, \, \label{eq54}
\eeq 
\alk{To match} Dirichlet  boundary conditions at $x=X_*$ we have that at the point $x=X_*+\sigma_{X_*} \eta$  that \alk{(using the arbitrary constants $S_0^\pm$ and $C_0^\pm$)} double angle formulae can be used to rewrite the trigonometric contribution to $\bm w_\pm$  within the expression of \cref{w_sol}  in the form
\beq \label{triga} \alk{\widehat{S_0^\pm}}\sin\left(\frac{1}{\ep}\int_{X_*}^{x=X_*+\eta} \sqrt{\mu_\lambda^\pm(\bar{x})}d\bar{x} \right)  &=& \mathrm{ord}\left( \frac \eta \ep\right)= \mathrm{ord}\left( \eta \right), \,\,\,\,\,\,\,\, X_*=a  \\ \alk{\widehat{S_0^\pm}}\sin\left(\frac{1}{\ep}\int^{X_*}_{x=X_*-\eta} \sqrt{\mu_\lambda^\pm(\bar{x})}d\bar{x} \right)  &=&  \mathrm{ord}\left( \frac \eta \ep\right)= \mathrm{ord}\left( \eta \right), \,\,\,\,\,\,\,\, X_*=b 
\eeq
as $\eta\rightarrow 0^+$, noting as above that $\mu_\lambda^\pm$ is bounded away from zero\alk{, and where \alk{$\widehat{S_0^\pm}$} is a real constant}. {In particular  the analogous cosine contribution is  set to} zero by the imposition of the Dirichlet  condition at $x=X_*$ and  $\alk{\widehat{S_0^\pm}} \neq 0$ to avoid the trivial solution.  Eqs.~\eqref{Q0a},\eqref{eq54}, and \eqref{triga} then give  
$$ \bm w_\pm (X_*+\sigma_{X_*}\eta) = \mathrm{ord}(\eta^{3/4}).$$ 
\end{proof}
An immediate corollary of \cref{p10} is that if $a=0$ or $b=1$ (but not both), then we need to apply both a homogeneous Dirichlet  condition to maintain boundedness, as above, and homogeneous Neumann conditions to satisfy the zero flux boundary conditions at the edge of the domain. Only the trivial solution is then possible, so we have $a=0$ and $b=1$ are excluded from this class of solutions. 

We further note that dealing with the case $X_*=b$ in addition to $X_*=a$ adds to the detail of the calculations required to deduce the above results, though not the concepts. One could alternatively deduce the result for $X_*=b$ with the approach to the singular point from below, from the result with $X_*=a$ with the approach to the singular point from above as follows. In particular 
with $$  x\rightarrow 1-x , \, \,  \J(x)\rightarrow \J(1-x), \, \,  \D(x)\rightarrow \D(1-x), $$ and noting $\bm M(x), \bm N(x)$ remain smooth and  bounded under this mapping, a singularity approached from below at $X_*=b$ 
will be mapped to a singularity approached from above at $X_*=1-b$, but the scale of the singularity will be invariant. The latter will 
have $\bm w_\pm$ scale with $\eta^{1/4}$, where $x=1-b+\eta$, and hence the singularity approached from below at $X_*=b$ of the original problem will 
also scale with $\eta^{1/4}$. Thus, one may alternatively  only track the calculations for $X_*=a$ above and then infer  the result for $X_*=b$ using this reasoning.

In addition, the following Proposition also demonstrates that if these conditions hold for real $\lambda_* > 0$, then they hold for all non-negative {$\lambda < \lambda_*$.}

\begin{proposition} \label{newp} Assume the homogeneous stability constraints of \cref{hcds} are given. Then, if $\lambda_*>0$ is permissible for  $x \in (a,b)$, all $\lambda \in [0,\lambda_*]$  are permissible for $x \in (a,b)$.
\end{proposition}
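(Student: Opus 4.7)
The plan is to verify that the two conditions of \cref{p4} hold for all $\lambda\in[0,\lambda_*]$ and all $x\in(a,b)$: $\tr(\Bl(x))>0$ and $[\tr(\Bl(x))]^2-4\det(\Bl(x))\geq 0$.

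The trace condition should be straightforward: since $\Bl = \Bz - \lambda \D^{-1}$, the map $\lambda\mapsto\tr(\Bl(x))$ is affine with negative slope $-\tr(\D^{-1}(x))<0$ (using positive-definiteness of $\D^{-1}$), so for $\lambda\leq\lambda_*$ we have $\tr(\Bl(x))\geq \tr(\bm{B}_{\lambda_*}(x))>0$, the final strict inequality being the permissibility hypothesis on $\lambda_*$ via \cref{p4}. Alongside this I will record the identity $\det(\Bl)=\det(\D^{-1})(\lambda^2-\lambda\tr(\J)+\det(\J))$, which is manifestly positive for $\lambda\geq 0$ by \cref{hcds}; this will be needed to pin down the sign of the eigenvalues below.

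The substantial step is the discriminant inequality, which I plan to establish by a geometric argument on the dispersion curve $\det(\J(x)-\mu\D(x)-\lambda\bm{I})=0$ at a fixed $x\in(a,b)$. First I observe that the eigenvalues of $\J(x)-\mu^+_{\lambda_*}(x)\D(x)$ consist of $\lambda_*$ together with $\tr(\J-\mu^+_{\lambda_*}\D)-\lambda_*<0$ (using $\tr(\J)<0$ and $\tr(\D)>0$), so their product $\det(\J-\mu^+_{\lambda_*}\D)$ is strictly negative. The quadratic function $\mu\mapsto\det(\J(x)-\mu\D(x))$ therefore has positive value $\det(\J)>0$ at $\mu=0$, positive leading coefficient $\det(\D)>0$, yet a strictly negative value at $\mu^+_{\lambda_*}(x)>0$; hence it admits exactly two real zeros $0<\mu_1(x)<\mu_2(x)$ bracketing $\mu^+_{\lambda_*}(x)$, with $\det(\J-\mu\D)\leq 0$ on $[\mu_1,\mu_2]$. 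On this closed interval the dispersion discriminant $\tr(\J-\mu\D)^2-4\det(\J-\mu\D)$ is strictly positive (since $\det\leq 0$ and $\tr(\J-\mu\D)<0$), so the upper branch $\lambda^+(\mu):=\tfrac{1}{2}[\tr(\J-\mu\D)+\sqrt{\tr(\J-\mu\D)^2-4\det(\J-\mu\D)}]$ defines a continuous function on $[\mu_1,\mu_2]$, vanishes at the two endpoints (where one eigenvalue of $\J-\mu\D$ is zero and the other is $\tr(\J-\mu\D)<0$), and equals $\lambda_*$ at $\mu^+_{\lambda_*}(x)$. The intermediate value theorem then supplies, for every $\lambda\in[0,\lambda_*]$, a real non-negative $\mu\in[\mu_1,\mu^+_{\lambda_*}(x)]$ with $\lambda^+(\mu)=\lambda$; since $\mu$ being a root of $\det(\J-\lambda\bm{I}-\mu\D)=0$ is equivalent to $\mu$ being an eigenvalue of $\Bl(x)$, this forces the real characteristic polynomial of $\Bl(x)$ to possess a real root, whence non-negative discriminant.

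The step I expect to require the most care is this intermediate-value argument, in particular correctly identifying the interval $[\mu_1(x),\mu_2(x)]$, checking continuity of $\lambda^+$ on the closed interval, and confirming that $\mu^+_{\lambda_*}(x)$ lies strictly inside it. Once this is in hand, the non-triviality condition $\mu^\pm_\lambda(x)\not\equiv 0$ required by \cref{p2} follows immediately because $\det(\Bl)>0$ precludes a zero eigenvalue, and invoking \cref{p4} completes the proof that each $\lambda\in[0,\lambda_*]$ is permissible on $(a,b)$.
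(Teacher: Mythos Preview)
Your proposal is correct, but your route to the discriminant inequality is genuinely different from the paper's. The paper treats $P(\lambda):=[\tr(\Bl)]^2-4\det(\Bl)$ directly as a quadratic in $\lambda$ (at each fixed $x$), computes its leading coefficient $\zeta_2=(\tr\D^{-1})^2-4\det\D^{-1}=(\eta_1-\eta_2)^2\ge 0$, and then shows that $\lambda_*$ must lie on the decreasing branch of this quadratic by comparing with the point $\lambda_1=\tr(\D^{-1}\J)/\tr(\D^{-1})$ where $\tr(\bm B_{\lambda_1})=0$; since $\lambda_*<\lambda_1$ (from $\tr(\bm B_{\lambda_*})>0$) one obtains $P(\lambda)\ge P(\lambda_*)\ge 0$ for $0\le\lambda\le\lambda_*$. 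Your argument instead fixes $x$, exhibits the interval $[\mu_1,\mu_2]$ on which $\det(\J-\mu\D)\le 0$, observes that the upper dispersion branch $\lambda^+(\mu)$ is continuous there with $\lambda^+(\mu_1)=0$ and $\lambda^+(\mu^+_{\lambda_*})=\lambda_*$, and uses the intermediate value theorem to produce a real positive eigenvalue of $\Bl$ for every $\lambda\in[0,\lambda_*]$, forcing the discriminant of $\Bl$ to be non-negative. The paper's algebraic approach is shorter and incidentally delivers the strict inequality $P(\lambda)>P(\lambda_*)$ for $\lambda<\lambda_*$; your dispersion-curve argument is more geometric and ties the result back to the familiar Turing picture of the growth-rate curve $\lambda(\mu)$, at the cost of a little extra setup in checking continuity and the bracketing $\mu_1<\mu^+_{\lambda_*}<\mu_2$. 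Both handle the trace condition identically.
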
 

\begin{proof}  
  Recall that $\bm B_\lambda = \D^{-1}(\J - \lambda \bm I)$ and hence 
$$ \mathrm{tr}(\bm B_\lambda)  = \mathrm{tr}(\D^{-1} \J ) - \lambda \mathrm{tr}(\D^{-1} ), \, \, \, \, \, \, \, \, 
  \mathrm{det}(\bm B_\lambda)  =\mathrm{det}(\D^{-1})(\lambda^2 - \lambda (\mathrm{tr} \J) +  \mathrm{det} (\J)) .$$ 
  Thus, for $\lambda_*>0$ permissible with $\lambda_*> \lambda\geq 0$ we have 
  $$ \mathrm{tr}(\bm B_\lambda) = \mathrm{tr}(\bm B_{\lambda_*}) +(\lambda_*-\lambda) \mathrm{tr}(\D^{-1} ) > \mathrm{tr}(\bm B_{\lambda_*}) > 0, $$ 
  noting $\lambda_*$ is permissible and so 
  $\mathrm{tr}(\bm B_{\lambda_*}) >0$ and 
  $\mathrm{tr}(\D^{-1} )>0$ by positive 
  definiteness of $\D.$ Hence the first 
  condition of \cref{tc233}  is satisfied for $\bm B_\lambda$. 
  
  We also need to demonstrate the second condition of \cref{tc233}  to demonstrate  permissibility by \cref{p4}. 
Hence  we define 
  $$ P(\lambda) := \mathrm{tr}(\bm B_\lambda))^2-4\mathrm{det}(\bm B_\lambda) = \zeta_2 \lambda^2 + \zeta_1\lambda + \zeta_0,$$
  where 
 \beq \nonumber 
 &\zeta_2 = (\mathrm{tr}(\D^{-1}))^2-4\mathrm{det}(\D^{-1}),  \,\,  \,\,  \,
 \zeta_1 = 4 \mathrm{det}(\D^{-1})\mathrm{tr}(\bm  J )
 - 2\mathrm{tr}(\D^{-1} \J )\mathrm{tr}(\D^{-1}  ), \\
 & \textrm{and }\zeta_0 =  
 (\mathrm{tr}(\D^{-1}\J ))^2
- 4\mathrm{det}(\D^{-1}\J).  
 \eeq 
With $\eta_1,\,\eta_2$ the eigenvalues of $\D^{-1}$, which are real and positive  by positive definiteness of $\D$, but  conceivably  repeated, note that 
$$ \zeta_2 = (\eta_1-\eta_2)^2 \geq 0.$$ Furthermore, let 
$$ \lambda_1 = \frac{\mathrm{tr}(\D^{-1}  \J )}{\mathrm{tr}(\D^{-1}  )}, $$ so that $\mathrm{tr}(\bm B_{\lambda_1}  )=0$. Thus  $\lambda_2 \geq  \lambda_1$ implies  $\mathrm{tr}(\bm B_{\lambda_2}  ) \leq 0$, which in turn implies  that  $\lambda_2$ is not permissible by \cref{p4}. Hence $\lambda_*$ permissible gives $\lambda_*<\lambda_1$.  Also, as $\lambda_* >0$, we then have $\lambda_*^2<\lambda_1^2$ and hence
\begin{align*}P(\lambda_*)=&P(\lambda_1) \alk{+(\mathrm{tr}(\D^{-1}\lambda_*-\mathrm{tr}(\D^{-1}\J))^2} +4\mathrm{det}(\D^{-1})( \lambda_1^2-\lambda_*^2) \\& \alk{-4\mathrm{det}(\D^{-1})}\mathrm{tr}( \J )(\lambda_1-\lambda_*) >P(\lambda_1),
\end{align*}
using the homogeneous conditions of \cref{hcds} and the positive definiteness of $\D$. Thus, if $\zeta_2>0$, then $\lambda_*$ is on the decreasing branch of the quadratic $P(\lambda)$, {as is any smaller value of $\lambda$ since $P(\xi) \rightarrow \infty$ as $\xi\rightarrow -\infty$. Alternatively,} if $\zeta_2=0$, 
we have the degenerate linear case for $P(\lambda)$ with a negative gradient. Either way, for $0\leq  \lambda < \lambda_*$ we have 
$$P(\lambda) > P(\lambda_*) \geq 0$$
and the second condition  \cref{tc233}  is satisfied for $\bm B_\lambda$ and we have  permissibility by \cref{p4}. 
 \end{proof}

 \subsection{Localized WKB Solutions}
  
We restrict ourselves to non-negative growth rates,  $\lambda,$ {so that $\Re(\lambda)>0$, while recalling that $\lambda$ does not vary across the domain and must be such that $\mu^\pm_\lambda$ satisfies} the constraint \eqref{eq:FundConstr0001} and thus be permissible. Hence, by \cref{p3} we further have that $\lambda$ is real.  
Given the equivalent conditions of permissibility of Eq.\eqref{tc233} from \cref{p4},  \cref{p10} shows the behavior of the WKB leading order solutions near singular points, while \cref{nonorth} shows how to locate singular points for a given non-negative permissible growth rate, $\lambda$, from $\D, \J.$  These propositions, and thus the form of the resulting solutions, require that 
\beq \label{saz1}  [\tr({\bf B}_\lambda)]^2-4 \det({\bf B}_\lambda) 
\eeq 
only has simple  zeros, if any. 

In general we have considered open intervals $(a,b) \subseteq(0,1)$ where WKB leading order solutions for such $\lambda$ exist, though in general the set on which they exist can be a union of such domains, and all of the above propositions apply. Thus we use  the definition:
\begin{definition}
For $\lambda>0$   we define $\mathcal{T}_\lambda$ to be the closure of the maximal open set where $\lambda$ is permissible. 
\end{definition}
In particular $\mathcal{T}_\lambda$ need not be simply connected. At its edges we also have either homogeneous Neumann conditions from the boundary conditions at $x\in\{0,1\}$ or homogeneous Dirichlet  conditions from \cref{p10} for internal edges, assuming these {correspond to} simple   zeros of \cref{saz1}, noting a bounded non-trivial  WKB leading order solution of the form of \cref{w_sol} is required for the solution to represent an unstable mode. Note that no singularities emerge from the denominator $[\mu_\lambda^\pm]^{1/4}$, as discussed at the start of \cref{pbu}. 
In addition, a final wave selection condition must be imposed, {such as \cref{eq:FundConstr0001} across one of the intervals constituting  $\mathcal{T}_\lambda$ to ensure  the homogeneous edge conditions at the edges of this interval 
can be satisfied (for a suitable choice of sufficiently small  $\ep$ and thus the diffusion scale). The WKB solution on all other intervals constituting $\mathcal{T}_\lambda$ can then be set to the trivial zero solution. In addition, 
outside $\mathcal{T}_\lambda$ }  the lack of permissibility for $\lambda$ entails non-trivial WKB leading order solutions cannot satisfy the waveform selection constraints required to fulfil the homogeneous conditions  at the edges,  $\mathcal{T}_\lambda$,   leaving only the trivial WKB solution for the leading order solution in the complement of $\mathcal{T}_\lambda$. 

Strictly, {such solutions are only outer solutions} and an inner solution would be required to generate a smooth leading order composite solution. However, the outer solutions are sufficient for our purposes.

Our results thus far, for the simple case that $\mathcal{T}_\lambda$ is simply connected,  may be represented by the following theorem for instability, directly analogous to that of \cite{krause_WKB}, but now applicable to the more general cross-diffusion systems of \cref{orig_eqn}-\eqref{BCs}, linearizing to  \cref{linear_RDS}.

\begin{theorem}[$\lambda$-Dependent Heterogeneous Case]\label{heteroturing} Let $\Re(\lambda) > 0$, $0<\ep \ll 1$, and assume that   $$[\tr( \Bl (x))]^2-4 \det( \Bl (x))$$ has no more than two simple  zeros for $x \in [0,1]$, and is positive between these two zeros.  We assume stability to perturbations in the absence of diffusion, i.e.,
\beq \tr(  \J (x)) < 0, ~~ \det( \J (x)) > 0, ~~~ \text{for all }x \in [0,1],
\eeq
is given. Then there exists  non-homogeneous, non-trivial, bounded  perturbations $\bm w$ satisfying \cref{linear_RDS} 
to leading order in $\ep$, which grow as $e^{\lambda t}$ 
in the interval 
$x \in (a(\lambda),b(\lambda))$ if
\beq \label{tchet00}  \tr( {\bf B}_\lambda(x)) > 0, ~~ [\tr( {\bf B}_\lambda(x))]^2-4 \det({\bf B}_\lambda(x)) > 0, ~~~ \text{for all }x \in (a(\lambda),b(\lambda)), \eeq
and $\ep$ satisfies the wave-selection constraint 
\beq\label{hetselect}  \left(n^{\pm}+ \frac{K}{2}\right)\pi \ep = \int_{a(\lambda)}^{b(\lambda)} \sqrt{\mu_{\lambda}^\pm(\bar{x})}\mathrm{d}\bar{x} 
\eeq 
for any natural number $n^\pm>0$, such that
  $$a(\lambda) = \max(0, \min(\{x: [\tr( {\bf B}_\lambda(x))]^2-4 \det({\bf B}_\lambda(x)) =0\})), $$ $$ b(\lambda) = \min(1, \max(\{x: [\tr( {\bf B}_\lambda(x))]^2-4 \det({\bf B}_\lambda(x)) =0\})),$$ and $K=0$ if either $a(\lambda) = 0$ and $b(\lambda)=1,$ or if  $0 <a(\lambda) <b(\lambda)<1;$ otherwise $K=1$. 
\end{theorem}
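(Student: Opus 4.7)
The plan is to assemble the preceding propositions into a constructive argument that produces a WKB mode supported on $(a(\lambda),b(\lambda))$ and extended by zero outside. First, the hypotheses \cref{tchet00} combined with the homogeneous stability \cref{hcds} and \cref{p4} yield permissibility of $\lambda$ throughout $(a(\lambda),b(\lambda))$, so that $\mu_\lambda^\pm$ given by \cref{ie33} is real and strictly positive there. By the assumption that $[\tr(\Bl)]^2-4\det(\Bl)$ has at most two simple zeros in $[0,1]$ and is positive between them, together with the definitions of $a(\lambda)$ and $b(\lambda)$, the interval $(a(\lambda),b(\lambda))$ is maximal for permissibility. On this interval the leading-order WKB formula \cref{w_sol} yields a smooth oscillatory solution for either branch $\pm$, with the amplitude prefactor $(\mu_\lambda^\pm)^{-1/4}$ bounded away from infinity, and with the solvability denominator $\bm s_*^T\bm p_*$ inside the exponential integrand non-vanishing on the open interval by \cref{nonorth}.

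Second, I would match boundary conditions endpoint-by-endpoint. If $a(\lambda)=0$, the physical Neumann condition \cref{BCs} selects the cosine branch of \cref{w_sol} with $a_\pm=0$ at leading order. If instead $0<a(\lambda)$, then $a(\lambda)$ is a simple interior zero of $[\tr(\Bl)]^2-4\det(\Bl)$, and by \cref{p10} a homogeneous Dirichlet condition must be imposed there to regularize the $\eta^{-1/4}$ blow-up of $Q_0^\pm$; this selects the sine branch with $a_\pm=a(\lambda)$ and produces $\w_\pm=\mathrm{ord}(\eta^{3/4})$, smoothly matching the trivial zero outer solution on $[0,a(\lambda)]$. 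The same dichotomy applies at $b(\lambda)$. The wave-selection constraint \cref{hetselect} then follows from imposing both conditions simultaneously: when the two endpoints are of the same type (Neumann--Neumann, occurring only when $a(\lambda)=0$ and $b(\lambda)=1$, or Dirichlet--Dirichlet, occurring when $0<a(\lambda)<b(\lambda)<1$), matching forces $\ep^{-1}\int_{a(\lambda)}^{b(\lambda)}\sqrt{\mu_\lambda^\pm}\,d\bar x$ to be an integer multiple of $\pi$, giving $K=0$; when the endpoints are of opposite type, a $\pi/2$ phase shift between cosine and sine forces a half-integer multiple, giving $K=1$. A countable sequence of admissible $\ep$ exists in either subcase, so one may select $\ep$ sufficiently small that \cref{hetselect} holds for some natural number $n^\pm$.

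The main obstacle is the regularization at an interior singular endpoint, since the naive WKB prefactor in \cref{w_sol} diverges there by \cref{p9}. The resolution, packaged in \cref{p10}, is that imposing the Dirichlet condition cancels the $\eta^{-1/4}$ blow-up of $Q_0^\pm$ against the $\mathrm{ord}(\eta)$ vanishing of the sine factor, so that $\w_\pm$ decays like $\eta^{3/4}$ on approach and the constructed mode remains bounded, continuous at leading order with the zero outer solution. The remaining routine check is that the mode is non-trivial, which follows because the wave-selection constraint ensures the trigonometric factor is not identically zero on $(a(\lambda),b(\lambda))$; the exponential temporal growth $e^{\lambda t}$ is then built in by the separated ansatz $\w=e^{\lambda t}\q(x)$ that produced \cref{q_eqn}. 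Strictly the construction is an outer solution in the sense of matched asymptotics, and an inner layer near each interior Dirichlet endpoint would be needed to produce a smooth composite, but the outer solution suffices for the statement of the theorem.
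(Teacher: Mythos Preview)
Your proposal is correct and follows essentially the same constructive approach as the paper: both establish permissibility on $(a(\lambda),b(\lambda))$ via \cref{p4}, invoke \cref{p10} to regularize the WKB solution at interior singular endpoints by imposing Dirichlet conditions, distinguish the three endpoint configurations (Neumann--Neumann, Dirichlet--Dirichlet, mixed) to derive the wave-selection constraint with the appropriate $K$, and extend by zero outside $\mathcal{T}_\lambda$. The paper additionally invokes \cref{p3} explicitly to reduce to real $\lambda$ and writes out the three explicit solution formulae \cref{eq:PiecewiseWKBsolu}--\cref{eq:PiecewiseWKBsolu_interval}, but these are presentational rather than substantive differences.
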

\begin{proof}
We have $\mathcal{T}_\lambda =(a(\lambda), b(\lambda))$. Also  $\mu_\lambda^\pm$ is real and permissible {on $\mathcal{T}_\lambda$ by Propositions \ref{p2}, \ref{p4} while  \cref{ie33} and  conditions \eqref{tchet00} also give $\mu_\lambda^\pm$ is positive on $\mathcal{T}_\lambda$. 
Further, by \cref{p3}, we have no loss of generality in specializing to strictly real $\lambda$. Finally, by} \cref{p10}, we have that the WKB solutions can be bounded at any interior edges of $\mathcal{T}_\lambda$ by homogeneous Dirichlet conditions at the interior edges of $\mathcal{T}_\lambda$, and the boundary conditions require homogeneous Neumann conditions. 

The permissibility of $\mu_\lambda^\pm$ on $\mathcal{T}_\lambda$ allows 
non-trivial leading-order WKB solutions on $\mathcal{T}_\lambda$. Conversely, noting the definition of  permissibility, its absence  on the complement of $\mathcal{T}_\lambda$ implies that the wave selection constraint cannot be satisfied and thus there is only the trivial WKB solution on the complement of $\mathcal{T}_\lambda$. Thus we have the following solutions, classified by the possible forms of $\mathcal{T}_\lambda$,  given  $$[\tr( \Bl (x))]^2-4 \det( \Bl (x))$$ has no more than two simple  zeros for $x \in [0,1]$: 

\begin{itemize}
\item no singular points, so $\mathcal{T}_\lambda=[0,1]$ and the solution is
  \begin{subequations} 
\beq 
{\bf w}^\pm(x,t)  &= e^{\lambda t} \exp\left(-\int_{0}^x\frac{\bm{s}_{\bm *}^\pm(\bar{x})^T\D^{-1}(\bar{x})\bm{M}(\bar{x})\ps(\bar{x})+2\bm{s}_{\bm *}^\pm(\bar{x})^T(\ps)'(\bar{x})}{2\bm{s}_{\bm *}^\pm(\bar{x})^T\ps(\bar{x})}d\bar{x}\right) \nonumber 
\\ \label{eq:PiecewiseWKBsolu} &\times
\frac{C_{0}^{{\pm}}}{[\mu_\lambda^\pm(x)]^{1/4}} 
\cos\left(\frac 1\ep  \int_0^x \sqrt{\mu_\lambda^\pm(\bar{x})}\mathrm{d}\bar{x}\right) 
{\bf p}_*(x), \eeq 
with 
\beq
\int_0^1 \sqrt{\mu_\lambda^\pm(\bar{x})}  \mathrm{d}\bar{x} \,\,\,\,  &=&    n^\pm\pi\ep;
\eeq
\item one singular point $x_*(\lambda)>0$, so without loss of generality, $(x_*,1)= \mathcal{T}_\lambda$, with solution
  \beq
    {\bf w}^\pm(x,t)    &= e^{\lambda t} \exp\left(\int_{x}^1\frac{\bm{s}_{\bm *}^\pm(\bar{x})^T\D^{-1}(\bar{x})\bm{M}(\bar{x})\ps(\bar{x})+2\bm{s}_{\bm *}^\pm(\bar{x})^T(\ps)'(\bar{x})}{2\bm{s}_{\bm *}^\pm(\bar{x})^T\ps(\bar{x})}d\bar{x}\right) \hspace*{6mm} \nonumber 
\\ 
    \label{eq:PiecewiseWKBsolu_right} &\times  \frac{S_{0}^{{ \pm}}}{[\mu_{\lambda}^\pm(x)]^{1/4}} 
\sin\left(\frac 1\ep  \int_{x_*}^x \sqrt{\mu_{\lambda }^\pm(\bar{x})}\mathrm{d}\bar{x}\right) 
{\bf p}_{*\pm}(x), 
\eeq 
with 
\beq
 \int_{x_*}^1 \sqrt{\mu_{\lambda }^\pm(\bar{x})}\mathrm{d}\bar{x} \,\,\,\, &=&   \left(n^\pm+\frac{1}{2}\right)\pi\ep,
\eeq
for $x \in \mathcal{T}_\lambda$, and zero otherwise;
\item
  two singular points $x_*(\lambda),~x_{**}(\lambda)\in(0,1)$ delimiting the $\mathcal{T}_\lambda$ set, i.e. $\mathcal{T}_\lambda=(x_*,~x_{**})$, with solution
  \beq\nonumber 
    {\bf w}^\pm(x,t) &=  e^{\lambda t} \exp\left(\int_{x_*(\lambda)}^{x_{**}(\lambda)}\frac{\bm{s}_{\bm *}^\pm(\bar{x})^T\D^{-1}(\bar{x})\bm{M}(\bar{x})\ps(\bar{x})+2\bm{s}_{\bm *}^\pm(\bar{x})^T(\ps)'(\bar{x})}{2\bm{s}_{\bm *}^\pm(\bar{x})^T\ps(\bar{x})}d\bar{x}\right)
    \\   \label{eq:PiecewiseWKBsolu_interval} &\times 
 \frac{S_{0}^{{\pm}}}{[\mu_{\lambda}^\pm(x)]^{1/4}}
\sin\left(\frac 1\ep  \int_x^{x_{**}} \sqrt{\mu_{\lambda}^\pm(\bar{x})}\mathrm{d}\bar{x}\right) 
{\bf p}_{*\pm}, 
\eeq 
with 
\beq
 \int_{x_*(\lambda)}^{x_{**}(\lambda)} \sqrt{\mu_{\lambda }^\pm(\bar{x})}\mathrm{d}\bar{x} \,\,\,\, &=&  n^\pm \pi\ep,
\eeq
for $x \in \mathcal{T}_\lambda$, and zero otherwise. \label{eq:PiecewiseWKBsolu_full}
\end{subequations}
\end{itemize}
We note that requiring $\ep$ to satisfy the wave selection constraint ensures that the boundary conditions are satisfied, and hence  the above expressions constructively give the non-trivial and bounded  solutions required for the proof. 
\end{proof}

{The generalization of \cref{heteroturing} to more than two  singular points  and non-simply connected $\mathcal{T}_\lambda$ is directly analogous, except for the wave selection criteria.  One has two degrees of freedom, $\lambda$ and $\ep$, so it should be feasible to simultaneously satisfy two wave selection constraints from distinct simply connected intervals $\mathcal{T}_\lambda$. However,  the ability to satisfy higher numbers of constraints simultaneously is unclear, potentially limiting the number of distinct simply connected regions where a single WKB solution has support (though of course the full solution to \cref{linear_RDS} will involve a superposition of different modes on different regions $\mathcal{T}_\lambda$). One final important result} is  the  monotonicity of $\mathcal{T}_\lambda$ with respect to $\lambda$, which follows in exactly the same way as presented in \cite{krause_WKB}, which we repeat here for completeness:

\begin{proposition} \label{c1} 
 If $\mathcal{T}_{\lambda_2}\neq\emptyset$ and $0\leq \lambda_1\leq\lambda_2$ then $\mathcal{T}_{\lambda_2}\subseteq \mathcal{T}_{\lambda_1}$. If $\mathcal{T}_{\lambda_1}\neq [0,1]$, and ${0\leq}\lambda_1<\lambda_2$, then we have the stricter inclusion $\mathcal{T}_{\lambda_2}\subset \mathcal{T}_{\lambda_1}$.
\end{proposition}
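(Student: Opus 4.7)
The plan is to lift the pointwise monotonicity already established in Proposition \ref{newp} to an inclusion of the sets $\mathcal{T}_\lambda$, then strengthen it to strict inclusion using the strict $\lambda$-dependence visible in that same proof. For the first claim, I would take an arbitrary $x$ in the maximal open set on which $\lambda_2$ is permissible and apply Proposition \ref{newp} with $\lambda_* = \lambda_2$; this gives that every $\lambda \in [0,\lambda_2]$, and in particular $\lambda_1$, is permissible at $x$. Hence $x$ lies in the maximal open set of permissibility associated with $\lambda_1$, and taking closures (which preserves set inclusion) yields $\mathcal{T}_{\lambda_2} \subseteq \mathcal{T}_{\lambda_1}$.

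For the strict inclusion, assume $\mathcal{T}_{\lambda_1} \neq [0,1]$ and $\lambda_1 < \lambda_2$. If $\mathcal{T}_{\lambda_2} = \emptyset$ while $\mathcal{T}_{\lambda_1}$ is nonempty, strictness is immediate; otherwise I would pick a boundary point $X_* \in \partial \mathcal{T}_{\lambda_1}$ at which one of the inequalities \eqref{tc233} holds with equality for $\lambda_1$, and use the simple-zero hypothesis so that the relevant quantity crosses zero transversally in $x$. I would then appeal to the strict $\lambda$-monotonicity made explicit in the proof of Proposition \ref{newp}: $\mathrm{tr}(\bm{B}_\lambda(X_*))$ is linear in $\lambda$ with slope $-\mathrm{tr}(\bm{D}^{-1}(X_*)) < 0$, and the discriminant $P_{X_*}(\lambda) = [\mathrm{tr}(\bm{B}_\lambda(X_*))]^2 - 4\det(\bm{B}_\lambda(X_*))$ is strictly decreasing on the $\lambda$-range of interest, as shown in the analysis of that proof. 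Combining the strict monotonicity in $\lambda$ with the transversal vanishing in $x$ at $X_*$, the boundary of the permissibility region for $\lambda_2$ near $X_*$ is displaced by a positive distance into $\mathcal{T}_{\lambda_1}$, producing a nonempty sub-interval of $\mathcal{T}_{\lambda_1} \setminus \mathcal{T}_{\lambda_2}$ and hence strict inclusion.

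The main obstacle I anticipate is the bookkeeping associated with a careful case split at $X_*$: whether it lies in the interior of $[0,1]$ or at one of the endpoints $\{0,1\}$, whether the failure for $\lambda_1$ is in the trace condition or in the discriminant, and whether $\mathcal{T}_{\lambda_1}$ has several connected components (so that multiple boundary points must be considered). In every case, however, the strict $\lambda$-monotonicity from Proposition \ref{newp} together with the simple-zero transversality assumption drives the argument uniformly, so the proof should parallel the one in \cite{krause_WKB} without new conceptual input beyond the monotonicity already proved.
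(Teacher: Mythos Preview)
Your proposal is correct and follows essentially the same route as the paper: the first inclusion comes directly from \cref{newp}, and the strict inclusion comes from the strict $\lambda$-monotonicity of the discriminant $[\tr(\Bl)]^2-4\det(\Bl)$ combined with the simple-zero (transversality) hypothesis at interior edges of $\mathcal{T}_{\lambda_1}$. One simplification you can make: your anticipated case split on ``trace versus discriminant'' at $X_*$ is unnecessary, since $\tr(\Bl(X_*))=0$ would force the discriminant to equal $-4\det(\Bl(X_*))<0$ by \cref{p3.5}, so only the discriminant can be the binding constraint at an interior boundary point---the paper exploits this by computing directly $\partial_\lambda\bigl[(\tr\Bl)^2-4\det(\Bl)\bigr]=-2\tr(\Bl)\tr(\D^{-1})+4\tr(\J_\lambda)\det(\D^{-1})<0$, which is the same monotonicity you cite from \cref{newp}.
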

\begin{proof} 
The first part of this result, for $0\leq \lambda_1\leq\lambda_2$, follows from \cref{newp}. Hence we need to show that if $\lambda_1 < \lambda_2$, then $\mathcal{T}_{\lambda_2}$ is a strict subset of $  \mathcal{T}_{\lambda_1}$ {if $\mathcal{T}_{\lambda_1}\neq [0,1].$} We note that   the internal edges of $\mathcal{T}_\lambda$   are zeros with respect to  $x$ of $$(\tr{\bf B}_\lambda(x))^2-4\det({\bf B}_\lambda(x)).$$ Also, differentiating  with respect to $\lambda$ for $\lambda \geq 0$ gives 
\begin{equation}\label{nmon}
    \frac{\partial}{\partial \lambda} \left[(\tr{\bf B}_\lambda(x))^2-4\det({\bf B}_\lambda(x))\right] = 
    { -2 \tr({\bf B}_\lambda) \tr( \D ^{-1})+4 \tr( \J _\lambda) \det ( \D ^{-1})<0,}
\end{equation}
which follows by determining the sign of each term (all traces and determinants are positive except for $\tr( \J _\lambda)< \tr( \J) <0$). 
As $(\tr{\bf B}_\lambda(x))^2-4\det({\bf B}_\lambda(x)) > 0$ for $x$ in the interior of $\mathcal{T}_\lambda$,  reducing $\lambda_2$
increases the value of 
\begin{equation}\label{b0l2} \tr({\bf B}_{\lambda_2}(x))^2-4\det({\bf B}_{\lambda_2}(x))
\end{equation} 
at any given point $x$. Hence if an open simply connected region within $\mathcal{T}_{\lambda_2}$ is given by 
$a(\lambda_2) < x < b(\lambda_2)$, we have that if $0\leq\lambda_1<\lambda_2$ and { $a(\lambda_2) > 0$, so that the edge point is in the interior of the domain and thus a zero of Eq.~\eqref{b0l2},}  then $a(\lambda_1) < a(\lambda_2)$. Similarly, once more given  the above ordering of $\lambda_1, \, \lambda_2$, if $b(\lambda_2) < 1$ then $b(\lambda_1) > b(\lambda_2)$, so the strict inclusion $\mathcal{T}_{\lambda_2}\subset \mathcal{T}_{\lambda_1}$ for $\mathcal{T}_{\lambda_1}\neq [0,1]$ follows.
\end{proof}
 
 \subsection{\cref{het_prop}:}

 We are now in a position to prove \cref{het_prop}:

\begin{proof}
By \cref{p3}, we can specialize to non-negative growth rates, $\lambda$ that are real. We consider $\mathcal{T}_0$, which by the  conditions, \cref{hcds}, \eqref{het_instab_cond} given in the theorem is non-empty, and by the inclusion result, \cref{c1}, contains $\mathcal{T}_\lambda$ for any non-negative $\lambda$. Continuity ensures that there exists sufficiently small $\lambda>0$ such that $\mathcal{T}_\lambda$ is non-empty and the simple zeros of $\mathcal{T}_0$ continuously map to simple zeros of $\mathcal{T}_\lambda$ as $\lambda$ is continuously shifted away from zero. 

If we further have the limitation that $$[\tr( \Bl (x))]^2-4 \det( \Bl (x))$$ has no more than two simple  zeros for $x \in [0,1]$  with $\mathcal{T}_\lambda$
simply connected for any such $\lambda$ then \cref{heteroturing} gives  leading order (in $\ep$)   WKB solutions, $\bm w_\pm$ of \cref{eq:PiecewiseWKBsolu_full}, that satisfy the requirements of   \cref{het_prop}. In particular, they are  non-trivial, inhomogeneous, bounded and will drive an instability on $\mathcal{T}_\lambda \subseteq \mathcal{T}_0$ providing the wave selection criterion, as given by 
\cref{hetselect}, can be satisfied. In particular, noting \cref{eps} and \cref{hetselect}, continuously reducing the diffusion scale $D$ to be sufficiently small and thus continuously reducing $\ep$,  will locate  a value of $D$ for which the wave selection criterion can be satisfied. 

For the more general case where $$[\tr( \Bl (x))]^2-4 \det( \Bl (x))$$ can have more than two simple  zeros for $x \in [0,1]$ and/or  $\mathcal{T}_\lambda$ is not 
necessarily simply connected, we can similarly construct leading order   WKB solutions  of the form of $\bm w_\pm$ in \cref{eq:PiecewiseWKBsolu_full}, with a  wave selection constraint of the form of \cref{hetselect} for  each simply connected region making up $\mathcal{T}_\lambda.$ One, of many possible, wave selection criteria for this solution then is the constraint of satisfying the wave selection criterion on one, {and only one,} of the simply connected regions making up $\mathcal{T}_\lambda$. As this is, once more,  a single simply connected region it  can be achieved by  
continuously reducing the diffusion scale $\ep$ to be sufficiently small, as above. 
An associated leading order WKB solution can be constructed by taking it to be trivial on the other simply connected regions constituting $\mathcal{T}_\lambda$ and of the form of 
$\bm w_\pm$ in  \cref{eq:PiecewiseWKBsolu_full}
 for  the simply connected region where the wave selection criterion is satisfied. This provides  leading order WKB  solutions with the appropriate properties of being  non-trivial, inhomogeneous and bounded, as   required to demonstrate the theorem. 
\end{proof}

Note that the conditions of   \cref{het_prop}, with the homogeneous stability condition of \cref{hcds} for $x\in[0,1]$ 
and the inhomogeneous condition of \cref{het_instab_cond} on at least a subset of  $x\in[0,1]$, directly generalize both the homogeneous conditions of \cref{hom_prop} and the {reaction-diffusion} system without cross-diffusion studied in \cite{krause_WKB}. Thus Turing instability conditions generalize to inhomogeneous cross-diffusion systems, of which chemotaxis is a special case, with the very weak additional requirement that any zeros of 
$[\tr( \Bz (x))]^2-4 \det( \Bz (x))$ are simple, noting  zeros are excluded from the homogeneous case by \cref{het_instab_cond}.

Furthermore, 
by inspection of \cref{hetselect},
there are countably infinite values of the diffusion scale, $D$, or equivalently $\ep$, where the wave selection criterion is satisfied. One may also  trivially note from the proof of \cref{het_prop} that when $\mathcal{T}_0$ is more complex, in particular not simply connected,  there is a collection of WKB solutions of the form of  \cref{eq:PiecewiseWKBsolu_full} 
 for  each   simply connected region constituting   $\mathcal{T}_\lambda$ and trivial elsewhere, generating multiple leading order WKB solutions. 
Once we have that regions given by the zeros of $[\tr( \Bz (x))]^2-4 \det( \Bz (x))$ are not simply connected, delimiting  when the wave selection criteria on more than one simply connected constituent of $\mathcal{T}_\lambda$ can be {\it simultaneously} satisfied requires quantitative knowledge of the eigenvalues $\mu_\lambda^\pm$, which we have not considered. {However, we still anticipate being able to satisfy wave selection criteria on two simply connected constituents of $\mathcal{T}_\lambda$ as there are two degrees of freedom, namely $\lambda$ and, for example via changes in the diffusion scale, $\ep$, that can be adjusted. More generally, this demonstrates the  prospect} of an even  more extensive  class of leading order WKB solutions,  which have support on more than one simply connected domain and  highlights rich possibilities of unstable leading order WKB solutions for more complex $\mathcal{T}_\lambda$. 
Such possibilities, as well as a more general study of these systems and the instability conditions, will be numerically explored in the following sections, where we explicitly 
demonstrate that heterogeneity can serve to localize patterns to regions predicted by the simple inequalities in \cref{het_prop}.

\bk 
\section{Numerical Results}\label{Numerical_Sect}
We {have}  built a general MATLAB code to solve \eqref{orig_eqn} numerically, as well as to compute the boundaries of $\mathcal{T}_0$ where we expect to find deviations from our approximate steady state given by $\f(\us,x)=\bm{0}$. The code implements a method-of-lines discretization of the cross-diffusion terms given by
\begin{equation}
    \pd{}{x}\left (D(u,v,x)\pd{u}{x} \right ) \approx \frac{(u_{i+1}-u_i)(D_{i+1}+D_i)-(u_i-u_{i-1})(D_i+D_{i-1})}{2\alk{\delta} x^2},
\end{equation}
where $D_i = D(u_i,v_i,x_i)$ with the subscript indicating evaluation at the $i$th grid point, and $\alk{\delta} x = 1/(N-1)$ with $N$ denoting the number of grid points. The resulting system of ODEs is integrated using the MATLAB function \emph{ode15s}, which implements a variable-step, variable-order solver \cite{Shampine1997}. Relative and absolute tolerances are taken to be $10^{-11}$ and a Jacobian sparsity pattern was used to speed up timestepping. Unless otherwise noted, $N=10^4$ equispaced grid points are used. Convergence checks in the number of grid points and the maximum timestep were used for selected simulations to ensure soundness of the method. In all cases we used initial data of the form $u(0,x) = u^*(x)\xi_u(x)$, $v(0,x) = v^*(x)\xi_v(x)$, with $\xi_u,\xi_v$ being normal random variables with mean 1 and standard deviation $0.1$ independently and identically at each spatial point. The code can be found at \cite{krause_git}, and we encourage the interested reader to explore other systems and parameter regimes than what we report here.

To demonstrate the generality of our theory, we study three examples of cross-diffusion and its impact on pattern formation. Firstly we consider the Schnakenberg model \cite{schnakenberg1979simple,murray2004mathematical} with linear cross-diffusion terms \cite{gambino2016super},
\beq
\begin{aligned}\label{Sch_eqns}
    \pd{u}{t} =& \ep^2 \pd{}{x}\left (D_{11}(x)\pd{u}{x}+D_{12}(x)\pd{v}{x} \right ) + a(x)-u+u^2v,\\
    \pd{v}{t} =& \ep^2 \pd{}{x}\left (D_{21}(x)\pd{u}{x}+D_{22}(x)\pd{v}{x} \right ) + b(x)-u^2v,
    \end{aligned}
\eeq
where we assume $a(x)>0, b(x)>0$ for all $x \in [0,1]$.  The approximate steady state is given by $u^*(x) =a(x)+b(x),$ $v^*(x) = b(x)/(a(x)+b(x))^2$. 

We also consider a version of the Keller-Segel model of chemotaxis \cite{keller1971model,horstmann20031970,murray2004mathematical,hillen2009user}, noting that spatially heterogeneous variants have been studied in \cite{yan2021keller} and references therein. Accounting for logistic cell growth and linear chemoattractant dynamics, the model is given by
\beq
\begin{aligned}\label{KS_eqns}
    \pd{u}{t} =& \ep^2 \pd{}{x}\left (D_{11}(x)\pd{u}{x}-\chi(x) u\pd{v}{x} \right ) + u\left(1-\frac{u}{K(x)}\right),\\
    \pd{v}{t} =& \ep^2 \pd{}{x}\left (D_{22}(x)\pd{v}{x} \right ) + h(x)u-v,
    \end{aligned}
\eeq
where we assume $K(x)>0, h(x)>0$ and $\chi(x)>0$ for all $x \in [0,1]$. The approximate steady state is given by $u^*(x) = K(x)$, $v^*(x) = h(x)K(x)$.

Finally we explore the classical Shigesada-Kawasaki-Teramoto (SKT) model \cite{shigesada1979spatial,okubo2001diffusion, le2005regularity,choi2004existence,ruiz2013mathematical}, with spatially heterogeneous variations studied in \cite{kuto2009stability} and elsewhere.  After a suitable rescaling, this model can be written\footnote{Starting from the original SKT model with a heterogeneous diffusion tensor, one may also obtain an `advection' term involving first-order derivatives in $u$ and $v$. For simplicity we will neglect these terms, as they do not arise in almost all other work on the SKT model when the diffusion tensor does not depend on $x$.} in a cross-diffusion form given by
\beq
\begin{aligned}\label{SKT_eqns}
    \pd{u}{t} =& \ep^2 \pd{}{x}\left ((d_1(x)+d_{11}(x)u+d_{12}(x)v)\pd{u}{x}+d_{12}(x)u\pd{v}{x} \right ) + r_1(x)u(1-a_1(x)u-b_1(x)v),\\
    \pd{v}{t} =& \ep^2 \pd{}{x}\left (d_{21}(x)v\pd{u}{x}+(d_2(x)+d_{21}(x)u+d_{22}(x)v)\pd{v}{x} \right ) + r_2(x)v(1-b_2(x)u-a_2(x)v),
    \end{aligned}
\eeq
    where we assume that the kinetic functions $r_i(x)>0, a_i(x)>0$, and $b_i(x)>0$ for all $x\in [0,1]$. We will also assume that $a_1(x)>b_2(x)$, and $a_2(x)>b_1(x)$ to ensure the feasibility and stability (in the absence of transport) of an approximate coexistence equilibrium given by $u^*(x) = (a_2(x)-b_1(x))/(a_1(x)a_2(x)-b_1(x)b_2(x))$, $v^*(x) = (a_1(x)-b_2(x))/(a_1(x)a_2(x)-b_1(x)b_2(x))$. 
    
    In all cases we assume that the kinetic parameters are chosen so that our approximate heterogeneous steady states, $\us$, satisfy the boundary conditions \eqref{BCs}. This is not strictly needed for the parameters $r_i$ in the SKT model, as these do not enter into our approximate steady states. We do not make any explicit restrictions on the cross-diffusion parameters, except that $\D$ must remain positive-definite. We note, however, that the existence and regularity theory for these systems is much more intricate than for simpler reaction-diffusion models, with blowup and singularities having a significant literature \cite{choi2004existence,le2005regularity, kuto2009stability, yan2021keller}; see \cite{lankeit2020facing} for an introductory review to these complexities and their analysis. 
    
\begin{figure}
    \centering
    \subfloat[$\ep=0.006$]{\includegraphics[width=0.49\textwidth]{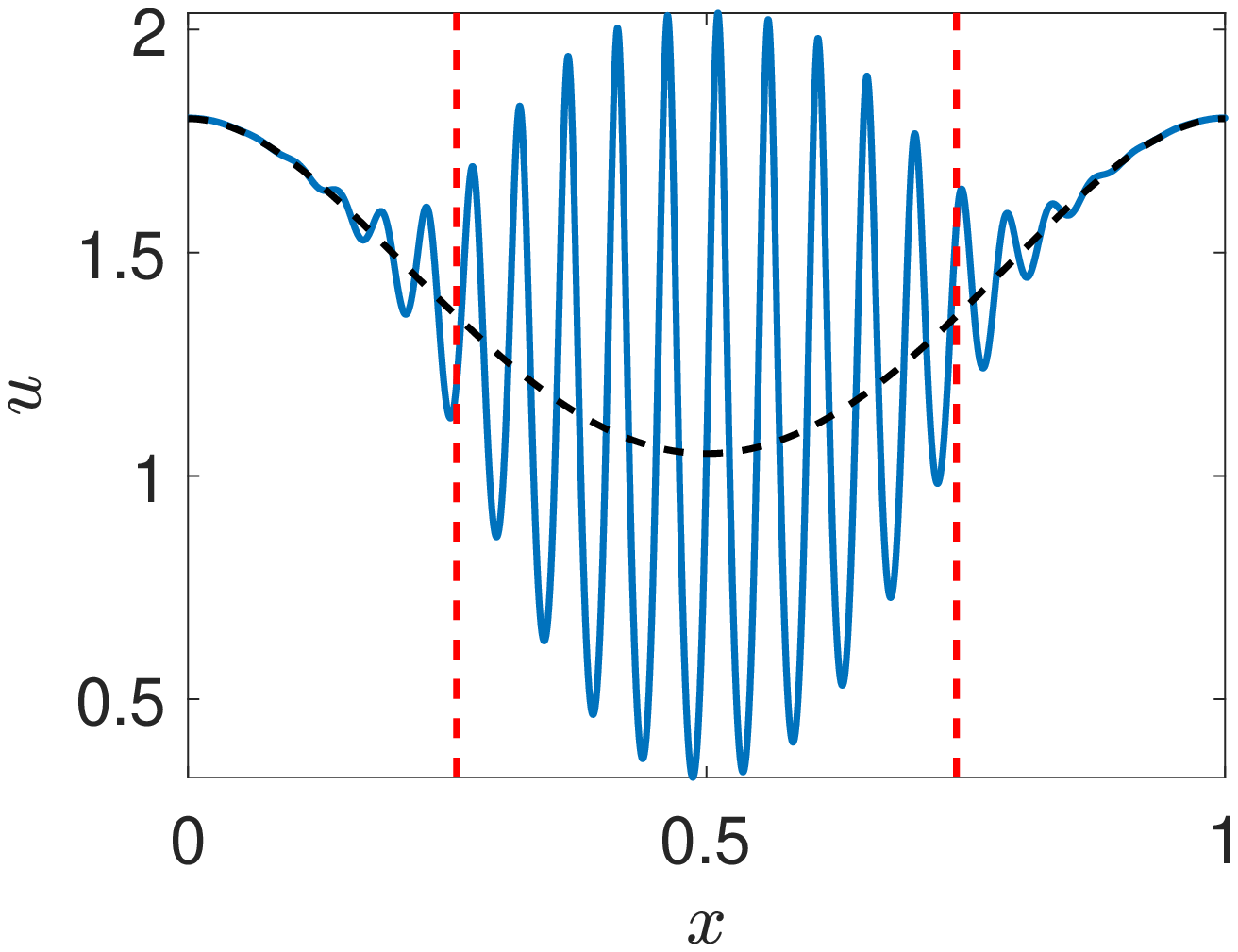}}
    \hspace{0.2cm}\subfloat[$\ep=0.003$]{\includegraphics[width=0.49\textwidth]{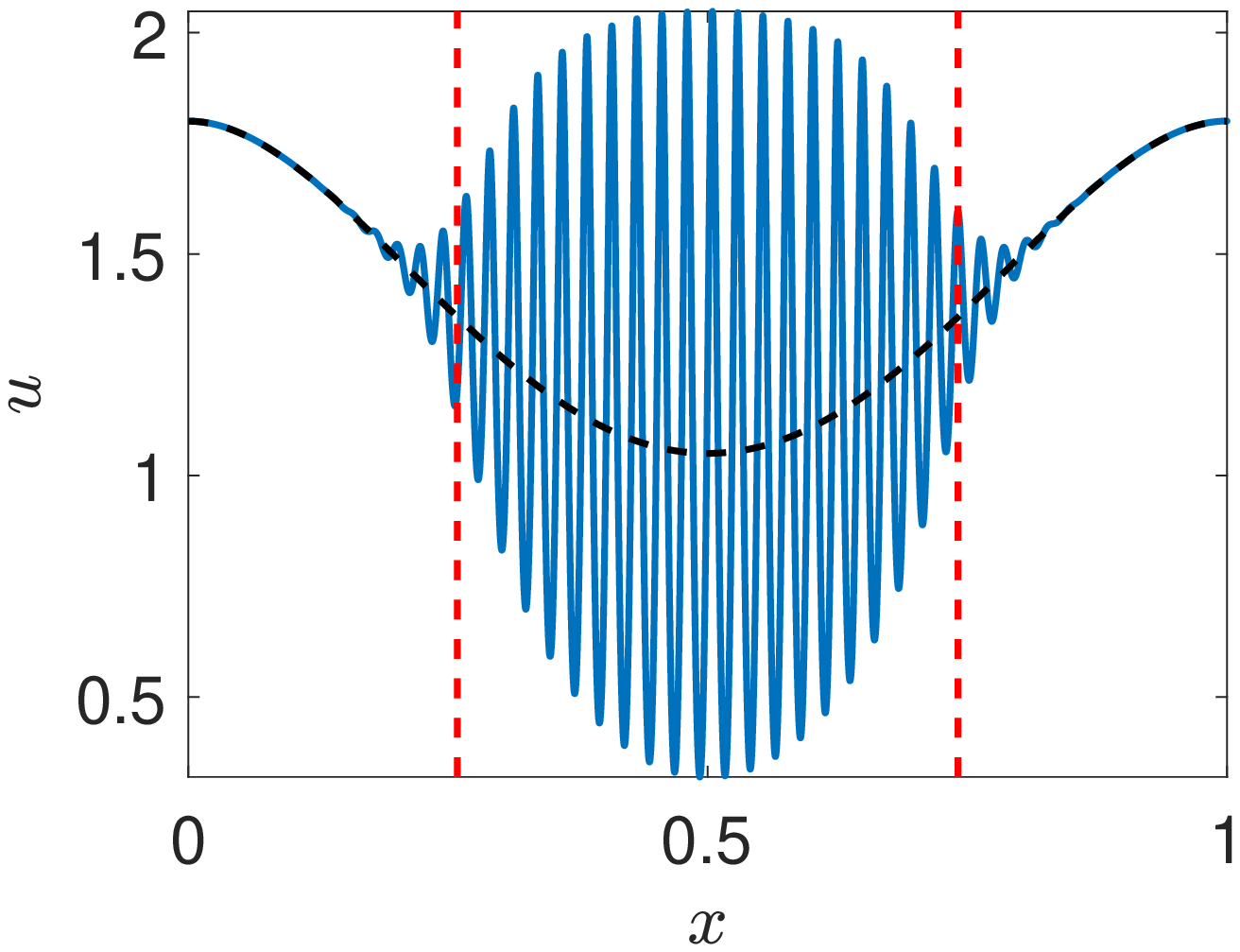}}
        
    \subfloat[$\ep=0.001$]{\includegraphics[width=0.49\textwidth]{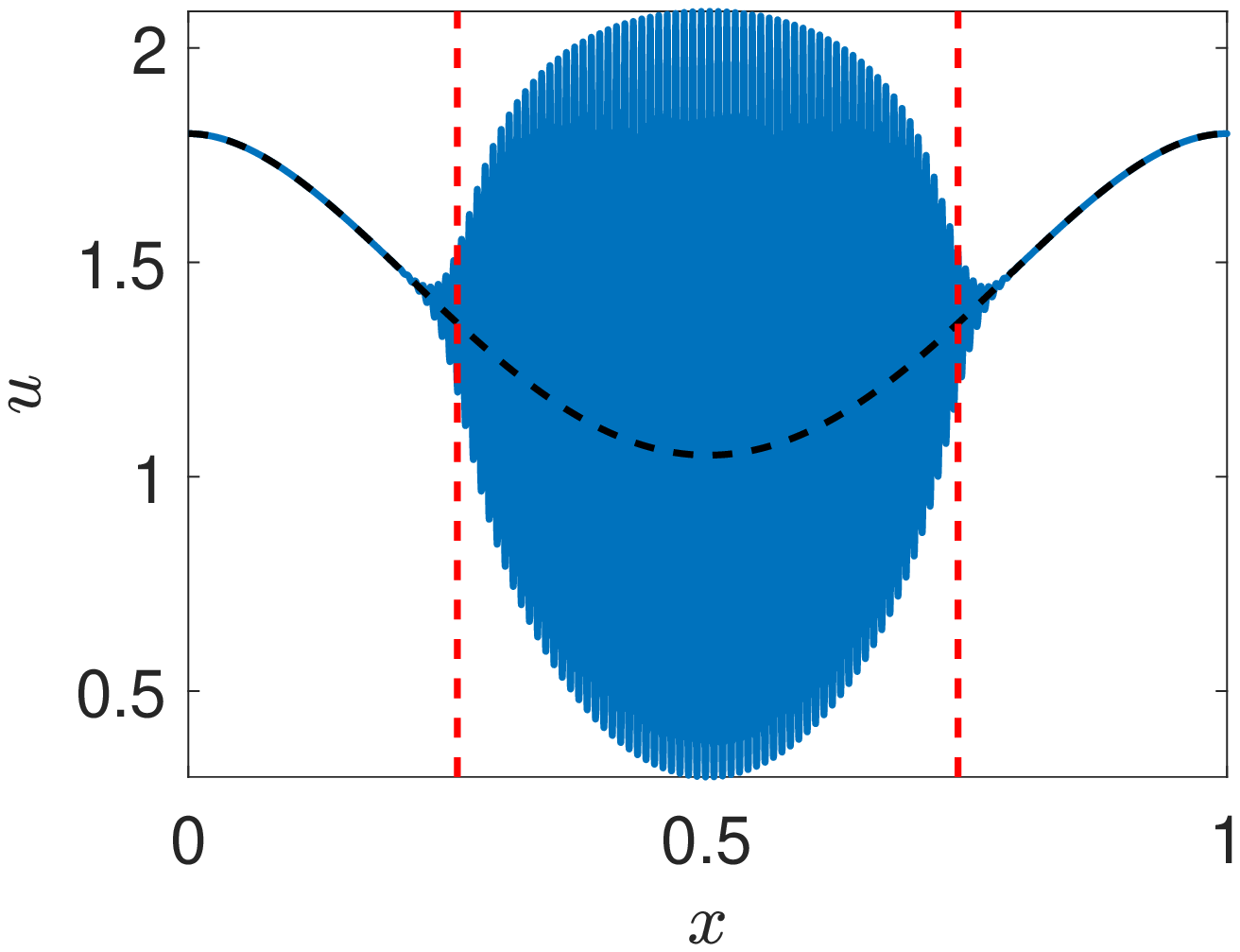}}
    \hspace{0.2cm}\subfloat[$\ep=0.0003$]{\includegraphics[width=0.49\textwidth]{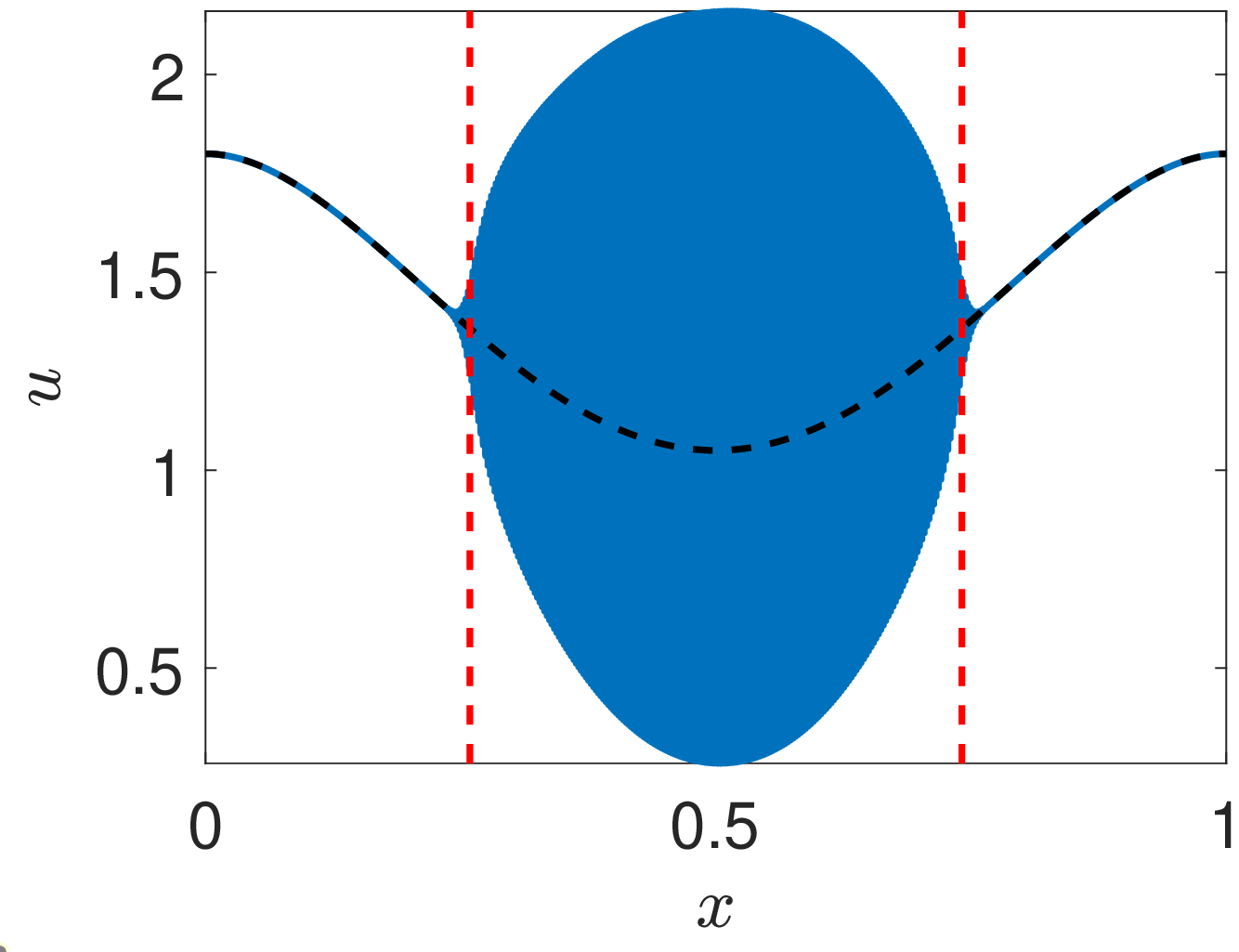}}
    \caption{Plots of $u$ in blue curves and $u^*$ in dashed black curves from solutions of {the Schnakenberg model, \cref{Sch_eqns},}  for various values of $\ep$ at $T=5,000$, which are essentially at steady state. The red  {vertical }  lines show the boundary of $\mathcal{T}_0$ computed from the conditions in \cref{het_prop}. The parameters are taken as $a(x) = 0.8-12x^2(x-1)^2$, $b=1$, $D_{11}=D_{22}=D_{12}=1$, $D_{21}=3((x-0.5)^2-1)$. Note that in panel (D), $N=5\times 10^4$ grid points were used to accurately represent the solution.}
    \label{SchStat}
\end{figure}

We start by exploring {the Schnakenberg model, \cref{Sch_eqns},}  in a regime where the pure reaction-diffusion system would not pattern, namely when $D_{11}=D_{22}=1$. We make $a(x)$ and $D_{21}(x)$ depend on space in such a way to localize patterns to the interior of the domain, noting that $a(x)$ must satisfy the boundary conditions but $D_{21}(x)$ need not. We plot our numerical simulations in \cref{SchStat} for decreasing values of $\ep$. The solution is shown at time $t=5,000$, but in all cases is indistinguishable from the solution at $t=200$, hence this appears to be an approximate (patterned) steady state. As anticipated by the theory, the location of the Turing regime $\mathcal{T}_0$ approximates where deviations from the heterogeneous steady state occur, and this approximation becomes better as $\ep$ decreases, with the number of internal `spikes' in the pattern increasing. This localization was shown for relatively simple heterogeneities in \cite{krause_WKB}, and so we will now consider more elaborate examples.
    
\begin{figure}
    \centering
    \subfloat[$\ep=0.006$]{\includegraphics[width=0.49\textwidth]{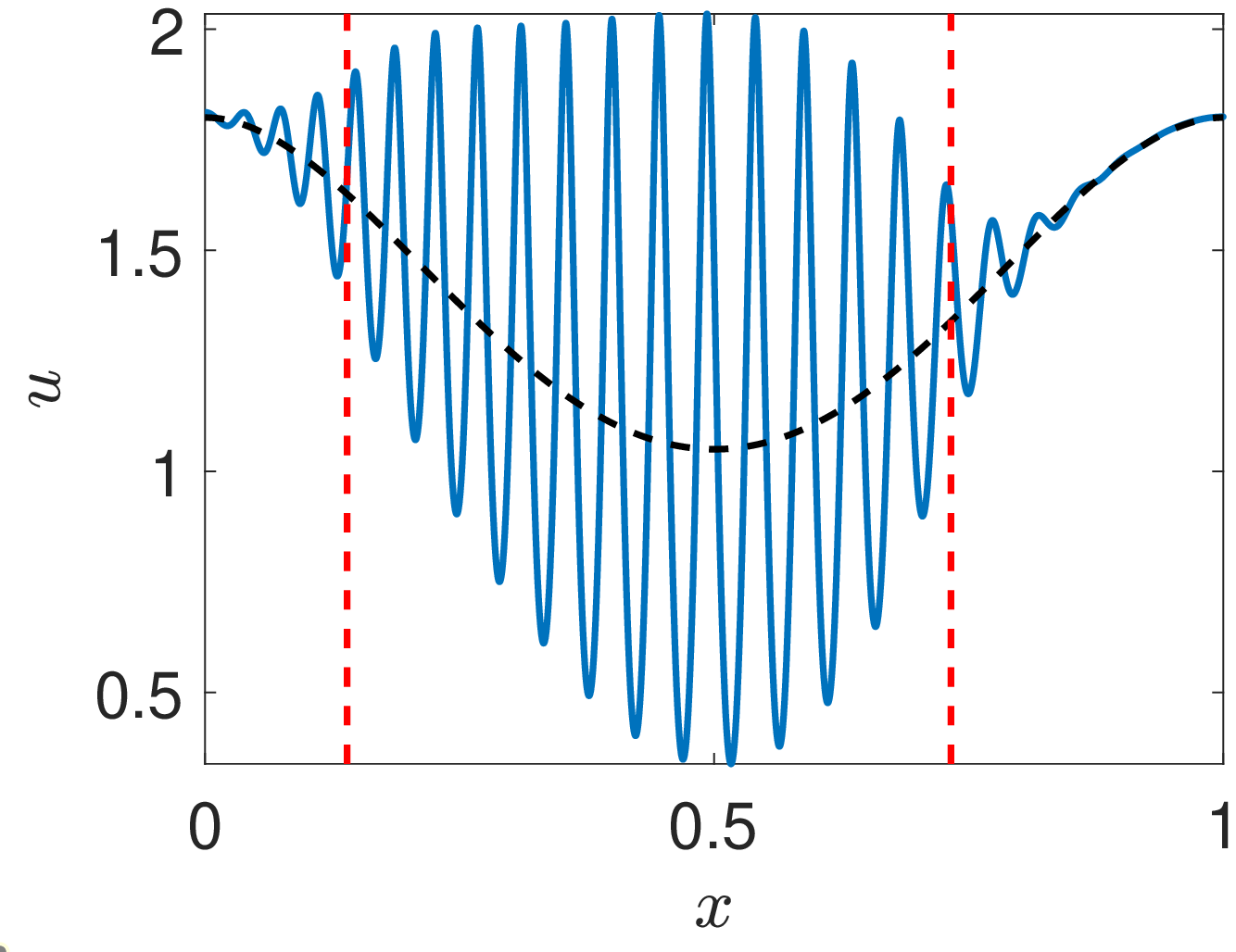}}\hspace{0.2cm}\subfloat[$\ep=0.006$]{\includegraphics[width=0.49\textwidth]{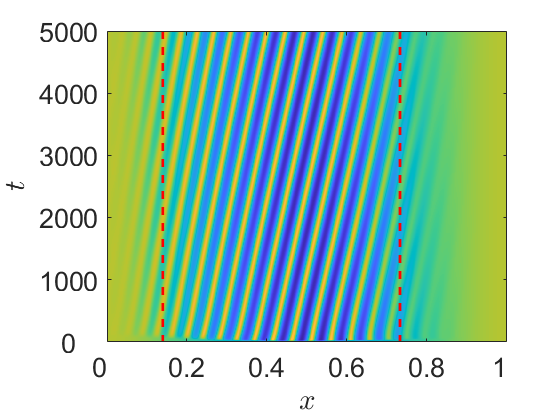}}
        
    \subfloat[$\ep=0.003$]{\includegraphics[width=0.49\textwidth]{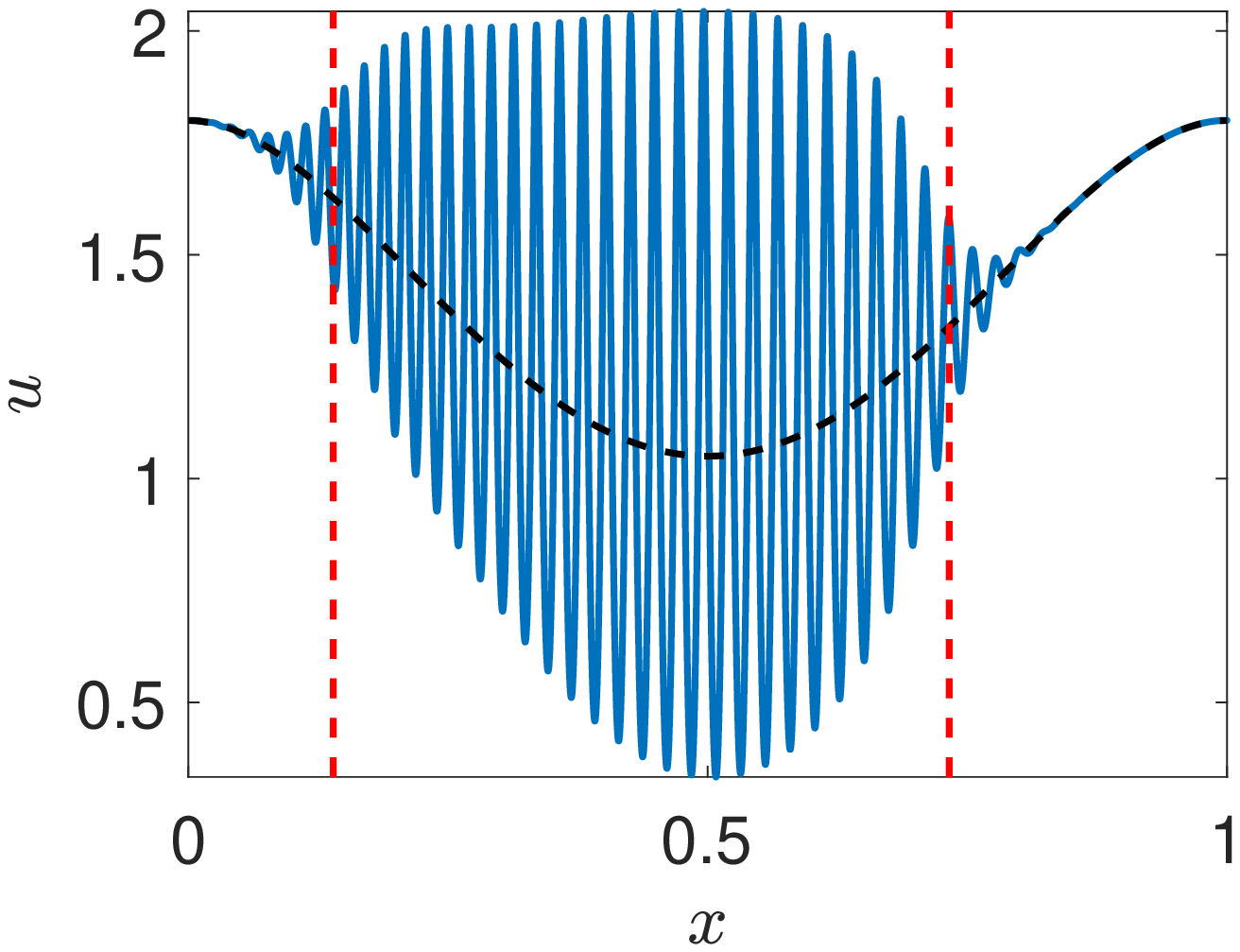}}\hspace{0.2cm}\subfloat[$\ep=0.003$]{\includegraphics[width=0.49\textwidth]{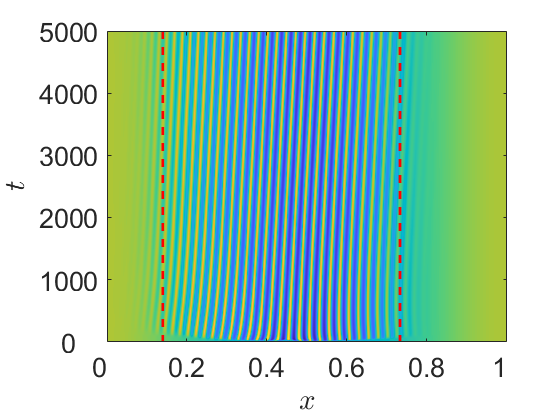}}
                
    \subfloat[$\ep=0.001$]{\includegraphics[width=0.49\textwidth]{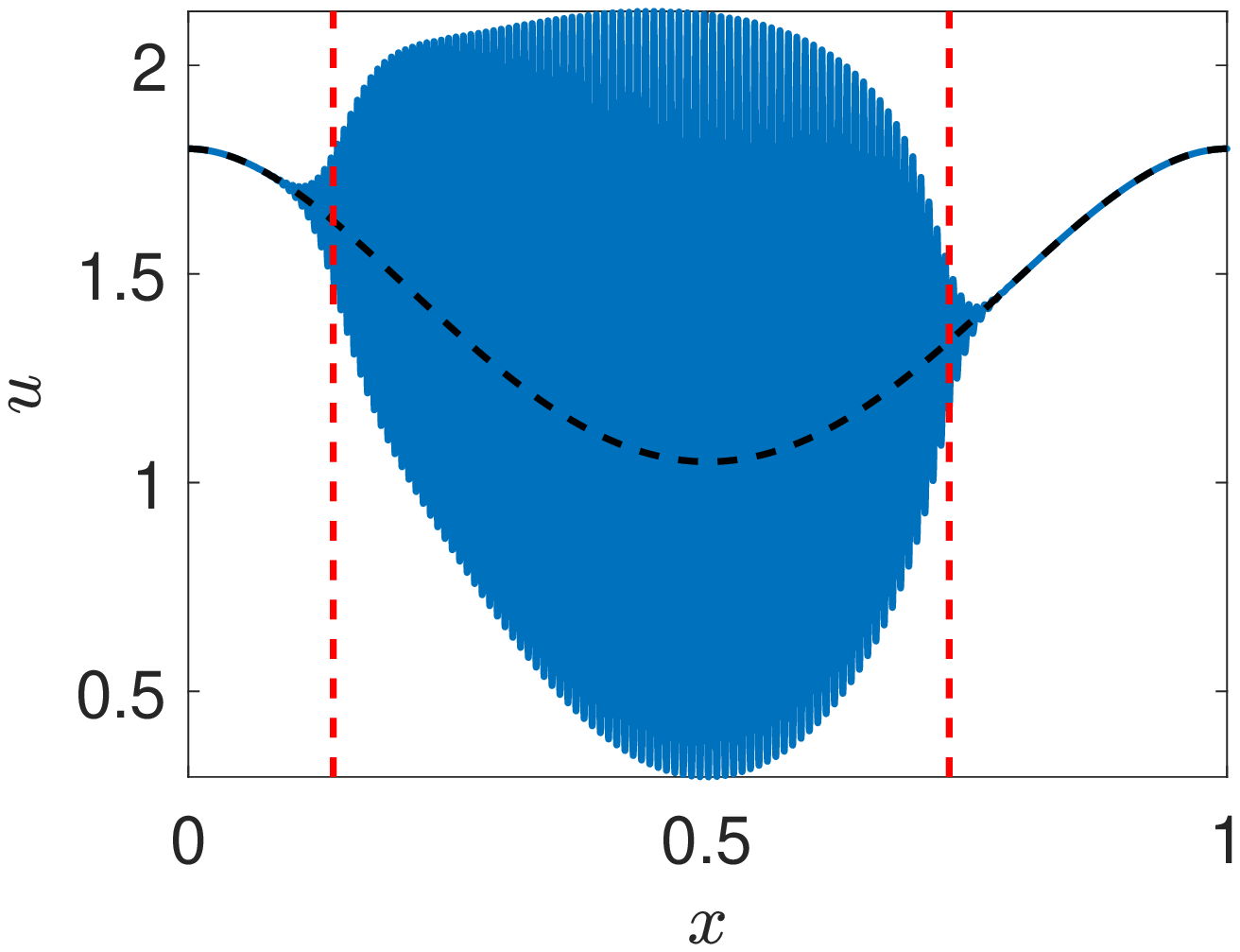}}\hspace{0.2cm}\subfloat[$\ep=0.001$]{\includegraphics[width=0.49\textwidth]{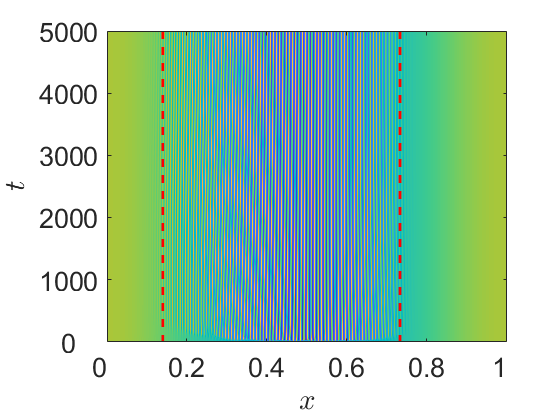}}
    \caption{Plots of $u$ in blue curves and $u^*$ in dashed black curves from solutions {of the Schnakenberg model, \cref{Sch_eqns},}  for various values of $\ep$ at $T=5,000$ in (A), \alk{(C)}, (E), and kymographs of $u$ in (B), (D), (F). The red  {vertical }  lines show the boundary of $\mathcal{T}_0$ computed from the conditions in \cref{het_prop}. The parameters are taken as $a(x) = 0.8-12x^2(x-1)^2$, $b=1$, $D_{11}=D_{22}=1$, $D_{12}=0.5+0.8x$,  $D_{21}=3((x-0.5)^2-1)$. }
    \label{SchStatMov}
\end{figure}

The heterogeneities used in \cref{SchStat} were symmetric about the midpoint of the domain $x=0.5$. We next consider an example with an asymmetric heterogeneity in $D_{12}(x)$, but otherwise use the same parameters. We show solutions for decreasing $\ep$ in \cref{SchStatMov}, and now include kymographs or space-time plots showing the evolution of $u$ over time. Unlike the previous example, the solutions do not reach an apparent steady state but now move in the direction of increasing $D_{12}(x)$. The speed of these moving spikes is influenced by $\ep$ with extremely slow movement seen in panels (D) and especially (F). We expect that this movement is due to the kind of heterogeneity-induced spike oscillations reported in \cite{page2005complex}, which were later studied and explained in terms of spike generation and annihilation in \cite{kolokolnikov2018pattern, krause2018heterogeneity}. Despite the spatiotemporal nature of the solutions, the boundaries of $\mathcal{T}_0$ still give a good approximation for where the deviations from the steady state $u^*$ occur.
    
\begin{figure}
    \centering
    
    \subfloat[$\ep=0.005$]{\includegraphics[width=0.44\textwidth]{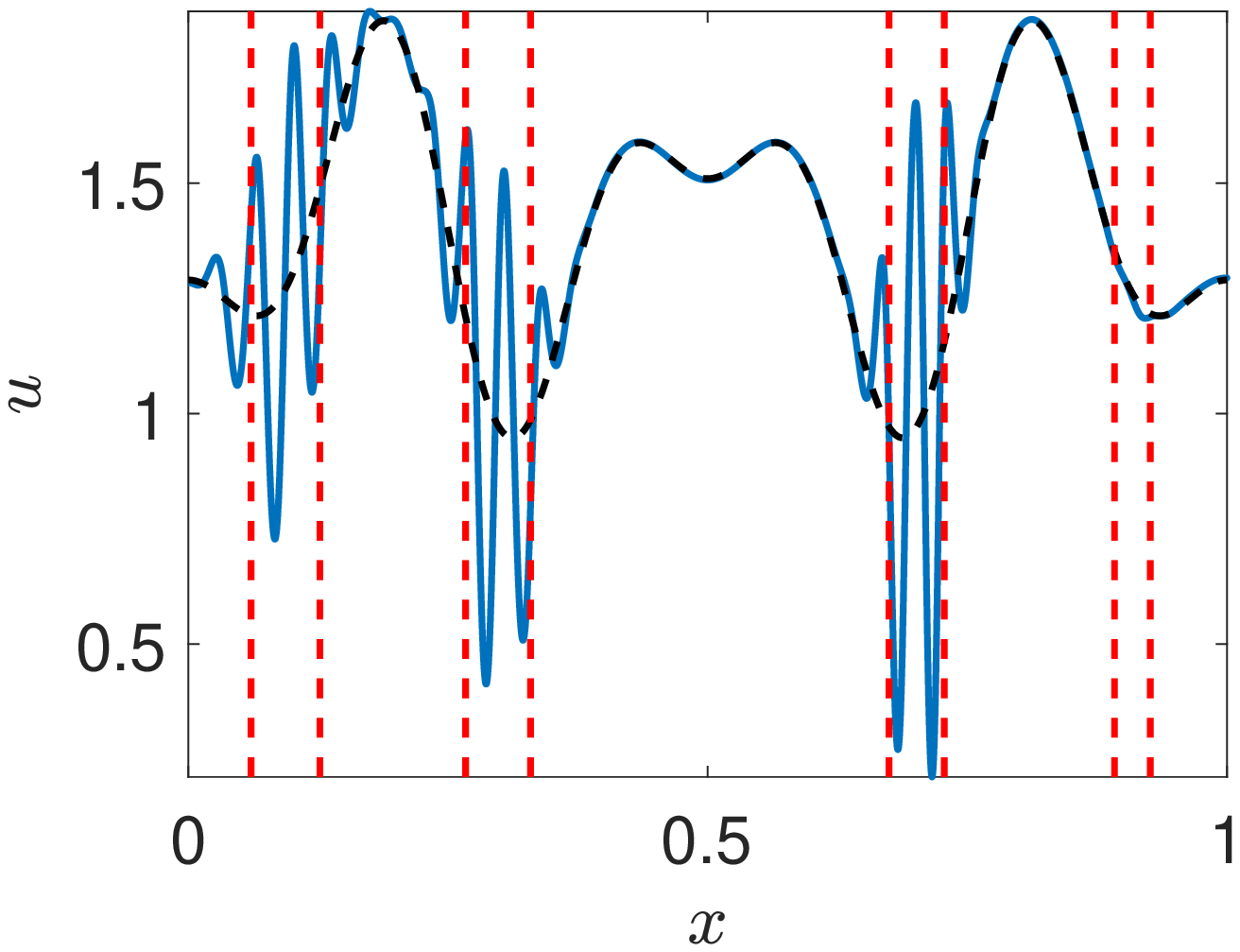}}\hspace{0.2cm}\subfloat[$\ep=0.005$]{\includegraphics[width=0.44\textwidth]{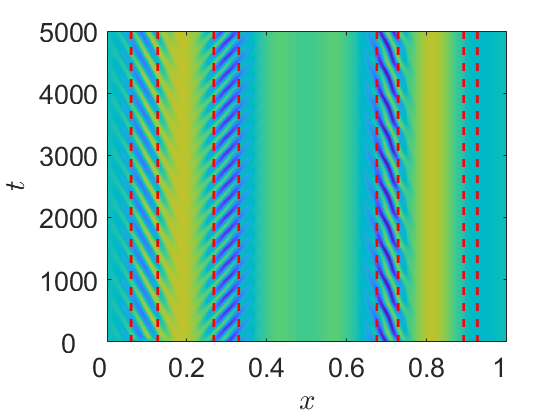}}
    
    \subfloat[$\ep=0.004$]{\includegraphics[width=0.44\textwidth]{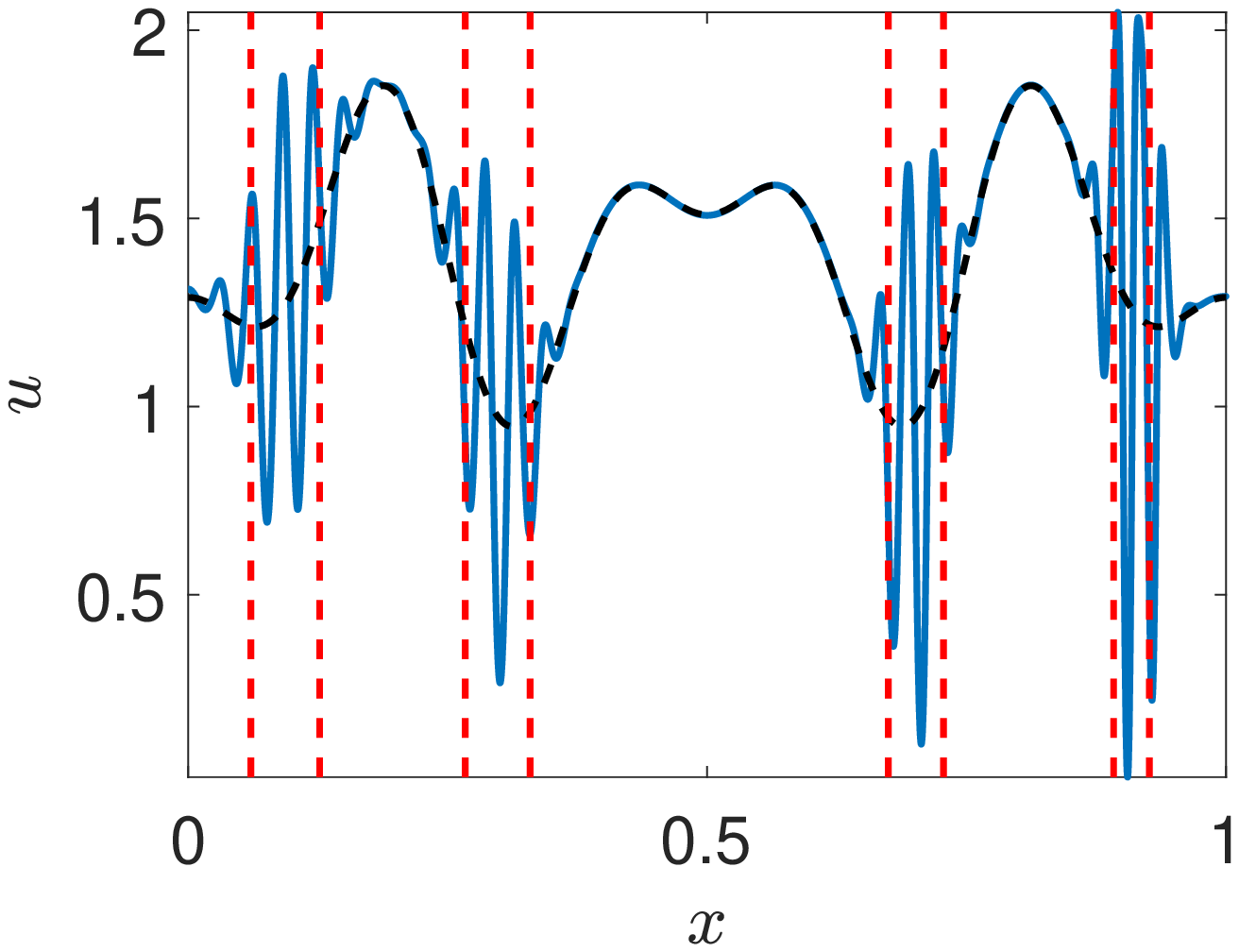}}\hspace{0.2cm}\subfloat[$\ep=0.004$]{\includegraphics[width=0.44\textwidth]{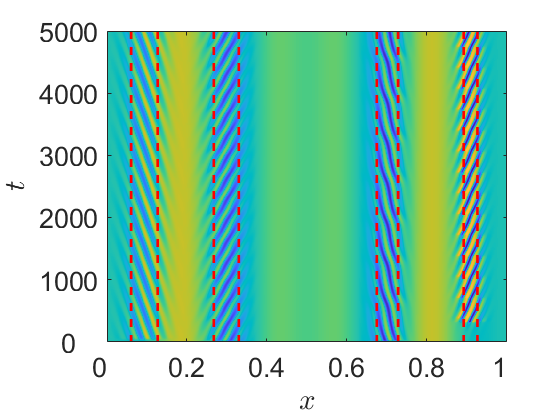}}
    
    \subfloat[$\ep=0.002$]{\includegraphics[width=0.44\textwidth]{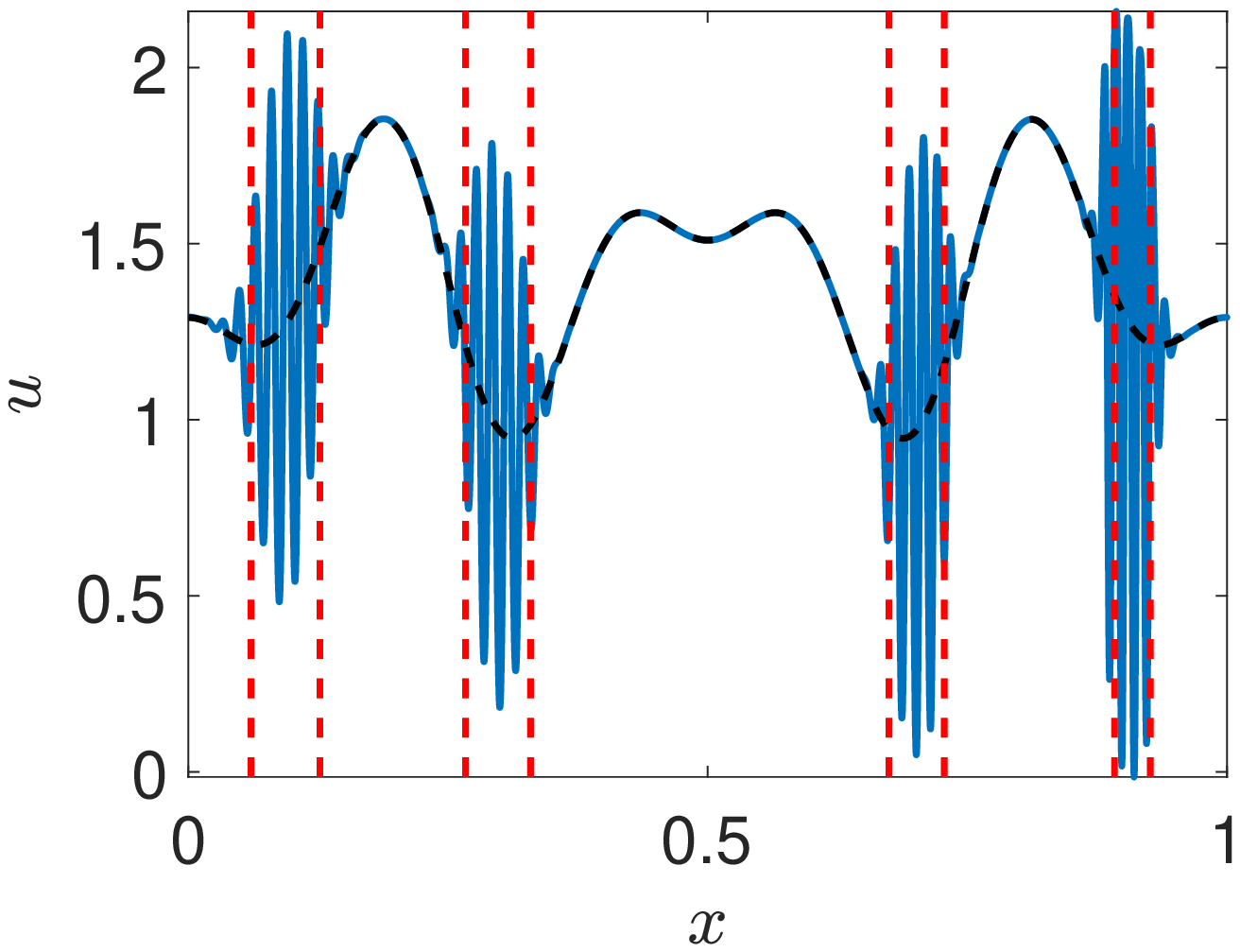}}\hspace{0.2cm}\subfloat[$\ep=0.002$]{\includegraphics[width=0.44\textwidth]{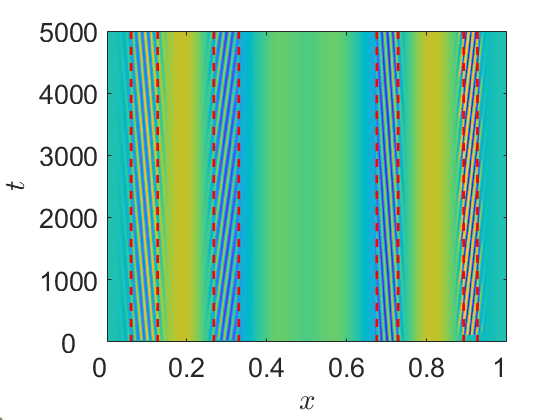}}

    \subfloat[$\ep=0.001$]{\includegraphics[width=0.44\textwidth]{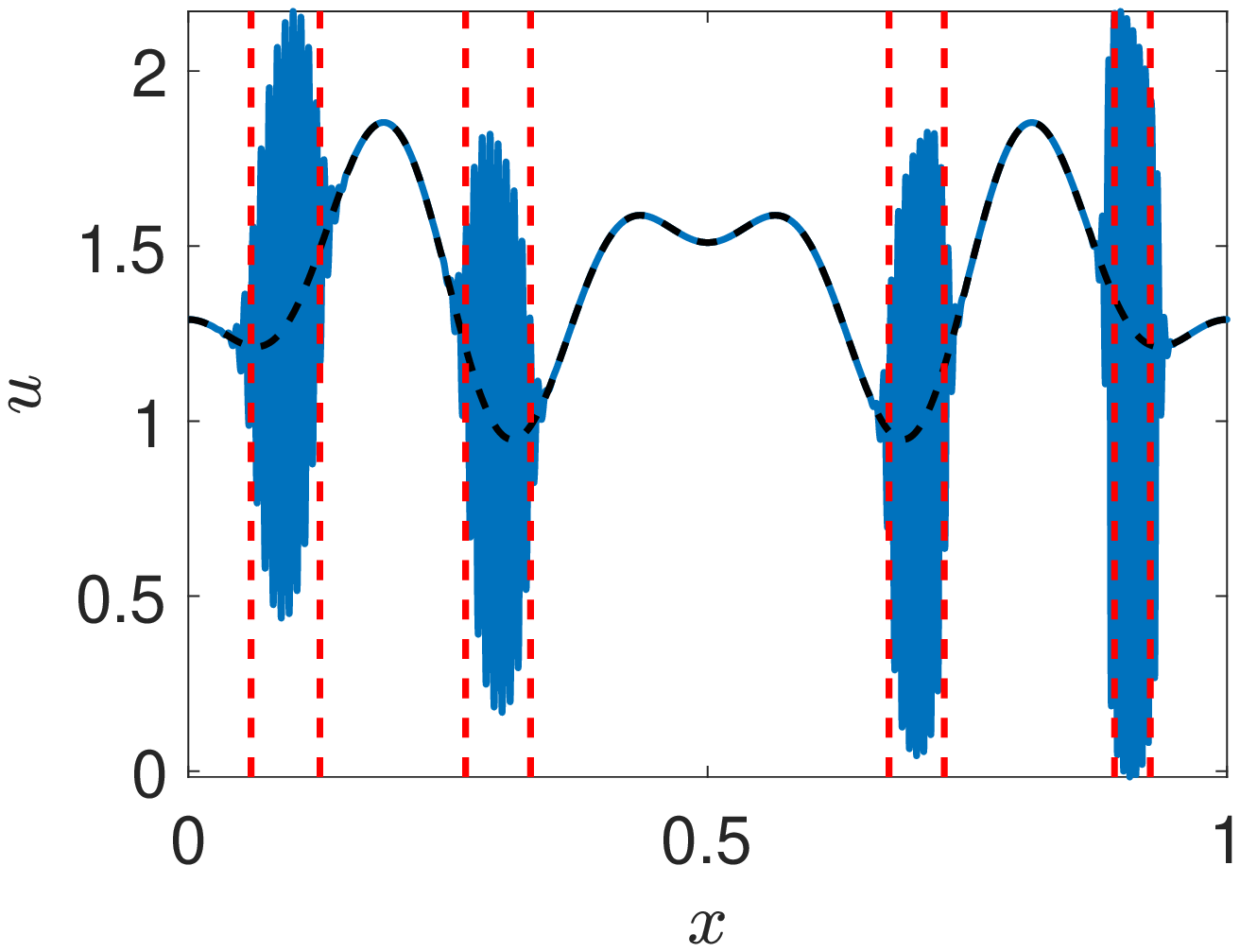}}\hspace{0.2cm}\subfloat[$\ep=0.001$]{\includegraphics[width=0.44\textwidth]{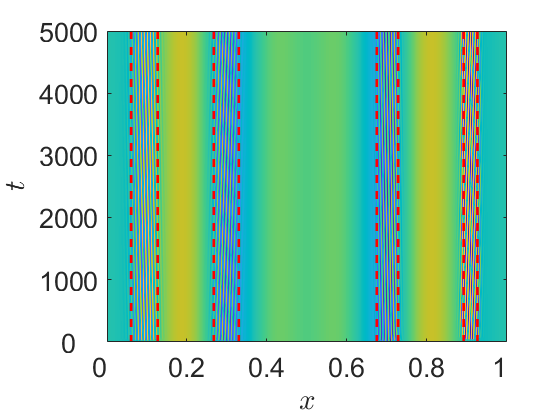}}
    \caption{Plots of $u$ in blue curves and $u^*$ in dashed black curves from solutions of {the Schnakenberg model, \cref{Sch_eqns},}  for various values of $\ep$ at $T=5000$ in (A), \alk{(C)}, (E), and kymographs of $u$ in (B), (D), (F). The red  {vertical }  lines show the boundary of $\mathcal{T}_0$ computed from the conditions in \cref{het_prop}. The parameters are taken as $a(x) = 0.01+0.19(1+\cos(10 x\pi))$, $b=0.9+0.3(1-\cos(6 x\pi))$, $D_{11}=D_{22}=1$, $D_{12}=1+\sin(3x\pi)$,  $D_{21}=2(x-1)$.}
    \label{SchMov}
\end{figure}

Finally we consider a more intricate example with complex heterogeneities in both cross-diffusion parameters, $a(x)$, and $b(x)$ in \cref{SchMov}. There are now four regions where \cref{het_prop} predicts pattern formation. However, for sufficiently large $\ep$, wavemode selection prevents the rightmost of these regions from patterning in panels (A) and (B). As in \cref{SchStatMov}, solutions within each region are spatiotemporally moving spikes, with speeds varying with $\ep$ but also with speed and direction dependent on the local heterogeneity. In particular, the local speed of a spike decreases with increasing $x$, and the direction of movement is always from lower values of $u^*$ to higher values. We also observe an increase in the local frequency of oscillations with increasing $x$ (see especially the leftmost and rightmost patterned regions in panels (C) and (E)).

\begin{figure}
    \centering
    \subfloat[$\ep=0.01$]{\includegraphics[width=0.49\textwidth]{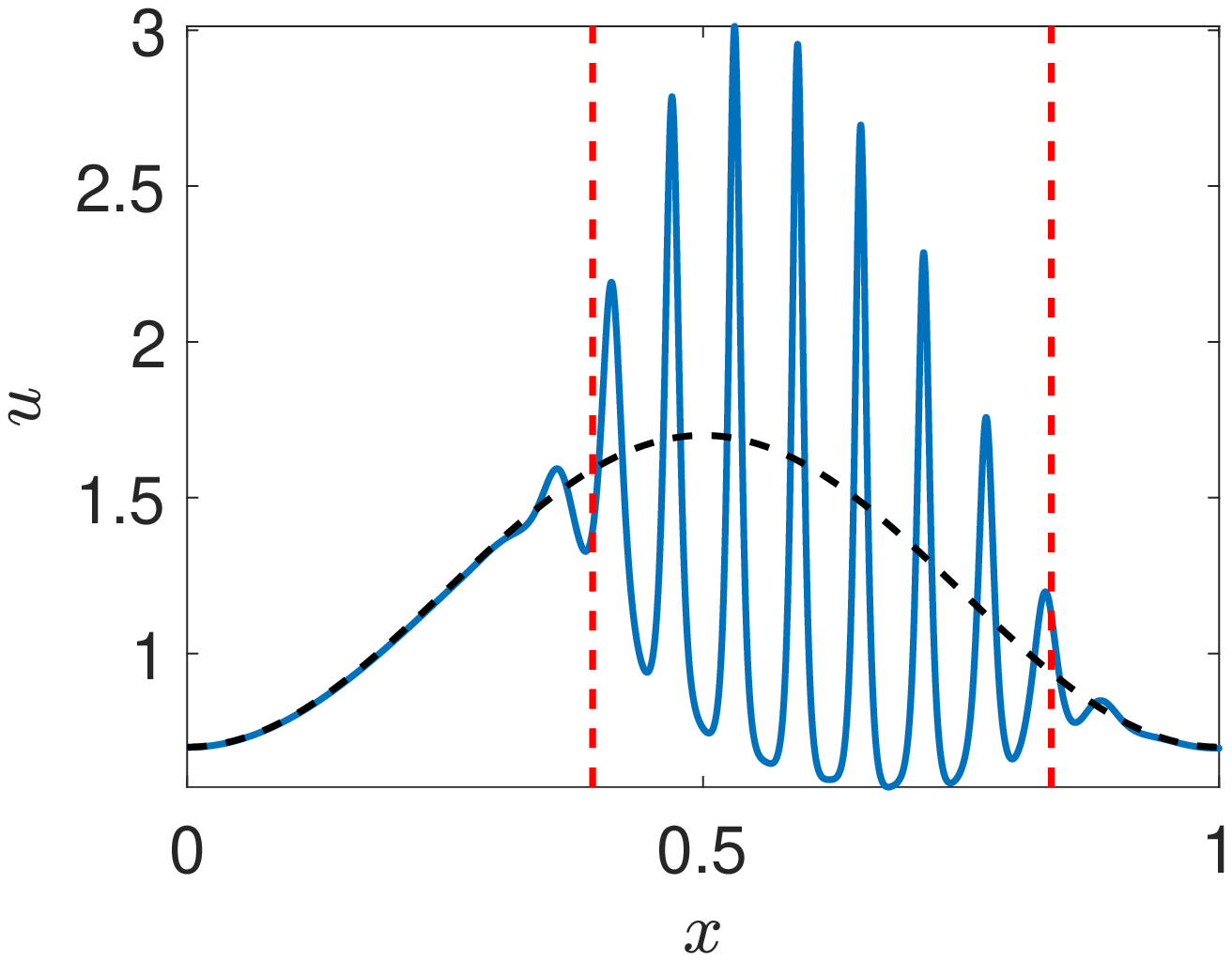}}\hspace{0.2cm}\subfloat[$\ep=0.006$]{\includegraphics[width=0.49\textwidth]{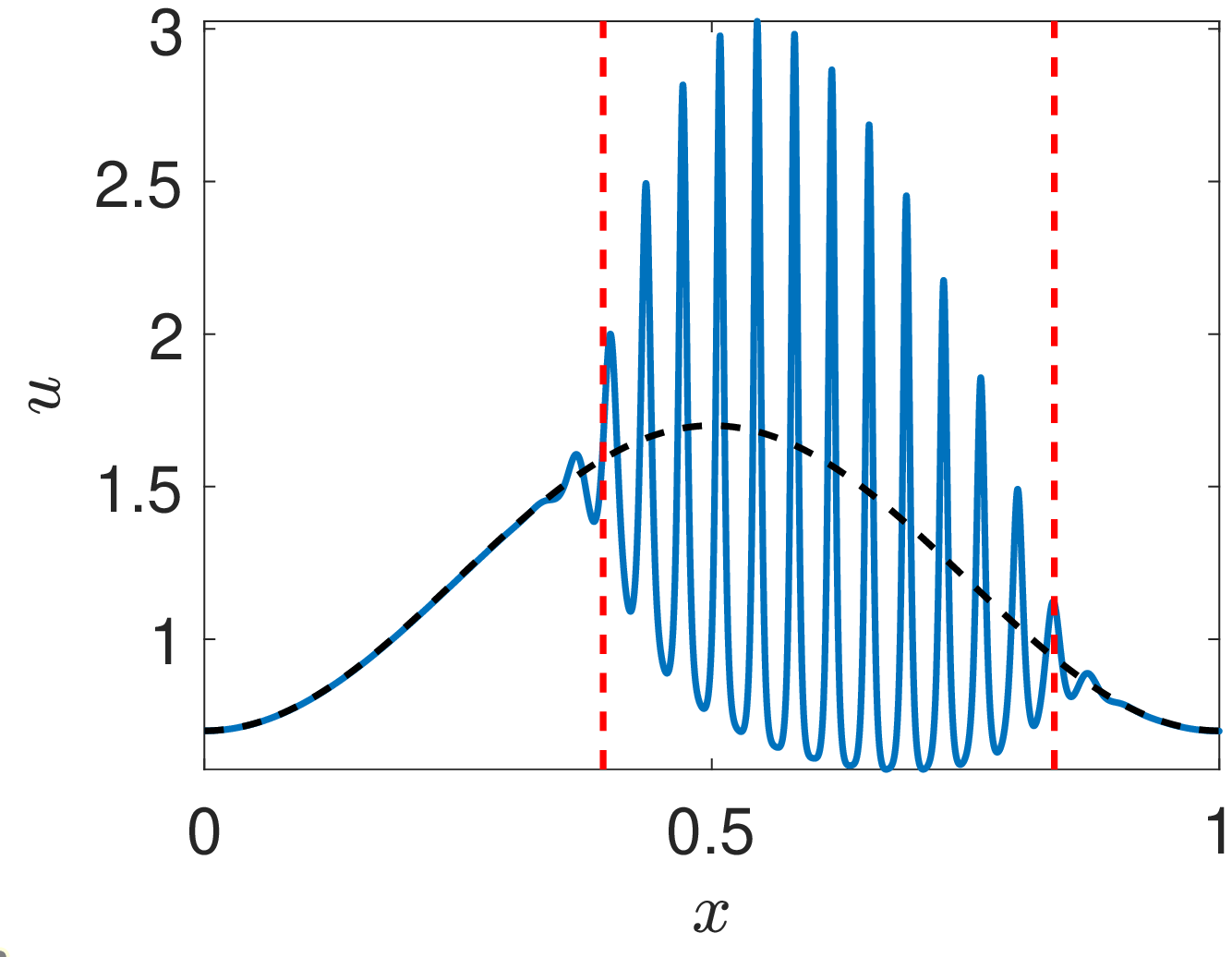}}
    
    \subfloat[$\ep=0.003$]{\includegraphics[width=0.49\textwidth]{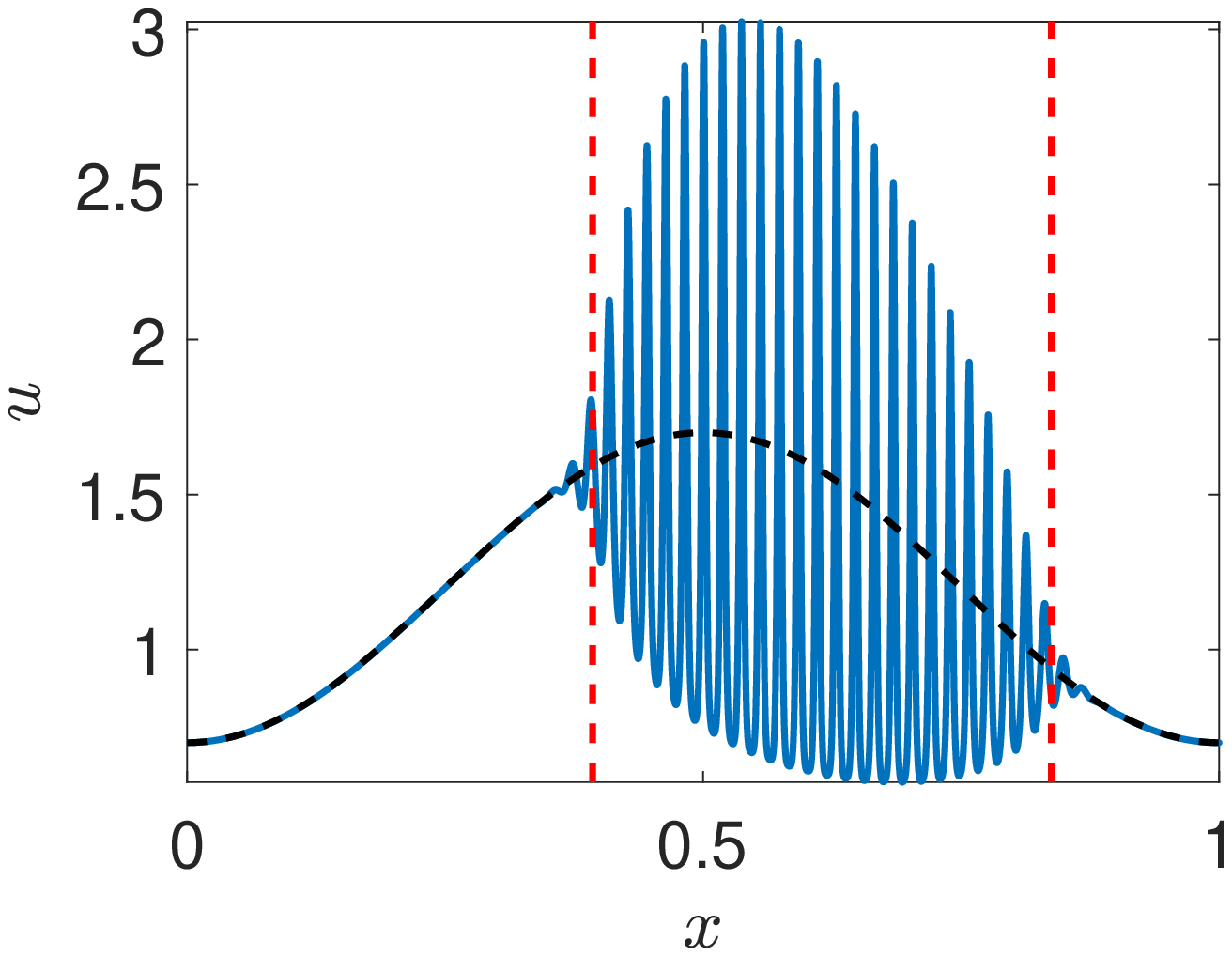}}\hspace{0.2cm}\subfloat[$\ep=0.001$]{\includegraphics[width=0.49\textwidth]{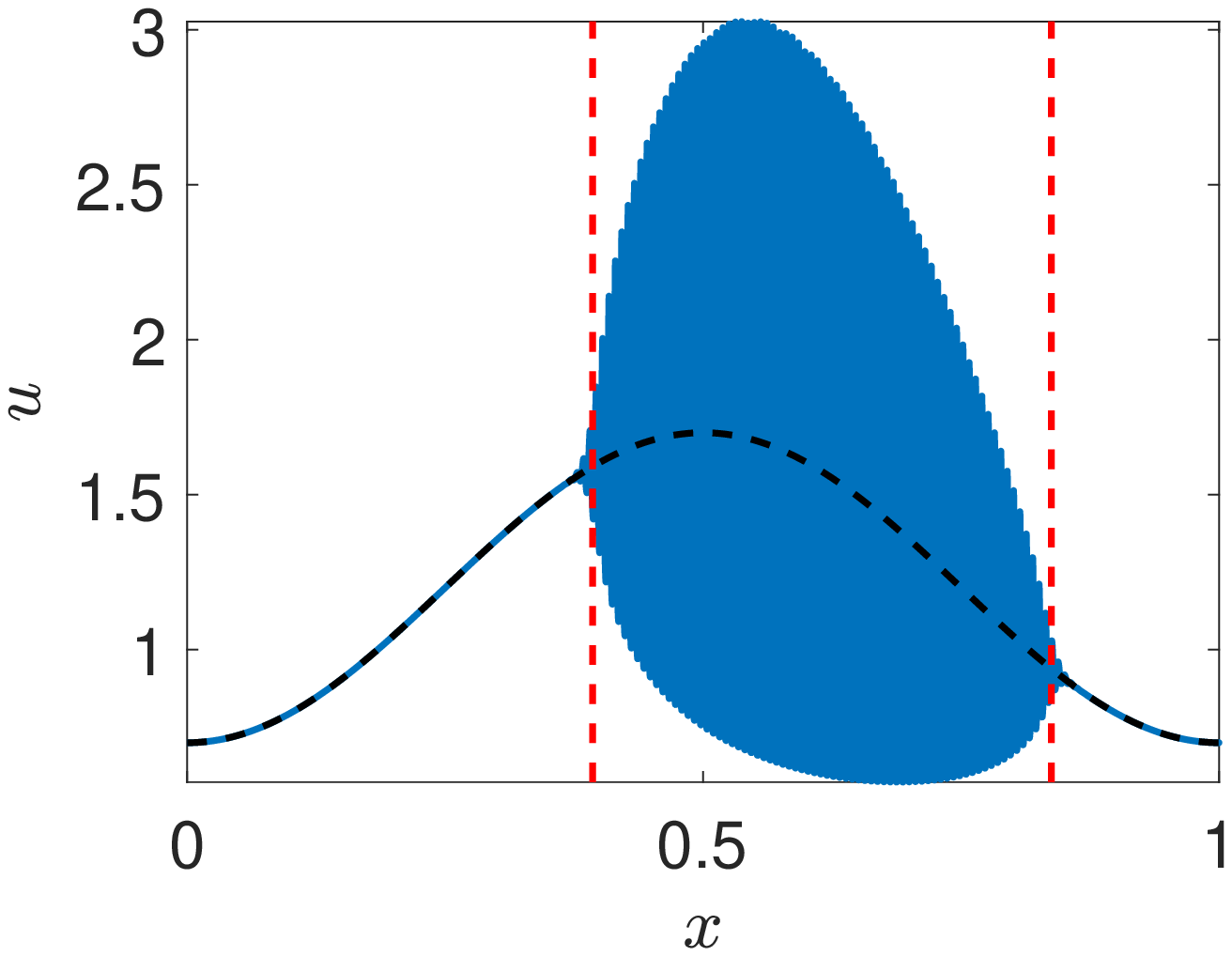}}

    \caption{Plots of $u$ in blue curves and $u^*$ in dashed black curves from solutions {of the 
     Keller-Segel model,  \cref{KS_eqns},} for various values of $\ep$ at $T=50,000$. The red  {vertical }  lines show the boundary of $\mathcal{T}_0$ computed from the conditions in \cref{het_prop}. The parameters are taken as $K(x) = 1.2-0.5\cos(2\pi x)$, $h(x) = (1-0.5\cos(\pi x))$, $D_{11}=D_{22}=1$, and $\chi(x) = 3.05-0.1x$. }
    \label{KSStat}
\end{figure}

\begin{figure}
    \centering
    
    \subfloat[$\ep=0.02$, $h = 0$, \\$\chi =3.05-0.1x $]{\includegraphics[width=0.49\textwidth]{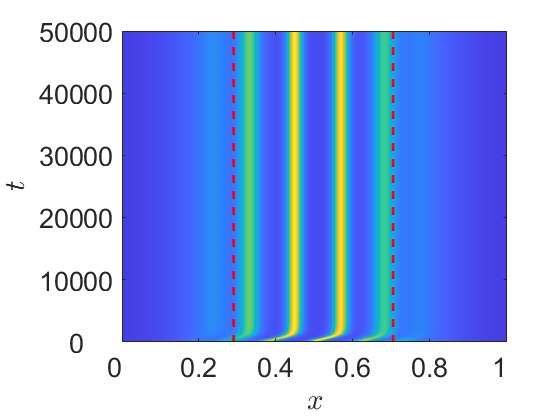}}\hspace{0.2cm}\subfloat[$\ep=0.02$, $\chi = 0$ \\$h = (1-0.5\cos(\pi x))$ ]{\includegraphics[width=0.49\textwidth]{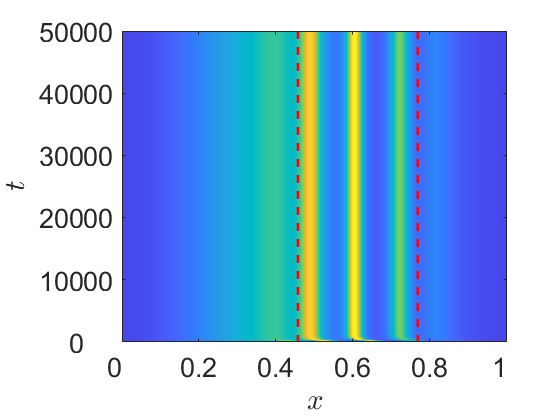}}
    
    \subfloat[$\ep=0.01$, $h = 0$, \\$\chi =3.05-0.1x$]{\includegraphics[width=0.49\textwidth]{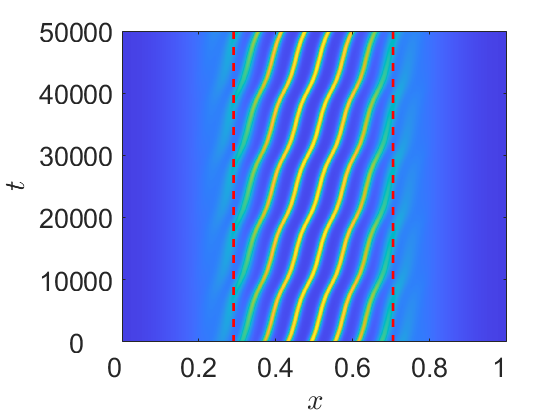}}\hspace{0.2cm}\subfloat[$\ep=0.01$, $\chi = 0$ \\$h = (1-0.5\cos(\pi x))$]{\includegraphics[width=0.49\textwidth]{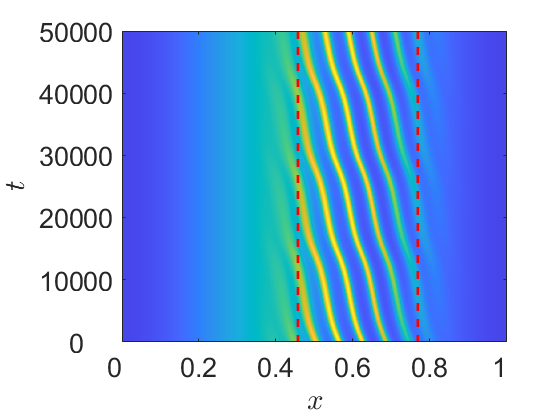}}

    \subfloat[$\ep=0.005$, $h = 0$, \\$\chi =3.05-0.1x$]{\includegraphics[width=0.49\textwidth]{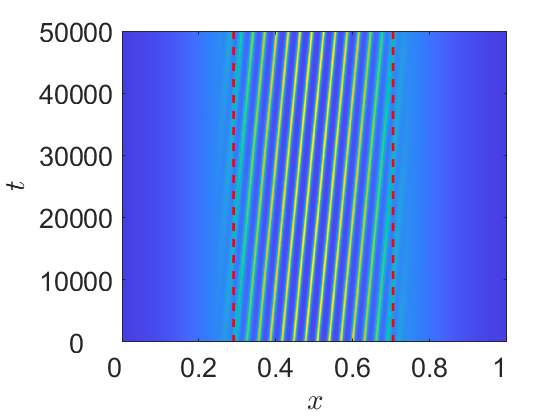}}\hspace{0.2cm}\subfloat[$\ep=0.005$, $\chi = 0$ \\$h = (1-0.5\cos(\pi x))$]{\includegraphics[width=0.49\textwidth]{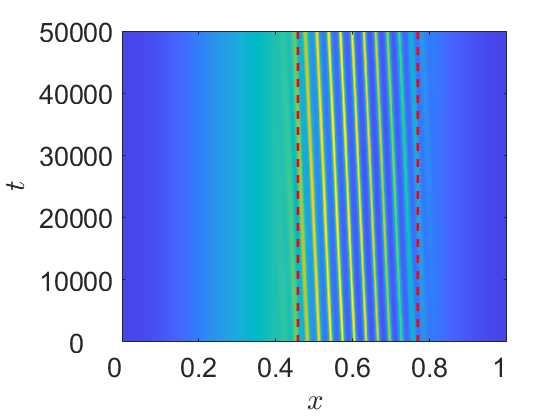}}
    \caption{Plots of $u$ in blue curves and $u^*$ in dashed black curves from solutions {of the 
     Keller-Segel model,  \cref{KS_eqns},}  for various values of $\ep$ at $T=50,000$. The red  {vertical }  lines show the boundary of $\mathcal{T}_0$ computed from the conditions in \cref{het_prop}. The parameters are taken as $K(x) = 1.2-0.5\cos(2\pi x)$ and  $D_{11}=D_{22}=1$. }
    \label{KSMov}
\end{figure}

We next consider simulations of the Keller-Segel chemotaxis system \cref{KS_eqns}. We show example steady state behaviors in \cref{KSStat} for heterogeneous $h(x)$, $K(x)$, and $\chi(x)$. While there is a size and wavelength modulation, and the patterned region is asymmetrically shaped, such solutions are steady in time. In contrast, if we let either $\chi(x)=0$ or $h(x)=0$, we observe the more complicated spatiotemporal behavior in \cref{KSMov} for sufficiently small $\ep$. The direction of spike movement in either case is opposite, which helps explain why their combination leads to steady spikes in \cref{KSStat}. Interestingly, while the spike movement decreases with decreasing $\ep$, for $\ep=0.02$, no spike movement is observed in panels (A) and (B) of \cref{KSMov}. This is consistent with observations of heterogeneous reaction-diffusion systems in \cite{krause2018heterogeneity}, as such spike oscillations arise due to a global bifurcation structure involving spike creation and annihilation.

    \begin{figure}
    \centering
    
    \subfloat[$\ep=0.006$ ]{\includegraphics[width=0.49\textwidth]{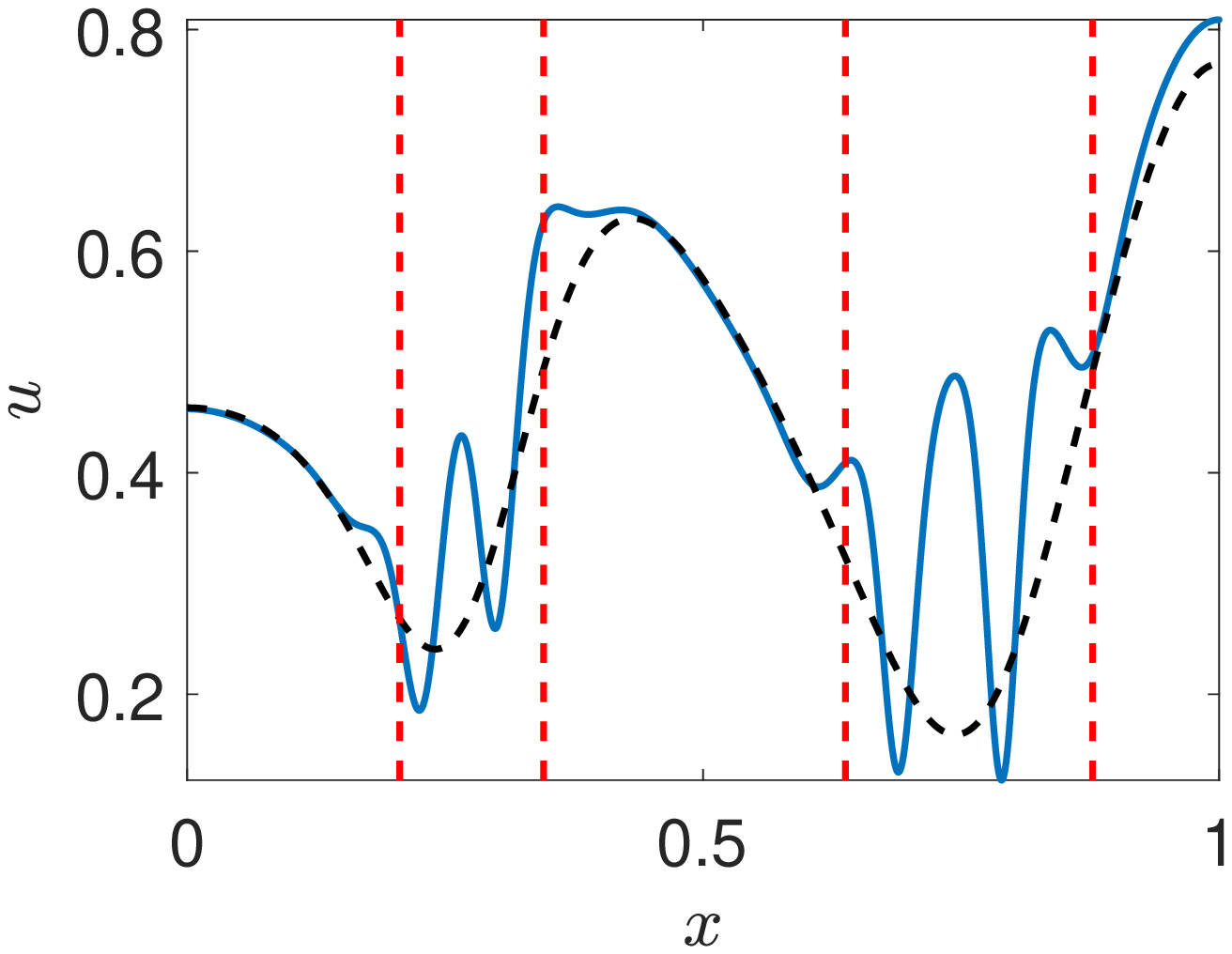}}\hspace{0.2cm}\subfloat[$\ep=0.006$]{\includegraphics[width=0.49\textwidth]{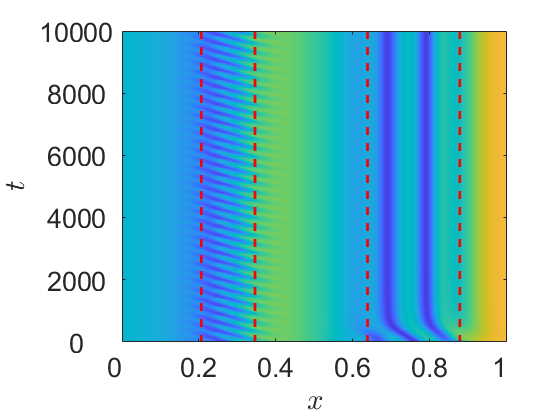}}
    
    \subfloat[$\ep=0.004$ ]{\includegraphics[width=0.49\textwidth]{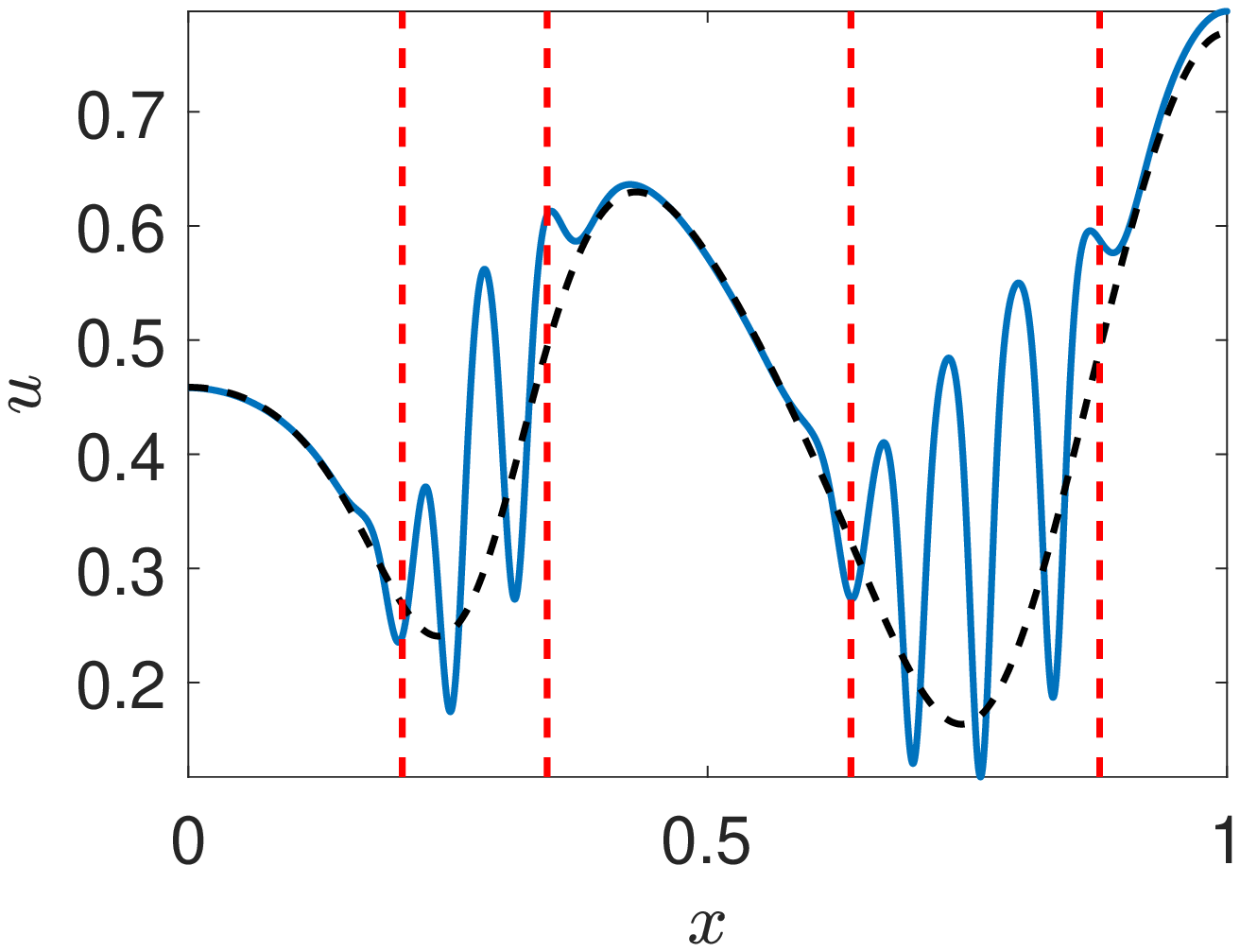}}\hspace{0.2cm}\subfloat[$\ep=0.004$ ]{\includegraphics[width=0.49\textwidth]{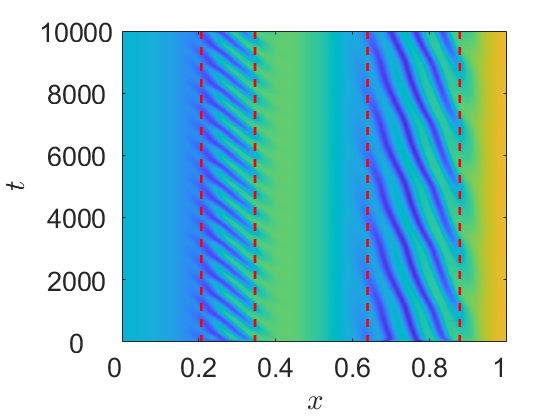}}
    
    \subfloat[$\ep=0.001$ ]{\includegraphics[width=0.49\textwidth]{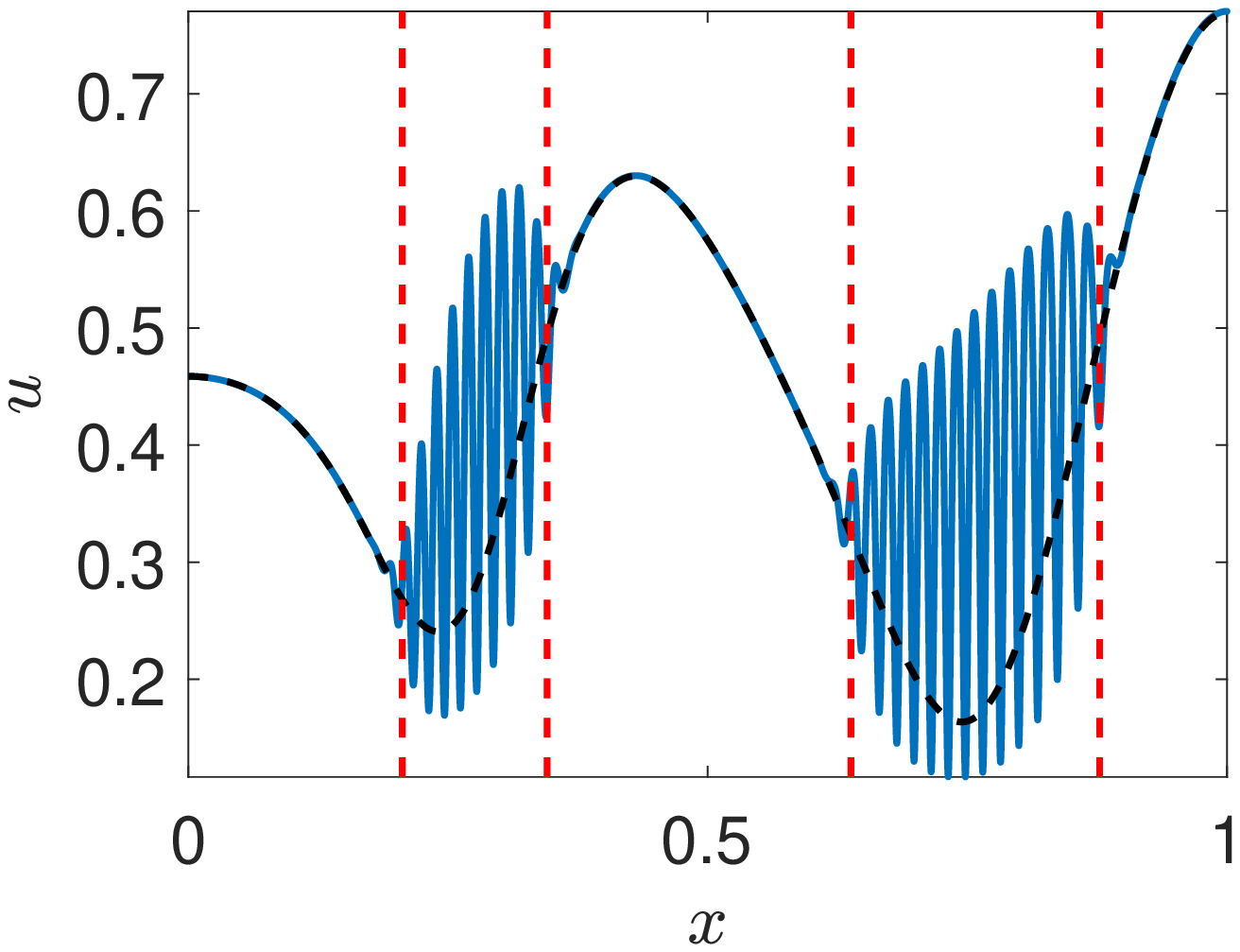}}\hspace{0.2cm}\subfloat[$\ep=0.001$ ]{\includegraphics[width=0.49\textwidth]{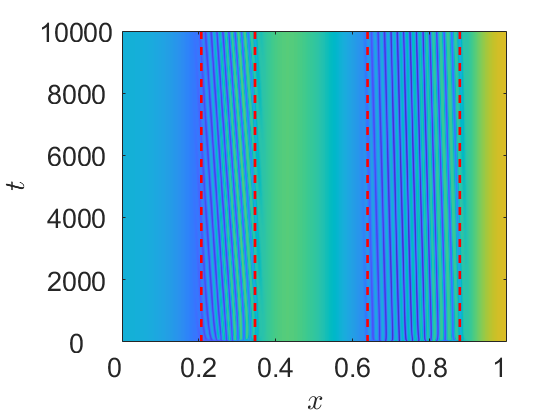}}
    \caption{Plots of $u$ in blue curves and $u^*$ in dashed black curves from solutions of {the SKT model, \cref{SKT_eqns},} for various values of $\ep$ at $T=50,000$ in (A), (B), (E), and kymographs of $u$ in (B), (D), (F). The red  {vertical }  lines show the boundary of $\mathcal{T}_0$ computed from the conditions in \cref{het_prop}. The parameters are taken as $r_1(x)=1$, $r_2(x)=2$, $a_1(x)=0.9+0.2\cos(3\pi x)$, $a_2(x) = 0.9+0.2\cos(4\pi x)$, $b_1(x)=0.6$, $b_s(x)=0.2$, $d_1(x)=d_2(x)=1$, $d_{21}(x)=200x$, $d_{11}(x)=d_{12}(x)=d_{22}(x)=0$.}
    \label{SKTFig}
\end{figure}

    We now demonstrate localized pattern formation in the SKT model given by \cref{SKT_eqns}. We consider heterogeneous kinetic and diffusion parameters in \cref{SKTFig}, observing again that while localization occurs approximately within the bounds predicted by \cref{het_prop}, there is spike movement for sufficiently small $\ep$. We note that the spike speed decreases for decreasing values of $\ep$, and spike wavelength, amplitude, and speed are spatially-dependent. We also observe in panel (B) that for sufficiently large $\ep$, stationary patterns are observed in part of the domain, while spatiotemporal movement is still observed in another region. For $\ep \geq 0.007$, there is no longer any spike movement and all of the solutions we found tended to stationary spatial profiles.

\section{Discussion}\label{Discuss_Sect}
We have generalized the results found in \cite{krause_WKB} to nonlinear reaction-cross-diffusion systems. Our main result, presented in \cref{het_prop}, can be interpreted as giving the locations where we expect pattern formation which is emergent from a Turing-type instability, as opposed to spatial structure arising from background heterogeneity itself. Such a distinction is a key result of our {linear} theory. We numerically validated this theory using a wide variety of cross-diffusion models, finding an excellent agreement with the analytical predictions for sufficiently small $\ep$. We also showed that numerical instabilities arising from spike instabilities and creation will also remain within the spatial regions predicted by our instability analysis, though our theory cannot differentiate between steady state pattern formation and such spatiotemporal oscillations.

There are many avenues for future work, particularly where we think the main approach presented may be especially tractable. In principle our theory of pattern localization should extend to systems where all spatial derivatives of order $n$ are scaled with $\ep^n$, such as in fourth-order pattern forming models like Swift-Hohenberg or Cahn-Hilliard systems. Such a setting may actually be an easier context in which to pursue weakly nonlinear analyses to study the saturation of pattern amplitude as a function of the growth rate $\lambda$, and potentially to explore questions of spatiotemporal oscillations. Numerical continuation could help connect the theory developed here to other kinds of localized patterns, such as localized patterns found via nonlinear mechanisms such as homoclinic snaking \cite{uecker2014numerical, al2021localized}. An interesting question would be how these different kinds of localized structures interact.

Extending these results to more than two species, as has been well-studied in the spatially homogeneous setting \cite{satnoianu2000turing}, is in principle straightforward, though the calculations may become overly cumbersome. A more difficult challenge, and one for which we have no easy way to extend the theory, is to prove analogous results for higher {spatial} dimensions. Numerically, we have explored such extensions and found excellent agreement with the expected localization \cite{woolley2021bespoke}, but the WKB approach becomes substantially more technical to use in higher dimensions. We anticipate that there is an alternative way to prove something analogous to \cref{het_prop} which can be extended to higher dimensions. There are also aspects of our theory which deserve a more careful rigorous development, which others have begun \cite{kovavc2022liouville}. 

While the details of our approach are somewhat technical, the overall results are intuitive, and in some sense can be viewed as justifying the `obvious' localization one might anticipate in the regime of small $\ep$. Nevertheless, our framework opens up a variety of ways of thinking of localization in pattern forming systems, which we think has ample use in developmental biological and ecological settings. The model \cref{orig_eqn} inherently contains two scale-separation assumptions. The explicit assumption in $\ep$ is that the diffusive scaling is smaller than that of the reaction (and that gradients in heterogeneous reactions are sufficiently slow). A second inherent assumption, however, is a timescale separation between the proposed pattern-forming model and whatever led to the pre-patterned state giving rise to the explicit spatial heterogeneities. Determining when these assumptions are valid, or how to conceptualize these systems when these assumptions are relaxed, is another key area of future work.


\bibliographystyle{AIMS}
\bibliography{refs}

\medskip
Received xxxx 20xx; revised xxxx 20xx.
\medskip

\end{document}